\pgfplotsset{compat=newest}
\tikzset{>={Latex[width=5mm,length=5mm]}}
\theoremstyle{definition}
\newtheorem{definition}{Definition}
\theoremstyle{plain}
\newtheorem{theorem}{Theorem}[section]
\newtheorem{lemma}[theorem]{Lemma}
\newtheorem{proposition}[theorem]{Proposition}
\newtheorem{claim}{Claim}
\newtheorem*{lemma*}{Lemma}
\newtheorem*{theorem*}{Theorem}
\newtheorem*{desiredlemma*}{Desired Lemma}
\theoremstyle{remark}
\newtheorem{remark}{Remark}
\newtheorem{example}{Example}
\Crefname{claim}{Claim}{Claims}
\newcommand{\size}{\textup{\textsf{size}}}
\DeclareMathOperator{\trunc}{\textup{\textsf{T}}}
\newcommand{\conv}{\textup{\textsf{conv}}}
\def\fp/{\textup{\textsf{FP}}}
\def\p/{\textup{\textsf{P}}}
\def\np/{\textup{\textsf{NP}}}
\def\conp/{\textup{\textsf{co-NP}}}
\def\fnp/{\textup{\textsf{FNP}}}
\def\tfnp/{\textup{\textsf{TFNP}}}
\def\ptfnp/{\textup{\textsf{PTFNP}}}
\def\ppa/{\textup{\textsf{PPA}}}
\def\ppad/{\textup{\textsf{PPAD}}}
\def\ppads/{\textup{\textsf{PPADS}}}
\def\ppp/{\textup{\textsf{PPP}}}
\def\pwpp/{\textup{\textsf{PWPP}}}
\def\pls/{\textup{\textsf{PLS}}}
\def\cls/{\textup{\textsf{CLS}}}
\def\ppadpls/{\textup{$\textsf{PPAD} \cap \textsf{PLS}$}}
\def\ppapls/{\textup{$\textsf{PPA} \cap \textsf{PLS}$}}
\def\eopl/{\textup{\textsf{EOPL}}}
\def\sopl/{\textup{\textsf{SOPL}}}
\def\ueopl/{\textup{\textsf{UEOPL}}}
\def\fixp/{\textup{\textsf{FIXP}}}
\def\bu/{\textup{\textsf{BU}}}
\def\linearfixp/{\textup{\textsf{Linear-FIXP}}}
\def\pspace/{\textup{\textsf{PSPACE}}}
\def\qKKT{\textup{\textsc{quadratic-KKT}}\xspace}
\def\linearKKT{\textup{\textsc{2D-linear-KKT}}\xspace}
\newcommand{\xbar}{\overline{x}}
\newcommand{\zbar}{\overline{z}}
\newcommand{\fbar}{\overline{f}}
\newcommand{\circuit}{\mbox{\ensuremath{\mathcal{C}}}}
\newcommand{\circuitbar}{\overline{\circuit}}
\newcommand{\eps}{\varepsilon}
\newcommand{\reals}{\mathbb{R}}
\newcommand{\rationals}{\mathbb{Q}}
\newcommand{\sigmap}[1]{\sigma_{\textsf{#1}}}
\newcommand{\decode}{\textsc{Decode}\xspace}
\newcommand{\eval}{\textsc{Eval}\xspace}
\newcommand{\extract}{\textsc{ExtractBits}\xspace}
\newcommand{\bm}{\textsc{BitMultiply}\xspace}
\newcommand{\ctb}{\textsc{ContTimesBin}\xspace}
\newcommand{\affine}{\textsc{Affine}\xspace}
\newcommand{\ta}{\widetilde{a}}
\newcommand{\tg}{\widetilde{g}}
\newcommand{\tG}{\widetilde{G}}
\newcommand{\tf}{\widetilde{f}}
\definecolor{our-green}{rgb}{0.0,0.5,0.0}
\definecolor{our-yellow}{rgb}{1.0,0.75,0.0}
\tikzset{cstyle/.style={shape=circle,fill=black,scale=0.7}}
\title{The Complexity of Computing KKT Solutions of\\
Quadratic Programs\thanks{A preliminary version of this paper appeared at STOC 2024.}}
\author{
\begin{tabular}{cc}
& \\
\textbf{John Fearnley} & \textbf{Paul W. Goldberg}\\
\small{University of Liverpool, UK} & \small{University of Oxford, UK}\\
\href{mailto:john.fearnley@liverpool.ac.uk}{\small{\texttt{john.fearnley@liverpool.ac.uk}}} & \href{mailto:paul.goldberg@cs.ox.ac.uk}{\small{\texttt{paul.goldberg@cs.ox.ac.uk}}}\\
& \\
\textbf{Alexandros Hollender} & \textbf{Rahul Savani}\\
\small{University of Oxford, UK} & \small{Alan Turing Institute and University of Liverpool, UK}\\
\href{mailto:alexandros.hollender@cs.ox.ac.uk}{\small{\texttt{alexandros.hollender@cs.ox.ac.uk}}} & \href{mailto:rahul.savani@liverpool.ac.uk}{\small{\texttt{rahul.savani@liverpool.ac.uk}}}\\
& \\
\end{tabular}
}
\date{}
\begin{document}

\maketitle

\setcounter{page}{0}
\thispagestyle{empty}

\begin{abstract}
It is well known that solving a (non-convex) quadratic program is \np/-hard. We show that the problem remains hard even if we are only looking for a Karush-Kuhn-Tucker (KKT) point, instead of a global optimum. Namely, we prove that computing a KKT point of a quadratic polynomial over the domain $[0,1]^n$ is complete for the class \cls/ = \ppadpls/.
\end{abstract}

\newpage

\section{Introduction}

Quadratic programming (QP) is the problem of optimising a quadratic function of a collection of real variables, subject to linear constraints on those variables. It has widespread applications, numerous software implementations, and an extensive literature on its theoretical analysis, dating back more than 50 years. A fairly standard formulation is the following:
\begin{equation}\label{eq:qp}
\min_{x\in \mathbb{R}^n} f(x) := x^\intercal Qx + c^\intercal x  ~~~{\rm subject~to}~~~ a^\intercal_i x \leq b_i, \forall i \in \{1,\ldots ,m\}\ .
\end{equation}
In (\ref{eq:qp}) the matrix $Q$ is usually (and without loss of generality) taken to be symmetric, and there has been much work on restrictions of the problem based on assumed properties of $Q$, some of which we touch on below. The main result of the present paper is that for QP it is computationally hard to compute a {\em Karush-Kuhn-Tucker} (KKT) point, an important kind of solution consisting of one that is locally optimal with respect to gradient descent. Moreover, our hardness result applies to a special case of interest known as ``box constraints'' (e.g., \cite{AndjouhB22,BurerL09}), in which the feasible region (i.e., the region $a^\intercal_i x \leq b_i, \forall i \in \{1,\ldots ,m\}$) consists of an axis-aligned hypercuboid; here we use $[0,1]^n$. (Indeed, the general version of the problem sometimes assumes that the linear constraints include $x\in [0,1]^n$ (e.g., \cite{BellareR95}), which guarantees that the feasible region is compact.)

\begin{paragraph}{KKT solutions, and other types of solution.}
Informally, a KKT point is one that constitutes a local optimum of gradient descent. It may be a point at which the gradient is zero (a stationary point), or one where the gradient is non-zero, but further downhill progress is obstructed by one or more of the boundary constraints. A key feature of KKT solutions of (\ref{eq:qp}) is that they have concise certificates: roughly, the gradient together with the binding constraints at a point of interest. Moreover, if the domain is compact, there is guaranteed to be at least one KKT point, since the global optimum is a KKT point. These two observations indicate that the problem of searching for a KKT point belongs to the complexity class \tfnp/, search problems in which easily-checkable solutions must exist. In particular, this means that the problem is not expected to be \np/-hard, unless $\np/ = \conp/$~\cite{MegiddoP1991-tfnp}.

Apart from KKT points, the other main solution concepts for continuous optimisation problems are the following:
\begin{itemize}
\item Global optimum, $x$ for which $f(x)\leq f(x')$ for all $x'$ in the feasible region;
\item Stationary point (a.k.a.\ critical point), where $\nabla f(x)=0$;
\item Local minimum, $x$ where for some $\eps>0$ we have $f(x)\leq f(x')$ for all $x'$ within $\eps$ of $x$; at a {\em strict} local minimum, we have $f(x)< f(x')$ for all $x'$ in the feasible region within $\eps$ of $x$.
\end{itemize}
For QPs of the form (\ref{eq:qp}), stationary points are not guaranteed to exist, and global optima and local minima are \np/-hard to compute. KKT points relate to other solution concepts as follows. Global or local minima are KKT points, thus KKT points are at least as easy to compute as global/local minima. Any stationary point is a KKT point, but stationary points are not guaranteed to exist, even for the box-constrained feasible regions that we consider here. To see that stationary points can be searched for in polynomial time, note that they are given by $\nabla f = 0$, hence define a linear subspace, and checking for points in the subspace that satisfy the boundary constraints $a^\intercal_i x \leq b_i$ amounts to solving an LP. In the unconstrained case (in which there are no boundary constraints) stationary points are the same as KKT points, so then the problem of searching for either kind of solution is tractable.
A stationary point need not be a local minimum: for the problem of minimising $f(x)=-x^2$ over the interval $[0,1]$, $x=0$ is a stationary point but not a local minimum.
\end{paragraph}

\begin{paragraph}{Hardness results for global/local optima.}
There has been much work studying the circumstances in which one can efficiently compute a solution of one of the above kinds, also on determining whether a given point $x$ is one of the above solutions.

The global optimisation problem for quadratic programming is amongst the earliest problems to be shown \np/-hard \cite{Sahni72,Sahni74}\footnote{Indeed, even before \np/-completeness, \cite{MotzkinS65} reduce MAX CLIQUE to a version of QP in which the feasible region is a simplex. In Sahni's reduction (from SUBSET SUM) the feasible region is a box.}, although containment in \np/ had to wait until substantially later \cite{Vavasis90-QP-in-NP}\footnote{\cite{Vavasis90-QP-in-NP} showed in particular that there exist optimal solutions having polynomial bit complexity.}. \np/-hardness has also been established for various restrictions of the problem, for example \cite{PardalosV91} obtain \np/-hardness when $Q$ has rank 1 with one negative eigenvector (in a sense the simplest kind of nonconvex program that can be expressed as a QP). A simple reduction from MAX-CLIQUE to QP \cite{MotzkinS65,PardalosYH91,AhmadiZ22-polytope} yields \np/-hardness for QPs that are square-free quadratic forms (diagonal entries of the matrix $Q$ in (\ref{eq:qp}) are zero, and there are no linear terms $c^\intercal x$); the feasible region is a simplex as opposed to a box.

Regarding local optima, there are hardness results known for computing them, as well as for checking whether a given point is locally optimal. It is shown in \cite{AhmadiZ22-polytope} that it is hard to find an approximation to a local minimum. The problems of checking whether a given point $x$ is a local optimum, or a strict local optimum, are \np/-hard \cite{MurtyK1987,PardalosS88} (in particular, when the feasible region is the unit box $[0,1]^n$ and $x$ is the origin). In the unconstrained case ($m=0$) \cite{AhmadiZ22-unconstrained} show that it is possible in polynomial time to determine whether any version of local optimal solution exists.

There are strong hardness results even for computing approximations to the global optimum of QP. \cite{BellareR95} obtained hardness of approximation for QP by reducing a two-prover one-round proof (with polylogarithmic communication) to QP in quasi-polynomial time. From this it follows that assuming \np/ problems are not solvable in quasi-polynomial time, there is no non-trivial constant factor approximation algorithm for QP. \cite[Theorem 1.3]{BellareR95} also show that for some constant $\mu\in(0,1)$, QP is \np/-hard to approximate within a factor $\mu$; these hardness results even apply assuming numbers are given in unary. Also in the context of two-prover one-round interactive proof systems, \cite{FeigeL92} show how the search for an optimal strategy for the provers can be expressed as a QP. They study a relaxed version of the QP that corresponds with an upper bound on the value of the game played by the provers (and is poly-time computable); this leads to a general-purpose heuristic for problems in \np/, and in turn a general kind of algorithm for diverse problems in \p/. \cite{FeigeK94} (Corollary 4) use this to conclude that (unless $\p/ = \np/$) there is no constant-factor approximation algorithm for QP.
\end{paragraph}

\subsection{\np/ total search problems and the class \cls/}

As a solution concept, KKT points have two appealing properties: guaranteed existence (provided the feasible region is bounded), and polynomial-time checkability (we can efficiently verify that a point is KKT). These properties mean that the problem of {\em computing} one belongs to the complexity class \tfnp/: total (as opposed to partial) functions that belong to \np/. Problems that belong to \tfnp/ are classified by various syntactic subclasses associated with the proof principle underlying the existence guarantee. Here, the relevant classes are \pls/ \cite{JPY88}, \ppad/ \cite{Pap94}, and \cls/ \cite{DaskalakisP2011-CLS}, the latter having been shown to be equal to \ppadpls/ \cite{FGHS22}.

The complexity class \cls/ (for ``continuous local search'') was introduced by \cite{DaskalakisP2011-CLS} in an effort to understand the complexity of certain seemingly-hard search problems that belong to both \ppad/ and \pls/. The problems they list include the search for a KKT point of a given polynomial over a domain given by linear constraints. Such problems are unlikely to be complete for either \ppad/ or \pls/, since such a result would indicate that one of \ppad/ or \pls/ contains the other. Recently \cite{FGHS22} showed that \cls/ is equal to the intersection of \ppad/ and \pls/, in the process showing that it is \cls/-complete to find KKT solutions of piecewise polynomial functions defined by a certain class of arithmetic circuits. Building on these results, \cite{BabR21} showed that computing a (possibly mixed) Nash equilibrium of a congestion game is \cls/-complete and furthermore (of more relevance to the present paper) that local optimisation (in the KKT sense) of degree-5 polynomials is also \cls/-complete. Since the \cls/-completeness results of \cite{FGHS22,BabR21}, other problems in game theory have been shown \cls/-complete \cite{ElkindGG22,TOCG23} via comparatively direct reductions. The main result of \cite{FGHS22} indicates that \cls/-complete problems are unlikely to have polynomial-time algorithms. Moreover, the hardness of \cls/ can also be based on various cryptographic assumptions such as particular versions of indistinguishability obfuscation~\cite{HubacekY2017-CLS}, soundness of Fiat-Shamir~\cite{ChoudhuriHKPRR19-Fiat-Shamir}, or Learning With Errors~\cite{JawaleKKZ21-PPAD-LWE}.

\begin{paragraph}{}
Regarding our main result and its significance, we have noted that quadratic programming is a fundamental problem of general interest. Our main result answers an open question raised in \cite{PardalosV92} (see problem 3) and reiterated in \cite{AhmadiZ22-polytope}. The problem {\sc Polynomial KKT} \cite{DaskalakisP2011-CLS} is a generalization of (\ref{eq:qp}) in which $f$ is allowed to be any polynomial, written down as a sum of monomials. \cite{BabR21} showed that the {\sc Polynomial KKT} problem is \cls/-complete for degree-5 polynomials, which naturally raises the question, pointed out in \cite{FGHS22}, of whether such a result holds for lower degree. Here we identify the lowest degree for which a hardness result holds, since for degree 1 the problem is linear programming. On the other hand, our result does not hold for some versions of interest, such as taking a standard simplex as the feasible region,\footnote{In subsequent work, our result was extended to also hold for such simplex constraints, by presenting a direct reduction from box constraints to simplex constraints~\cite{GhoshH24-symmetric-common-payoff}.} e.g., \cite{BomzeDKRQT00}. Another important class of QPs that differs from the one studied here involves optimising quadratic functions over a unit sphere, or an intersection of spheres, e.g., \cite{Ye92-sphereQP,VavasisZ90-sphereQP,Ye99}.

Our main result is for the computation of {\em exact} (as opposed to approximate) solutions. Fortunately, any problem instance has rational-valued KKT solutions whose bit complexity is polynomial in the bit complexity of numbers appearing in the problem instance\footnote{This does not hold for objective functions of degree 3 or more. The distinction is analogous to the distinction between Nash equilibria of 2-player games versus 3-player games.}. If we consider natural notions of approximation, computation of exact solutions is polynomial-time equivalent to the computation of $\eps$-approximate solutions for inverse-exponential $\eps$.
\cite{Ye98} gives an algorithm that computes $\eps$-KKT points, whose runtime dependence on $\eps$ is $O((1/\eps)\log(1/\eps)\log(\log(1/\eps)))$ (there is also polynomial dependence on $n$, and a factor representing the difference between the maximal and minimal objective values). So we give a negative answer to the question of whether a logarithmic dependence on $\eps$ is possible. Finally, our hardness result also highlights a contrast with convex optimisation, in which KKT points and global optima coincide, and many algorithms are known that find $\eps$-approximate solution in time $O(\log (1/\eps)$) \cite{Bubeck14}.
\end{paragraph}

\begin{theorem*}[Main Result]
It is \cls/-complete to compute KKT solutions of (\ref{eq:qp}), even when the feasible region consists of the unit box $[0,1]^n$.
\end{theorem*}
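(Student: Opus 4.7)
The proof splits into \cls/-containment and \cls/-hardness. Containment is essentially free: the stated KKT problem is a special case of the KKT problem for piecewise-polynomial functions over a polytope, which was shown to lie in \cls/ in \cite{FGHS22}; the objective itself serves as the \pls/ potential, and the \ppad/ side comes from the standard gradient-flow fixed-point argument. The real work is \cls/-hardness.

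For hardness, I plan to reduce from an already \cls/-hard KKT problem that involves genuinely non-quadratic computation, such as the degree-$5$ polynomial KKT problem of \cite{BabR21} or the piecewise-polynomial arithmetic circuit KKT problem of \cite{FGHS22}. The central difficulty is that the gradient of a quadratic is linear in all variables: a term $\alpha x_i x_j$ contributes only $\alpha x_j$ to $\partial f/\partial x_i$, so the stationarity condition $\partial f/\partial x_i = 0$ is a linear equation in the remaining variables. Consequently, all true nonlinearity must be smuggled in through the combinatorial complementarity structure of box-constrained KKT (i.e., which coordinates sit on the boundary). Equivalently, \qKKT{} over $[0,1]^n$ is a linear complementarity problem, and the reduction must leverage exactly that complementarity to simulate higher-degree arithmetic.

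The plan is then to build a small library of gadget QPs, each simulating one simple arithmetic primitive, and to combine them into a single QP whose KKT points project back to solutions of the source problem. The macros in the preamble telegraph a natural hierarchy: an \extract{} gadget that, at KKT, pins auxiliary variables to the binary digits of a continuous input (implemented via a $2$-variable complementarity construction, corresponding to the \linearKKT{} subproblem); a \bm{} gadget that multiplies discretised bits; a \ctb{} gadget that multiplies a continuous value by a bit, and hence, by summation over bit positions, a continuous value by another continuous value; and shaping gadgets (\affine, \clamp, \mesa, \singinter, \triwave, \osc) that build higher-precision arithmetic and keep all intermediate values inside $[0,1]$. Because gadget objectives act on disjoint wire variables and are added together, their KKT conditions decouple, so at any KKT point of the composite QP each gadget individually satisfies its local KKT conditions; the wire values then respect the circuit semantics and the composite KKT point projects to a KKT point of the source problem.

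The main obstacle, and where I expect most of the proof to concentrate, is the \extract{} gadget. Forcing a continuous variable to be rounded to its binary expansion using only a quadratic potential over a box requires designing a small QP whose KKT set is essentially discrete (or at least sharply stratified by bit choice). I would pursue an inductive $2$-dimensional LCP construction that peels off bits one at a time, and then prove that every KKT point of the full composite QP --- including degenerate ones that do not resemble a ``legal'' trace of the circuit --- still projects to a valid KKT point of the source problem. A secondary but delicate point is precision: since the composite KKT solutions are to be exact, but the simulated circuit operates at inverse-exponential precision, the gadgets' numerical parameters must be chosen so that rounding errors do not compound across simulated gates. Once these two points are handled, the remainder of the reduction and the KKT-to-KKT correctness argument should be relatively routine bookkeeping.
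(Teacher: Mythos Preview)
Your containment sketch is fine. The hardness plan, however, rests on a misreading of the paper's structure and misses the central idea.

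First, the macros you reverse-engineer (\extract, \bm, \ctb, \affine, \mesa, \dots) are \emph{not} gadgets inside the quadratic polynomial. In the paper they are building blocks of a \emph{linear arithmetic circuit} (gates of the form $x_k := \trunc_{[0,1]}(ax_i + bx_j + c)$). The reduction is two-stage: (i) show that a QP can simulate such a circuit, and (ii) show that finding KKT points of such circuits is already \cls/-hard. All the bit-extraction, continuous-times-binary, and mesa machinery lives in stage (ii), not in the QP.

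Second, your ``gadget objectives act on disjoint wire variables so KKT conditions decouple'' claim is exactly the obstacle the paper spends its effort on. Wires are \emph{shared} between producing and consuming gadgets; adding their quadratic contributions makes the KKT condition on a wire variable depend on downstream gates. The paper handles this with exponentially decreasing weights $\delta^i$ on the gate terms, but more importantly discovers that the resulting interaction is not noise: the small residuals in the forward evaluation propagate \emph{backwards} through the circuit and, at the input variables, equal (up to a scalar) the \emph{gradient} of the function the circuit computes. This Backpropagation Lemma is the heart of the QP reduction, and it is absent from your plan.

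Third, your proposed direct route---bit extraction and continuous-times-bit multiplication \emph{inside the QP}---runs into the degree barrier the paper explicitly flags: any natural encoding of a product or of a guide/control variable on top of a quadratic interaction pushes the degree to at least three. The paper avoids this by never multiplying in the QP; it only simulates truncated-linear gates (via squared terms plus auxiliary slack variables $z^\pm$ that absorb truncation), and relegates multiplication to the linear-circuit layer, where it is implemented as continuous-times-binary using $\min/\max$ and truncation rather than polynomial degree. Without a concrete quadratic gadget that performs $x\cdot b$ or bit extraction, your plan has a genuine gap at its load-bearing step.
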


\subsection{Technical Overview}

Our result is proved in two steps. First, we present a reduction from the problem of computing (some sort of) a KKT point of a type of arithmetic circuit to the problem of computing a KKT point of a QP with box constraints. Namely, we consider {\em linear arithmetic circuits} that compute piecewise linear functions using a single kind of (fairly general, multi-purpose) gate. In the second step, we show that computing a KKT solution of such a circuit is a \cls/-hard problem by reducing from a version of the problem for more general circuits, that is known to be \cls/-hard~\cite{FGHS22}. Together, these two reductions establish the \cls/-hardness of computing a KKT point of a QP.

While the first step is certainly the most interesting part of this combined reduction, the second step is surprisingly technical and requires a certain number of new ideas, which are likely to be useful in future works. We now present the main challenges in both parts, as well as the new ideas that were needed to overcome them.

\subsubsection*{Step 1: Reducing from a circuit to a QP}

The first challenge in this part is the following main obstacle.

\paragraph{\bf Challenge 1: We can only use terms of degree at most two.}

Unsurprisingly, the techniques used in \cite{FGHS22} to show \cls/-hardness of finding a KKT point of a general arithmetic circuit are of no use here, since we are reducing to an explicit polynomial. Rather, just as in \cite{BabR21}, we will just use the result of \cite{FGHS22} as a starting point for reductions.

The techniques used in \cite{BabR21} to reduce to a degree-5 polynomial are much more relevant here. Their reduction is highly non-trivial and also relies on some older ideas used in the context of proving \pls/-hardness of a version of local-max-SAT~\cite{Krentel90}. However, the restriction here to degree-2 polynomials makes a big difference and we mostly cannot re-use their ideas.\footnote{One notable exception to this is the idea of simulating the evaluation of a circuit by constructing an objective function that consists of a sum of terms, one for each gate of the circuit, with gates deeper down in the circuit having smaller weights. This idea of exponentially-decreasing weights, already used in \cite{Krentel90} in the context of discrete local optimisation, ensures that gates are correctly simulated and that their output is not biased by other gates that use it as an input.} We encounter a fundamental obstacle to the use of \emph{guide variables} (called guide players in \cite{BabR21}, since their reduction is presented in terms of a game). Very briefly, the role of these guide variables, which were already used in \cite{Krentel90} in a somewhat simpler form, is to be able to ``deactivate'' some interactions between two (or more) other variables. This deactivation is absolutely crucial for the approach of \cite{BabR21}, as it already was in \cite{Krentel90}. Given that in any reasonable construction of a quadratic polynomial the interaction between two variables would yield a quadratic term (e.g., $x_ix_j$, or perhaps $(x_i-x_j)^2$), the addition of a guide variable on top of that immediately takes us up to degree 3.

The inability to use guide variables, or any of the other involved machinery from \cite{BabR21}, forces us to start from the ground up. As a toy example to illustrate our approach, consider a circuit~$\circuit$ that takes two inputs $x_1, x_2 \in [0,1]$ and consists of only two gates. The first gate computes $x_3 := x_1 + x_2$, and then the second gate, which is the output of the circuit, computes $x_4 := -2x_3$. Thus, the circuit~$\circuit$ simply computes the function $f: [0,1]^2 \to \mathbb{R}, (x_1,x_2) \mapsto -2(x_1+x_2)$. This is a linear function and so finding a KKT point over the simple domain $[0,1]^2$ is very easy: the only KKT point (for the minimization problem) is at $(1,1)$. Now let us attempt to simulate this circuit by a quadratic polynomial that implements each gate separately.\footnote{Obviously, there is a trivial reduction here that just lets the quadratic polynomial be the linear function $f$ itself. However, we are interested in a construction that implements each gate separately, because we will ultimately need to implement (slightly) more general gates. Indeed, a circuit that only consists of linear gates represents a linear function, and it is easy to find KKT points of such functions.} Consider the polynomial
$$p(x_1,x_2,x_3,x_4) := (x_3-x_1-x_2)^2 + (x_4 + 2x_3)^2$$
which consists of one squared term for each gate.\footnote{Because we use such squared terms, our polynomial will not be multilinear. In particular, our result has no implications for games, unlike \cite{BabR21}.} Intuitively, minimizing $p$ will force $x_3 = x_1+x_2$ and $x_4 = -2x_3$.
To simplify the exposition in this part of the overview, we think of $x_3$ and $x_4$ as being unconstrained, while $x_1, x_2 \in [0,1]$. This makes the KKT conditions for $x_3$ and $x_4$ easier to work with,\footnote{The arguments still work if $x_3$ and $x_4$ are constrained to lie in some sufficiently large interval. For example, we could impose the constraints $x_3 \in [-10,10]$ and $x_4 \in [-30,30]$, since this ensures that $x_3$ and $x_4$ never lie on the boundary at a solution. In any case, we will later revert to $[0,1]$ constraints for all variables and these will indeed be used in a very crucial way in our construction.} as they just correspond to $\frac{\partial p}{\partial x_3} = 0$ and $\frac{\partial p}{\partial x_4} = 0$.
Now, the partial derivative of $p$ with respect to $x_4$ is
$$\frac{\partial p}{\partial x_4} = 2(x_4+2x_3)\ .$$
At a KKT point this must be zero, so we obtain $x_4 = -2x_3$. Next, we have
$$\frac{\partial p}{\partial x_3} = 2(x_3-x_1-x_2) + 4(x_4+2x_3)\ .$$
Setting this to zero, and using $x_4 = -2x_3$, we obtain $x_3 = x_1+x_2$ as desired. Thus, any KKT point $(x_1,x_2,x_3,x_4)$ of $p$ satisfies $x_4 = f(x_1,x_2)$. In other words, we have correctly simulated the evaluation of the circuit. However, this is not enough. We want any KKT point $(x_1,x_2,x_3,x_4)$ of $p$ to yield a KKT point $(x_1,x_2)$ of $f$, and this is currently not the case. What is missing is that the QP is not ``aware'' of the fact that it should attempt to minimize the output of the circuit, namely~$x_4$. An initial attempt to fix this by redefining
$$p(x_1,x_2,x_3,x_4) := (x_3-x_1-x_2)^2 + (x_4 + 2x_3)^2 + x_4$$
fails because it introduces big errors in the evaluation of the gates. This can be mitigated by using the idea of exponentially-decreasing weights from \cite{Krentel90} (which was also heavily used in \cite{BabR21}). Indeed, we can instead define
$$p(x_1,x_2,x_3,x_4) := (x_3-x_1-x_2)^2 + \delta(x_4 + 2x_3)^2 + \delta^2x_4$$
where $\delta > 0$ is small. Performing the same analysis as above, yields $x_4 = -2x_3 - \delta/2$, and then $x_3 = x_1+x_2 + \delta^2$. So the gates are no longer evaluated exactly, but have some additive error. Fortunately, the error can be made arbitrarily small by making $\delta$ sufficiently small.

The interesting observation however is that
$$\frac{\partial p}{\partial x_1} = -2(x_3-x_1-x_2) = -2 \delta^2$$
and similarly $\frac{\partial p}{\partial x_2} = -2 \delta^2$. This forces any KKT point $(x_1,x_2,x_3,x_4)$ of $p$ to satisfy $(x_1,x_2) = (1,1)$, which is indeed the correct KKT point of the original function $f$! Moreover, notice that $-2 \delta^2$ is equal to $\delta^2 \frac{\partial f}{\partial x_1}$, i.e., it is proportional to the partial derivative of the original function $f$. In other words, this seemingly arbitrary error term actually carries useful information. This is not a coincidence. The errors, starting from the error in the output gate due to the new term $\delta^2x_4$, propagate backwards in the circuit evaluation, until they reach the input variables $x_1$ and $x_2$. In doing so, every traversed gate modifies the error in a very particular way, until, finally, the signal seen by the input variables corresponds to the gradient of the original function $f$. Indeed, at every gate the error is modified in a way that corresponds to applying the rules for computing the gradient of a circuit using the \emph{backpropagation} technique. This technique, widely used in machine learning, computes the gradient of a function by starting from the output and repeatedly applying the chain rule for differentiation until the inputs are reached. Indeed, the following can be proved by induction over the depth of the circuit:

\begin{lemma*}[Linear Backpropagation Lemma]
When $p$ is constructed from a depth-$m$ circuit $\circuit$ with linear gates computing a function $f$, any KKT point of $p$ satisfies
$$\frac{\partial p}{\partial x_i} = \delta^m \cdot \frac{\partial f}{\partial x_i}$$
for all input variables $x_i$.
\end{lemma*}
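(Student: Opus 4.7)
The plan is to induct on the depth of the circuit, guided by the observation that the KKT residuals at the internal gates encode, up to a power of $\delta$, the backpropagated sensitivities used to compute $\nabla f$. Write $g_1,\dots,g_M$ for the gates of $\circuit$, with $g_k$ at depth $d_k$ computing $x_k := L_k(\cdot)$, where $L_k$ is a linear combination of earlier variables with coefficients $c_{kl}$; let $x_M$ denote the output gate, so $d_M = m$. Then $p$ has the form $\sum_k \delta^{d_k - 1}(x_k - L_k)^2 + \delta^m\, x_M$. Define the residual $r_k := x_k - L_k$. Since every intermediate $x_k$ is unconstrained (or, as noted in the footnote preceding the lemma, stays strictly inside a large enough box), its KKT condition reduces to $\partial p/\partial x_k = 0$.

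Next I would compute this partial derivative. The term $\delta^{d_k-1} r_k^2$ contributes $2\delta^{d_k-1} r_k$, and each downstream gate $g_j$ that takes $x_k$ as an input with coefficient $c_{jk}$ contributes $-2\delta^{d_j-1} c_{jk} r_j$. For the output variable there is the extra $\delta^m$ from the linear term. Rearranging yields $r_M = -\delta/2$ and, for every other gate,
\[
r_k \;=\; \sum_{j : g_j \text{ uses } x_k} \delta^{d_j - d_k}\, c_{jk}\, r_j.
\]

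I would then introduce the backpropagation sensitivities $\sigma_k$, defined by $\sigma_M := 1$ and, recursively from deeper to shallower gates, $\sigma_k := \sum_{j : g_j \text{ uses } x_k} c_{jk}\, \sigma_j$. By the chain rule, $\sigma_i = \partial f/\partial x_i$ at every input variable $x_i$. The key claim, proved by downward induction on depth, is that $r_k = -(\delta^{m - d_k + 1}/2)\, \sigma_k$ at every gate. The base case follows from $r_M = -\delta/2$ and $\sigma_M = 1$, and the inductive step drops out of the residual recursion above, with the exponents of $\delta$ telescoping precisely to $\delta^{m - d_k + 1}$.

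Finally, an input variable $x_i$ appears only inside the $L_j$'s of the gates that use it, so
\[
\frac{\partial p}{\partial x_i} \;=\; -2 \sum_{j : g_j \text{ uses } x_i} \delta^{d_j-1}\, c_{ji}\, r_j \;=\; \delta^m \sum_{j : g_j \text{ uses } x_i} c_{ji}\, \sigma_j \;=\; \delta^m\, \sigma_i \;=\; \delta^m\, \frac{\partial f}{\partial x_i},
\]
which gives the claim. The main obstacle is not conceptual but bookkeeping: keeping the exponents of $\delta$ straight across the induction and verifying that each intermediate variable's KKT condition really does collapse to $\partial p/\partial x_k = 0$. The latter is guaranteed precisely because $\delta$ is taken small enough that the residuals $r_k$ remain tiny, so the values $x_k = L_k + r_k$ stay away from any imposed boundary, as the footnote preceding the lemma anticipates.
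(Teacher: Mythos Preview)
Your proof is correct and follows exactly the approach the paper indicates: the paper merely asserts that this lemma ``can be proved by induction over the depth of the circuit'' without giving details, and your argument supplies precisely that induction, tracking the residuals $r_k$ and matching them to the backpropagation sensitivities $\sigma_k$. Your treatment of the boundary issue for intermediate variables also aligns with the paper's footnote, and the exponent bookkeeping is clean; this is essentially a fleshed-out special case of the paper's formal proof of the more general Backpropagation Lemma (\cref{lem:backpropagation}) via \cref{clm:backpropagation-induction}, stripped of the truncation machinery.
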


\paragraph{\bf Challenge 2: Circuits with linear gates are easy.}

The Linear Backpropagation Lemma implies that any KKT point of $p$ must yield a KKT point of the original function $f$. Unfortunately, this is not enough to prove that our problem is intractable, because computing a KKT point of a linear function is an easy problem. In order to reduce from existing hard functions~\cite{FGHS22,BabR21} we would at least need the circuit $\circuit$ to also consist of multiplication gates $x_k := x_ix_j$. But to implement such a gate we would need terms of the form $(x_k - x_ix_j)^2$, which have degree four.

The crucial observation here is that we have not yet used the boundary of the domain in any way to implement gates. The boundary constraints suggest a natural generalization of linear gates. Indeed, if we consider a term such as $(x_3-x_1-x_2)^2$ and now -- unlike we did before -- also constrain $x_3 \in [0,1]$, then we see that any KKT point of this term must satisfy
$$x_3 = \trunc (x_1 + x_2)$$
where $\trunc: \reals \to [0,1]$ denotes truncation to the $[0,1]$ interval, i.e., $\trunc(z) = \min\{1,\max\{0,z\}\}$. More generally, we can simulate any gate of the form $x_k := \trunc(ax_i+bx_j+c)$ by the term $(x_k-ax_i-bx_j-c)^2$. We call such a gate a \emph{truncated linear gate}. No efficient algorithm for computing KKT points of such circuits seems to be known, so there is hope that we might be able to prove intractability.

Unfortunately, before we can start considering proving such an intractability result, there is a more pressing issue: this reduction only works for a single gate. Although we still have correct (approximate) evaluation of the circuit by picking a sufficiently small $\delta$, the errors no longer correctly simulate backpropagation when truncation occurs. In order to restore the behavior that we observed in the setting without truncation, we need to find a way to simulate truncated linear gates that also works with backpropagation.

We modify the simulation as follows. A truncated linear gate $x_k := \trunc(ax_i+bx_j+c)$ is now simulated by the term
$$(x_k + z^+ - z^- - ax_i - bx_j - c)^2 +2 z^+ z^- + 2z^+(1-x_k) +2z^-x_k$$
where $z^+$ and $z^-$ are new auxiliary variables. Intuitively, $z^+$ is here to ``pick up the slack'' between $x_k$ and $ax_i+bx_j+c$, when the latter is strictly larger than $1$. The variable $z^-$ has a similar function when $ax_i+bx_j+c < 0$. Note that the derivative of the new term with respect to $x_k$ is the same as the derivative of $(x_k-ax_i-bx_j-c)^2$. So from the point of view of $x_k$ this new term is the same as the old one; in particular, $x_k$ will again (approximately) take the value $\trunc(ax_i+bx_j+c)$. The difference is in what the variables $x_i$ and $x_j$ see. The derivative of the old term with respect to $x_i$ was just $-2a(x_k-ax_i-bx_j-c)$, so when truncation occurred this derivative would essentially correspond to the truncation gap, whereas we would want it to be $0$, which is the correct backpropagation signal (because a small change in $x_i$ would not change $x_k$ if truncation occurs). On the other hand, the derivative of the new term with respect to $x_i$ is $-2a(x_k + z^+ - z^- - ax_i - bx_j - c)$. In this derivative, the variables $z^+$ and $z^-$ fill the gap between $x_k$ and $ax_i+bx_j+c$ when truncation occurs, and thus we obtain the correct backpropagation signal $0$. If truncation does not occur, then $z^+ = z^- = 0$ and the new term behaves just like the old term.

It is thus tempting to try to establish the following lemma.

\begin{desiredlemma*}[Ideal Backpropagation Lemma]
When $p$ is constructed from a depth-$m$ circuit~$\circuit$ with truncated linear gates computing a function $f$, any KKT point of $p$ satisfies
$$\left(\frac{\partial p}{\partial x_1}(x_1,x_2), \frac{\partial p}{\partial x_2}(x_1,x_2)\right) \in \delta^m \cdot \partial f(x_1,x_2)$$
where $x_1$ and $x_2$ are the input variables, and $\partial f$ is the generalized gradient\footnote{At a point where $f$ is differentiable, the generalized gradient is the singleton set consisting of the gradient; where $f$ is not differentiable, it is the set of convex combinations of well-defined gradients close to that point.} of the (almost everywhere differentiable) function $f$.
\end{desiredlemma*}

\paragraph{\bf Challenge 3: The Ideal Backpropagation Lemma does not hold.}

Unfortunately, this lemma fails to hold. The reason for this is quite fundamental: backpropagation does not really work for such circuits. Consider the following simple example: the circuit $\circuit$ has a single input $x_1$, and outputs the value $\trunc[2x_1]/2 + \trunc[x_1-1/2]$. This can easily be implemented by using three truncated linear gates. Note that the circuit computes the (linear!) function $[0,1] \to [0,1], x_1 \mapsto x_1$, which has derivative $1$ everywhere. Now consider performing backpropagation when the input is $x_1=1/2$. Since this is the threshold for both truncations, the value $0$ is a valid derivative for both of those gates. As a result, the gradient computed by backpropagation could theoretically output $0$. In \cref{sec:lacqp} we provide a slightly more involved example (\cref{ex:backprop-counter}) where this indeed happens in our QP construction: the correct gradient value is $1/2$ everywhere, but at $x_1 \approx 1/2$ we have $\frac{\partial p}{\partial x_1}(x_1) = 0$. In particular, our construction would incorrectly output $x_1 \approx 1/2$ as a KKT point. This shows that the Ideal Backpropagation Lemma cannot hold, even in some approximate version where we would also consider points in the vicinity of $(x_1,x_2)$.

However, the example suggests that a weaker statement might hold. Indeed, the issue occurs because both truncations have a threshold at $1/2$. If we were to slightly perturb these thresholds, then we would indeed see a derivative of $0$ appear. In other words, it seems reasonable to think that the backpropagation that occurs computes some convex combination of gradients of various perturbed versions of the circuit $\circuit$. To be more precise, in a perturbed version of $\circuit$, every gate $x_k := \trunc(ax_i+bx_j+c)$ is replaced by a gate $x_k := \trunc(ax_i+bx_j+c + \pi_k)$ for some $\pi_k \in \mathbb{R}$. By picking~$\delta$ sufficiently small, we can ensure that all $\pi_k$ are as small as required. Indeed, we can prove the following result.

\begin{lemma*}[Backpropagation Lemma (informal)]
When $p$ is constructed from a depth-$m$ circuit $\circuit$ with truncated linear gates, any KKT point of $p$ satisfies
$$\left(\frac{\partial p}{\partial x_1}(x_1,x_2), \frac{\partial p}{\partial x_2}(x_1,x_2)\right) \in \delta^m \cdot \textup{conv}\left\{ \nabla \tilde{f} (x_1,x_2) : \text{$\tilde{f}$ computed by small perturbation of $\circuit$} \right\}$$
where $x_1$ and $x_2$ are the input variables.
\end{lemma*}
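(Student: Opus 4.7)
The plan is to prove the lemma by induction on the depth $m$ of the circuit $\circuit$, working backwards from the output gate toward the inputs so as to mirror backpropagation itself. Fix a KKT point of $p$. As in the warm-up Linear Backpropagation Lemma, each gate's term $T_k$ in $p$ is weighted by $\delta^{d_k}$, where $d_k$ is the distance from the output, with an extra term $\delta^m x_{\text{out}}$ for the output wire. The inductive invariant carries along, for each intermediate wire $k$, a \emph{backpropagation signal} $s_k$ satisfying $\partial p/\partial x_k = \delta^m s_k$ at the KKT point; the base case is the output, where the extra term forces $s_{\text{out}} = 1$.

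For the inductive step, consider a gate $x_k := \trunc(a x_i + b x_j + c)$ implemented by the term $T_k = (x_k + z_k^+ - z_k^- - a x_i - b x_j - c)^2 + 2 z_k^+ z_k^- + 2 z_k^+(1 - x_k) + 2 z_k^- x_k$. First I would analyze the KKT conditions on $z_k^+, z_k^- \in [0,1]$ to show that at most one of them is non-zero and that they exactly absorb the over/underflow beyond $[0,1]$. Then, combining with the KKT condition on $x_k$ (which aggregates a downstream signal of order $O(\delta)$ from gates taking $x_k$ as input), one reads off $x_k \approx \trunc(a x_i + b x_j + c + \pi_k)$ for a gate-level perturbation $\pi_k$ whose magnitude can be made arbitrarily small by shrinking $\delta$. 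Collecting one such $\pi_k$ per gate defines the perturbed circuit $\tilde{\circuit}$ whose gradient appears in the lemma.

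Next, the backpropagation recurrence itself. Differentiating $p$ with respect to an input variable $x_i$, only the terms $T_k$ for gates that use $x_i$ as input contribute, and each contribution equals $-2 a_k (x_k + z_k^+ - z_k^- - a_k x_i - \cdots)$. Using the KKT condition on $x_k$, this rewrites as $\alpha_k s_k$, where $s_k$ is the aggregated downstream signal at $x_k$ and $\alpha_k$ is a ``local multiplier'' for the gate. When the perturbed input $E_k := a x_i + b x_j + c + \pi_k$ lies strictly inside $(0,1)$, one finds $\alpha_k = a$; when it lies strictly outside $[0,1]$, the slack variables $z_k^\pm$ saturate so that $\alpha_k = 0$. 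These are precisely the two values of $\trunc'$ away from its thresholds, and summing over all downstream uses of $x_i$ reproduces one coordinate of $\delta^m \nabla \tilde{f}(x_1,x_2)$.

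The main obstacle is the threshold case $E_k \in \{0,1\}$, where the KKT conditions on $z_k^\pm$ and $x_k$ do not pin down $\alpha_k$ uniquely but merely confine it to the whole interval $[0,a]$ (or $[a,0]$). I plan to identify each admissible value with a convex combination of the two extreme multipliers, each of which is realized by nudging $\pi_k$ infinitesimally in one direction so that the truncation becomes unambiguous. Carrying these gate-local convex combinations through the induction, and using that at non-boundary gates the perturbation is forced while at the finitely many boundary gates the sign of the nudge can be chosen independently, packages them into a \emph{single} convex combination of globally perturbed circuits, with at most $2^{(\text{number of boundary gates})}$ extreme terms whose coefficients arise by expanding the product of the per-gate convex-combination weights. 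This gives the desired inclusion in $\delta^m \cdot \conv\{\nabla \tilde{f}(x_1,x_2)\}$; the most delicate bookkeeping is verifying that the gate-local combinations really do compose into a single global combination, which is handled because outside the boundary gates the perturbations are fixed by the approximate evaluation step.
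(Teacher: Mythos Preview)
Your overall strategy—backward induction mirroring backpropagation, with a convex combination at each threshold gate that is then expanded into a global convex combination over $2^{(\#\text{threshold gates})}$ perturbed circuits—is exactly the paper's approach. The gap lies in the nudging step.

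An ``infinitesimal'' nudge does not suffice, and the obstruction is structural rather than cosmetic. If gates $i<j$ are both at threshold and you nudge each by the same magnitude $\eta$, then the nudge at gate $i$ shifts the value $x_i$ computed in the perturbed circuit by $\Theta(\eta)$; this propagates to shift gate $j$'s input by $\Theta(\eta)$ as well, and for some sign combinations it cancels the nudge at gate $j$, leaving $j$ on the threshold and $\tilde f$ non-differentiable at $(y_1,y_2)$. The paper avoids this with \emph{graded} perturbation sizes: at near-threshold gates it sets $|\pi_i|=4K\eps_{i-1}$ with $\eps_{i-1}=(2K\delta)^{n-i}$, so that the nudge at each gate strictly dominates the accumulated propagation from all earlier ones, and it then proves that every $f^{s\cdot\pi}$ is differentiable on a full neighborhood of $(y_1,\dots,y_i)$, not merely at the point. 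Two related differences are worth flagging. First, the paper does \emph{not} take $\pi_k$ to be the approximation error as you do; it defines $\pi_i$ purely from the proximity of $\sum_j a_{ij}y_j+c_i$ to $\{0,1\}$ and absorbs the approximation error separately into the convex coefficient $\lambda_i$ via an explicit equation. Second, the reason the gate-local combinations compose into a single global one is not that ``perturbations are fixed at non-boundary gates by approximate evaluation,'' but rather the structural fact that flipping the sign $s_i$ leaves $f_i^{s\cdot\pi}$ unchanged (only $f_{i-1}^{s\cdot\pi}$ changes), which lets one factor out $(\lambda_i,1-\lambda_i)$ one gate at a time in the induction.
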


This leads us to define the new notion of a generalized gradient \emph{of a linear arithmetic circuit} to capture this behavior. See the subsequent preliminaries section for more details on this.

With the Backpropagation Lemma in hand, we can now leave quadratic polynomials behind us and focus on circuits with truncated linear gates. We will also refer to these by the simpler name \emph{linear arithmetic circuits} (which is usually used to refer to circuits with $+, -, c, \times c, \min, \max$ gates~\cite{FGHS22}), since our circuits can easily simulate such circuits and vice-versa. In the next step, we construct a class of such circuits that is robust to perturbations, and for which it is \cls/-hard to find a KKT point (with respect to the new definition of generalized gradient).

\subsubsection*{Step 2(a): Designing a robust function: the mesa construction}

In order to show \cls/-hardness of this problem, we have to reduce from an existing \cls/-hard problem. We reduce from the problem of computing an approximate KKT point of a smooth function defined on the two-dimensional grid $[0,1]^2$, which is known to be \cls/-complete when the function is represented by an arithmetic circuit with more general gates~\cite{FGHS22}. Indeed, being able to work on a two-dimensional domain (as opposed to a high-dimensional one if we used \cite{BabR21} instead) allows us to avoid having to unnecessarily complicate the construction.

At a high level, given such a smooth function defined on $[0,1]^2$, we would like to construct a piecewise linear function that has (approximately) the same KKT solutions as the original function. Importantly, our piecewise linear function must be represented by a linear arithmetic circuit. This is already challenging, even if we ignore the perturbations.

\paragraph{\bf Challenge 1: Existing linear circuit constructions give no guarantees about the gradient.}

Interpolations of continuous functions by linear arithmetic circuits have been given in prior work~\cite{CDT09,Rubinstein18-Nash-inapproximability,FGHS22} and indeed these have been pivotal in proving important results in this field. However, these constructions only aim to obtain a piecewise linear function that closely approximates the original function in terms of \emph{function value}. The usage of the \emph{averaging trick}~\cite{DGP09}, which is common to all of these, means that the (generalized) gradient of the piecewise linear function can be wildly different from the original gradient and introduce spurious KKT solutions.

We are thus forced to move away from this type of construction. Putting circuits aside for a bit, there is a standard interpolation by a piecewise linear function that does (approximately) maintain the gradient. Simply pick a sufficiently fine standard triangulation of $[0,1]^2$, define the value of the function at the vertices of the triangulation to agree with the original function, and interpolate linearly within each triangle. This interpolation would be sufficient for our purposes, because it is not hard to show that any KKT point of the new function must correspond to an approximate KKT point of the original function.

The ``catch'' is that we would have to construct a linear arithmetic circuit that represents this piecewise linear interpolation. Existing techniques, which all use the averaging trick, are unable to achieve this. However, it turns out that using some new ideas (most of which we end up using in our final construction and which are highlighted below) it is in fact possible to construct a linear arithmetic circuit that \emph{exactly} computes this piecewise linear interpolation. At this point, if the Ideal Backpropagation Lemma stated earlier was true, we would be done. Unfortunately, this is not the case, and we also have to argue about what happens to the circuit when gates are slightly perturbed.

\paragraph{\bf Challenge 2: The standard piecewise linear interpolation is not robust to perturbations.}

Unfortunately, this circuit is not robust to perturbations, i.e., perturbed versions of the circuit would introduce new solutions that did not appear in the original function. In order to illustrate the issues that can occur, as well as explain how they can be overcome, it is useful to take a step back and think about what would happen if the domain was one-dimensional, instead of two-dimensional.

\begin{figure}
\begin{center}
\resizebox{0.4\linewidth}{!}{

\def\axlabelstretch{1.2}

\begin{tikzpicture}[
	dot/.style = {circle, fill, minimum size=1mm, inner sep=0pt, outer sep=0pt},
	sq/.style = {rectangle, fill, minimum size=1.22mm, inner sep=0pt, outer sep=0pt}
]

\tikzmath{
	\base = 1.15;
	\interc = 9.5;
	\startx = -1;
	\starty = 0;
	\widthsteep = 0.65;
    \widthtop = 4.5;
	\redxc = \startx + \widthtop; 
	\bluexc = \startx + 2*\widthtop; 
	\greenxc = \startx + 3*\widthtop; 
	\redx0= \redxc - 0.5 * \widthtop - \widthsteep;
	\redx1= \redxc - 0.5 * \widthtop;
	\redx2= \redxc + 0.5 * \widthtop;
	\redx3= \redxc + 0.5 * \widthtop + \widthsteep;
	\bluex0= \bluexc - 0.5 * \widthtop - \widthsteep;
	\bluex1= \bluexc - 0.5 * \widthtop;
	\bluex2= \bluexc + 0.5 * \widthtop;
	\bluex3= \bluexc + 0.5 * \widthtop + \widthsteep;
	\greenx0= \greenxc - 0.5 * \widthtop - \widthsteep;
	\greenx1= \greenxc - 0.5 * \widthtop;
	\greenx2= \greenxc + 0.5 * \widthtop;
	\greenx3= \greenxc + 0.5 * \widthtop + \widthsteep;
	\redyl = \interc-\base^\redx1; 
	\redyr = \interc-\base^\redx2; 
	\blueyl = \interc-\base^\bluex1;
	\blueyr = \interc-\base^\bluex2;
	\greenyl = \interc-\base^\greenx1;
	\greenyr = \interc-\base^\greenx2;
}

\begin{axis}[
    	 xshift=4cm,
         scale=0.5,
		 xmin=0,ymin=0,xmax=17,ymax=10, 
		 samples=100,
		 axis lines=center,
		 grid=both,
		 xtick = \empty,
		 ytick = \empty,
		]


\coordinate (A) at (\bluex0,\starty);
\coordinate (B) at (\bluex1,\blueyl);
\draw[thick,blue,dotted] ($(A)!-.7!(B)$) coordinate (A') -- ($(B)!-.7!(A)$) coordinate (B');
\draw[thick,blue] (\bluex0,\starty) -- (\bluex1,\blueyl);

\coordinate (C) at (\bluex1,\blueyl);
\coordinate (D) at (\bluex2,\blueyr);
\draw[thick,blue,dotted] ($(C)!-.7!(D)$) coordinate (C') -- ($(D)!-.7!(C)$) coordinate (D');
\draw[thick,blue] (\bluex1,\blueyl) -- (\bluex2,\blueyr);

\coordinate (E) at (\bluex2,\blueyr);
\coordinate (F) at (\bluex3,\starty);
\draw[thick,blue,dotted] ($(E)!-.7!(F)$) coordinate (E') -- ($(F)!-.7!(E)$) coordinate (F');
\draw[thick,blue] (\bluex2,\blueyr) -- (\bluex3,\starty);

\end{axis}

\end{tikzpicture}}
\end{center}
\caption{Creating a one-dimensional mesa as the minimum of three lines.}
\label{fig:1dmesa}
\end{figure}
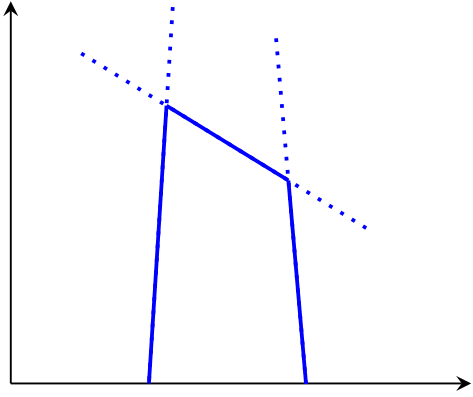

\begin{figure}
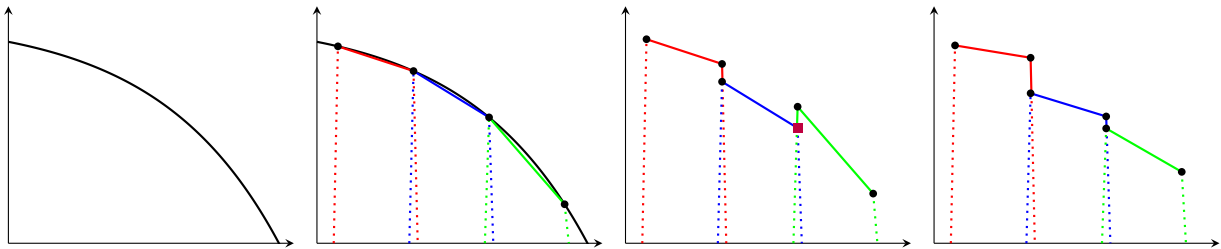

\begin{center}
\resizebox{1\linewidth}{!}{\input figures/1d.tex}
\end{center}
\caption{Left to right: smooth function we want to interpolate (with no stationary points); 
	standard interpolation as the maximum of adjacent mesas without perturbations (still no stationary points); additive perturbations
	introduce an unwanted stationary point (indicated as a brown square); this unwanted stationary point does not arise when the 
	halved-gradient trick is applied.}
\label{fig:halfgrad}
\end{figure}

In the one-dimensional case, with domain $[0,1]$, the standard interpolation is indeed very simple. Given some sufficiently fine discretization of $[0,1]$, we simply interpolate linearly between adjacent points. Implementing this using a linear arithmetic circuit is still non-trivial (because the discretization is exponentially fine), but it is possible using the new ideas hinted at above. Without going into too much detail, this piecewise linear function is constructed by taking the maximum of an exponential\footnote{Of course this would not be efficient, so the actual construction has to do something more clever. Nevertheless, this (incomplete) description is sufficient for this part of the technical overview.} number of simple functions. Every simple function implements a single linear segment of the interpolation and then very quickly decreases in value as soon as we leave the segment. Each of the simple functions can be constructed by taking the minimum of the three linear functions that they each consist of, see \cref{fig:1dmesa}. Taking the maximum of all these simple functions then indeed correctly yields the desired interpolation. 

Now, let us see what happens when we allow small perturbations at gates of the circuit. Of course, it is hard to think about all the ways in which perturbations can interfere with a construction (especially since we have not given many details here), but a good starting point, and in a certain sense, the absolute minimum requirement is: if we slightly perturb each of the linear functions that are used to construct each of the simple functions by some small additive term, no new solutions should appear.

Unfortunately, the construction already fails this simple test, as can be seen in the example in \cref{fig:halfgrad}. Furthermore, note that the issue appears as soon as we allow non-zero perturbations; requiring the perturbations to be very small does not help. However, this example suggests a somewhat different approach: instead of always trying to faithfully approximate the gradient of the original function, we can relax this requirement as long as we do not introduce any new KKT points. Indeed, we can avoid the issue by using the following ``halved-gradient'' trick: for each linear segment, halve its slope while keeping the same value at the middle of the segment. As long as the perturbations are kept sufficiently small, this simple trick ensures that linear pieces with ``bad'' gradients are no longer visible, i.e., they disappear when we take the maximum over all simple functions. See \cref{fig:halfgrad} for an example. The only case where such a ``bad'' piece might appear is if the slope of the linear segment is very flat. But in that case, the original function must have an approximate KKT point there.

This approach indeed works for the one-dimensional setting. It turns out that the easiest way to generalize this to two dimensions is not to try to apply this to a triangulation of the unit square, but rather to a grid over the unit square. For any point on the grid, we construct a corresponding square segment with a gradient that is half the gradient of the original function at that point. When we leave that square, the function value decreases very quickly. We call this simple function a mesa due to its shape which is reminiscent of a flat-topped hill with steep sides, see \cref{fig:2dmesa}. The final function is then obtained by taking the maximum of (an exponential number of such) mesa functions, one at each grid point.

\begin{figure}
\begin{center}
\includegraphics[width=8cm]{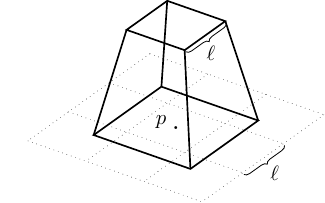}
\end{center}
\caption{An illustration of a two-dimensional mesa function centered at point $p$.}
\label{fig:2dmesa}
\end{figure}

We show that this mesa construction does not introduce any new KKT points, except in the vicinity of approximate KKT points of the original function. Importantly, this continues to hold even if we add an arbitrary, but small, perturbation to each linear piece of each mesa.

In the last part of this technical overview, we give some details about how the mesa construction can be implemented by using only the gates available in a linear arithmetic circuit (namely, truncated linear gates, as well as other gates that can be simulated by them, such as $\max$ and $\min$). Furthermore, we need to make sure that the perturbations, which appear in any gate, can only impact the construction in the way described above (i.e., perturbing each linear piece of each mesa), and not in any other way.

\subsubsection*{Step 2(b): Robustly implementing the mesa construction with a linear circuit}

From Step 2(a) we get a grid of points $G$ covering $[0, 1]^2$, and Boolean
circuits that, for each point $y \in G$ define a mesa centered at $y$. If $m(x,
y)$ is the height of the mesa centered at $y$ at the point $x$, then 
we must implement a linear circuit that computes 
\begin{equation*}
f(x) = \max_{y \in G}\ m(x, y).
\end{equation*}

Since $G$ contains exponentially many points, it is clearly infeasible to
compute $f$ directly. Instead we 
first perform a bit extraction on $x$ to obtain binary encodings of a small set
$S \subseteq G$ of
nearby grid points, and we then evaluate  $f(x) = \max_{y \in S} m(x, y)$
instead. This works because each mesa can only achieve the maximum used in $f$
in a small radius around its center, so we can disregard the mesas that are far
from $x$ when computing $f$.

While the technique of extracting bits from $x$ to succinctly compute some
function $f(x)$ has been used before, we must overcome several challenges to
make this work for our setting.

\paragraph{\bf Challenge 1: Dealing with bit extraction failures.}

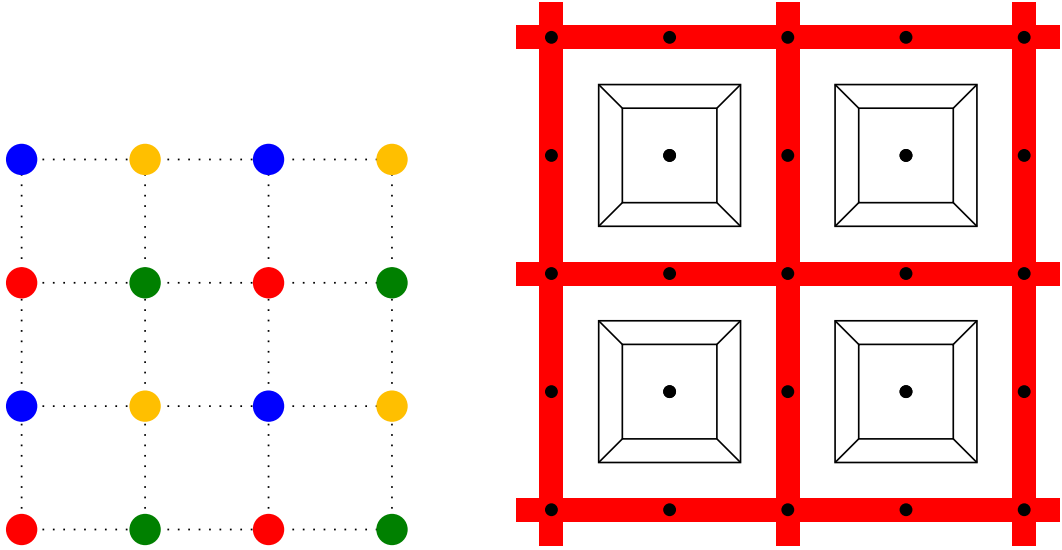
\begin{figure}
\begin{center}
\resizebox{0.4\linewidth}{!}{\def\blueredrows{1,5}
\def\blueyellowcols{1,5}
\def\yellowgreenrows{3,7}
\def\redgreencols{3,7}

\tikzset{cstyle/.style={shape=circle,fill=black,scale=0.7}}

\begin{tikzpicture}[scale=1]

\draw[step=1.0,black,thin,dotted] (0,0) grid (3,3);

\node[cstyle, red] at (0,0){};
\node[cstyle, blue] at (0,1){};
\node[cstyle, our-green] at (1,0){};
\node[cstyle, our-yellow] at (1,1){};
\node[cstyle, red] at (2,0){};
\node[cstyle, blue] at (2,1){};
\node[cstyle, our-green] at (3,0){};
\node[cstyle, our-yellow] at (3,1){};
\node[cstyle, red] at (0,2){};
\node[cstyle, blue] at (0,3){};
\node[cstyle, our-green] at (1,2){};
\node[cstyle, our-yellow] at (1,3){};
\node[cstyle, red] at (2,2){};
\node[cstyle, blue] at (2,3){};
\node[cstyle, our-green] at (3,2){};
\node[cstyle, our-yellow] at (3,3){};

\end{tikzpicture}}
\hskip 1cm
\resizebox{0.45\linewidth}{!}{\tikzset{cstyle/.style={shape=circle,fill=black,scale=0.6}}

\begin{tikzpicture}[scale=2]


\foreach \x in {0,...,4}
\foreach \y in {0,...,4}
{
	\node[cstyle] at (\x,\y) {};
}


\def\e{0.1}
\def\extra{0.2}

\fill[fill=red,opacity=1] (0-\e,0-\e-\extra) rectangle (0+\e,4+\e+\extra);

\fill[fill=red,opacity=1] (2-\e,0-\e-\extra) rectangle (2+\e,4+\e+\extra);

\fill[fill=red,opacity=1] (4-\e,0-\e-\extra) rectangle (4+\e,4+\e+\extra);

\fill[fill=red,opacity=1] (0-\e-\extra,0-\e) rectangle (4+\e+\extra,0+\e);

\fill[fill=red,opacity=1] (0-\e-\extra,2-\e) rectangle (4+\e+\extra,2+\e);

\fill[fill=red,opacity=1] (0-\e-\extra,4-\e) rectangle (4+\e+\extra,4+\e);

\def\small{0.4}
\def\large{1.6}
\def\f{0.2}

\draw[thick] (\small,\small)--(\large,\small)--(\large,\large)--(\small,\large)--cycle; 
\draw[thick] (\small+\f,\small+\f)--(\large-\f,\small+\f)--(\large-\f,\large-\f)--(\small+\f,\large-\f)--cycle; 
\draw[thick] (\small,\small)--(\small+\f,\small+\f);
\draw[thick] (\large,\small)--(\large-\f,\small+\f);
\draw[thick] (\large,\large)--(\large-\f,\large-\f);
\draw[thick] (\small,\large)--(\small+\f,\large-\f); 

\def\xshift{2}

\draw[thick] (\xshift+\small,\small)--(\xshift+\large,\small)--(\xshift+\large,\large)--(\xshift+\small,\large)--cycle; 
\draw[thick] (\xshift+\small+\f,\small+\f)--(\xshift+\large-\f,\small+\f)--(\xshift+\large-\f,\large-\f)--(\xshift+\small+\f,\large-\f)--cycle; 
\draw[thick] (\xshift+\small,\small)--(\xshift+\small+\f,\small+\f);
\draw[thick] (\xshift+\large,\small)--(\xshift+\large-\f,\small+\f);
\draw[thick] (\xshift+\large,\large)--(\xshift+\large-\f,\large-\f);
\draw[thick] (\xshift+\small,\large)--(\xshift+\small+\f,\large-\f); 

\def\yshift{2}

\draw[thick] (\xshift+\small,\yshift+\small)--(\xshift+\large,\yshift+\small)--(\xshift+\large,\yshift+\large)--(\xshift+\small,\yshift+\large)--cycle; 
\draw[thick] (\xshift+\small+\f,\yshift+\small+\f)--(\xshift+\large-\f,\yshift+\small+\f)--(\xshift+\large-\f,\yshift+\large-\f)--(\xshift+\small+\f,\yshift+\large-\f)--cycle; 
\draw[thick] (\xshift+\small,\yshift+\small)--(\xshift+\small+\f,\yshift+\small+\f);
\draw[thick] (\xshift+\large,\yshift+\small)--(\xshift+\large-\f,\yshift+\small+\f);
\draw[thick] (\xshift+\large,\yshift+\large)--(\xshift+\large-\f,\yshift+\large-\f);
\draw[thick] (\xshift+\small,\yshift+\large)--(\xshift+\small+\f,\yshift+\large-\f); 

\def\xshift{0}

\draw[thick] (\xshift+\small,\yshift+\small)--(\xshift+\large,\yshift+\small)--(\xshift+\large,\yshift+\large)--(\xshift+\small,\yshift+\large)--cycle; 
\draw[thick] (\xshift+\small+\f,\yshift+\small+\f)--(\xshift+\large-\f,\yshift+\small+\f)--(\xshift+\large-\f,\yshift+\large-\f)--(\xshift+\small+\f,\yshift+\large-\f)--cycle; 
\draw[thick] (\xshift+\small,\yshift+\small)--(\xshift+\small+\f,\yshift+\small+\f);
\draw[thick] (\xshift+\large,\yshift+\small)--(\xshift+\large-\f,\yshift+\small+\f);
\draw[thick] (\xshift+\large,\yshift+\large)--(\xshift+\large-\f,\yshift+\large-\f);
\draw[thick] (\xshift+\small,\yshift+\large)--(\xshift+\small+\f,\yshift+\large-\f); 

\foreach \x in {0,...,4}
\foreach \y in {0,...,4}
{
	\node[cstyle] at (\x,\y) {};
}

\end{tikzpicture}}
\end{center}
\caption{Left: the division of the grid into four sets. Right: one of the four
$g_i$ functions.}
\label{fig:grid}
\end{figure}

The bit extraction process requires us to implement an inherently discontinuous
function, and since linear circuits can compute only continuous functions, we
are forced to rely on bit extractors that can fail for a small subset of the
inputs. This is a well-known problem, and prior work has addressed it through
the use of the  ``averaging trick'', in which one extracts bits for $x$ and
also a large number of points that are close to~$x$. By arranging the process
such that only a small number of the bit extractions can fail, then the results
over all of the points can be averaged, giving us a function $\widetilde{f}$ with $| \widetilde{f}(x)
- f(x) | \le \eps$ for some small $\eps$.

For prior work, which was usually interested only in ensuring that $\widetilde{f}(x)$ was
far from zero whenever $f(x)$ was also far from zero, this approach was good
enough. However for our purposes, any deviation in the desired output, no matter how
small, may fatally undermine the construction by introducing new gradients, which could create new solutions.

To overcome this, we apply averaging in a new way that ensures that any errors
arising from bit extraction failures do not make their way into the output.
Specifically, we divide the points in the grid into four sets $S_1, S_2, S_3,
S_4 \subseteq G$ where each set contains points from a grid of double the width
of the original. The four sets are shown in four different colours on the left
side of \cref{fig:grid}.
For each set $S_i$ we build a function $g_i(x) = \max_{y \in S_i} m(x, y)$
which takes the maximum only over the mesas whose centers are in $S_i$.  We then set $f(x) =
\max(g_1(x), g_2(x), g_3(x), g_4(x))$.

The function $g_i$ is shown on the right of \cref{fig:grid}.
The red lines in the figure show the locations at which a bit extraction might
fail. Importantly, these regions occur only in places at which $m(x, y) \le 0$ for all mesa
centers $y \in S_i$. We ensure that at most two bit extractions can fail, and
that when they do fail they result in an output that is at most 1. So by
averaging over 12 points near $x$ we can ensure that $g_i(x) \le 1/6$ whenever
a bit extraction failure occurs. 

However, we also ensure that $f(y) \ge 1/3$ for all points $y$. 
Therefore, the function $g_j$ that is responsible for the mesa that defines $f(x)$ will satisfy $g_j(x) \ge 1/3$. Since $f$ is defined to be the maximum of the $g$ functions, this means that any errors arising from failed bit decodings in $g_i$ will be masked completely by another function $g_j$ that has correctly decoded its input. 
Ultimately, this means that 
any spurious gradient arising from a failed
bit extraction is well below the value of $f(x)$, and so these gradients can never make their way into the output.

\paragraph{\bf Challenge 2: Multiplying two variables.}

We are given Boolean circuits that define the mesas that we must output. This
means that a Boolean circuit will tell us to output an affine function at $x =
(x_1, x_2)$ with gradient $g = (g_1, g_2)$ and additive value $a$, in which
case we will need to output
\begin{equation*}
x_1 \cdot g_1 + x_2 \cdot g_2 + a.
\end{equation*}
This is problematic however, because
$x$ and $g$ are both variables, and a linear circuit does not allow us to multiply two
variables together.

We circumvent this by showing that it is possible to compute $x \cdot y$ when
$x$ is a continuous variable and $y$ is a variable encoded in binary.
So we represent gradients in binary in our circuit, and this allows us to
precisely control the gradients that we output.

\paragraph{\bf Challenge 3: Perturbations.}

It is not enough to create a linear circuit that implements $f$, because our
linear circuit must also be robust to perturbations. These perturbations will
slightly alter the values that are outputted by min, max, and truncation
operations, and we need to ensure that they do not alter the gradients of the
mesas that appear in the output of the function. 

We are able to show that evaluating a Boolean circuit and decoding a binary
value can be carried out exactly with no errors even in the presence of
perturbations, while introducing perturbations in the bit extraction process
only slightly increases the region in which the bit extraction will fail. 
Then we show that each mesa can be computed with gradients that are exactly
correct, but where each of the five pieces might be additively perturbed by a
small amount. Finally, we show that the step of splitting the points into the $g_i$
functions and then maximizing over them to produce $f$ only adds an additional
small perturbation, while not altering the gradients of any of the mesas.

\section{Preliminaries}

All numbers appearing as inputs in our problem are assumed to be rational. A rational number $x$ is represented by its numerator and denominator (in binary) of the irreducible fraction for $x$. We let $\size(x)$ denote the number of bits needed to represent $x$ in this way. We also extend this notation in the natural way to the case where $x$ is a vector with rational entries.

\begin{definition}
The \qKKT problem is defined as follows. We are given a degree-2 polynomial $p$ over $n$ variables, and the goal is to compute a KKT point of the following optimization problem:
\begin{equation}\label{eq:qKKT}
\begin{split}
\min \quad &p(x) \\
\text{ s.t.} \quad  
& 0 \leq x_i \leq 1 \quad \forall i \in [n]
\end{split}
\end{equation}
A point $x \in [0,1]^n$ is a KKT point of \eqref{eq:qKKT} if, for all $i \in [n]$,
\begin{itemize}
    \item if $x_i > 0$, then $\frac{\partial p}{\partial x_i} (x) \leq 0$, and
    \item if $x_i < 1$, then $\frac{\partial p}{\partial x_i} (x) \geq 0$.
\end{itemize}
\end{definition}

The \qKKT problem lies in the class \cls/, even for more general domains,\footnote{We omit the definition of a KKT point for more general domains; see, e.g., \cite{FGHS22}.} namely any non-empty compact domain given by linear inequalities~\cite{FGHS22}. This is because the problem can be solved (inefficiently) by gradient descent. The problem is guaranteed to always admit at least one rational solution with polynomially bounded bit complexity; see \cref{app:approx-to-exact} for a proof sketch. Our main result is the following theorem.

\begin{theorem}\label{thm:main-formal}
The \qKKT problem is \cls/-complete.
\end{theorem}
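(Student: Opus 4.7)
The plan is to establish both \cls/-membership and \cls/-hardness for \qKKT. For membership, I would note that KKT points exist whenever the feasible region is non-empty and compact (since the global minimum is one), that KKT-ness is polynomial-time checkable on rational inputs (a sanity check sketched in \cref{app:approx-to-exact}), and that the problem lies in \cls/ because projected gradient descent is a continuous local search procedure whose stationary points are precisely the KKT points. This inclusion, already essentially contained in \cite{FGHS22} for polynomial KKT over compact linearly-constrained domains, transfers directly. The bulk of the work is the hardness direction, which I would carry out in the two reductions sketched in the technical overview.

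For Step 1, I would reduce from the problem of computing a KKT point of a \emph{linear arithmetic circuit} -- a circuit whose gates are truncated linear gates of the form $x_k := \trunc(a x_i + b x_j + c)$ -- to \qKKT. Given such a circuit $\circuit$ of depth $m$ computing $f : [0,1]^2 \to \reals$, construct
$$p(x,z) = \sum_{k} \delta^{d(k)} \Bigl[ (x_k + z_k^+ - z_k^- - a_k x_{i_k} - b_k x_{j_k} - c_k)^2 + 2 z_k^+ z_k^- + 2 z_k^+(1-x_k) + 2 z_k^- x_k \Bigr] + \delta^{m+1} x_{\text{out}},$$
summed over gates $k$, where $d(k)$ is the depth of gate $k$, all variables lie in $[0,1]$, and $\delta>0$ is an inverse-exponential parameter chosen small enough to control downstream errors. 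The Backpropagation Lemma of the overview then says that any KKT point of $p$ produces input values $(x_1,x_2)$ whose partial derivatives lie in $\delta^m \cdot \conv\{\nabla \tilde f(x_1,x_2) : \tilde f \text{ from a small perturbation of } \circuit\}$. The KKT conditions for $x_1,x_2 \in [0,1]$ then force $(x_1,x_2)$ to be a KKT point of some small perturbation of $\circuit$. I would prove the Backpropagation Lemma by induction on depth: for each gate, the auxiliary $z_k^+, z_k^-$ variables absorb truncation slack, so the residual signal seen by the gate's input variables matches the chain rule applied to a specific perturbation $\pi_k$ induced by the lower-order error terms already flowing back.

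For Step 2, I would reduce a version of the 2D smooth-$\eps$-KKT problem -- known to be \cls/-complete by \cite{FGHS22} -- to the perturbation-robust KKT problem for linear arithmetic circuits needed as the source of Step 1. Given a smooth $f : [0,1]^2 \to \reals$ represented by a general arithmetic circuit, I would place an exponentially fine grid $G$ over $[0,1]^2$ and, at each grid point $y$, construct a \textbf{mesa} $m(x,y)$: an affine function of $x$ with gradient equal to \emph{half} the gradient of $f$ at $y$ inside a small square centered at $y$, surrounded by steep linear walls that cut the value sharply below outside that square. The candidate output function is $F(x) = \max_{y \in G} m(x,y)$. The halved-gradient trick, combined with a case analysis on which mesa attains the max at a putative KKT point, guarantees that every KKT point of $F$ -- even after arbitrary but small additive perturbations of the five linear pieces of each mesa -- lies near an approximate KKT point of $f$. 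This is the content of the mesa robustness lemma, and I would prove it by arguing that a piece with non-negligible gradient is always dominated by a neighboring mesa, so only pieces with tiny gradient can surface at the max.

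The main obstacle, and the technically heaviest part, is Step 2(b): realizing the mesa construction inside a linear arithmetic circuit in a way that is \emph{robust} to the gate-level perturbations allowed by Step 1. I would partition $G$ into four sub-grids $S_1,\ldots,S_4$ on a coarsened grid and write $F(x) = \max_{i} g_i(x)$ with $g_i(x) = \max_{y \in S_i} m(x,y)$. To evaluate each $g_i$ succinctly I would perform bit extraction on $x$ to identify the constantly many mesas in $S_i$ whose support can cover $x$; since bit extraction is discontinuous, I would evaluate at $12$ nearby shifted inputs and average, ensuring that at most two extractions fail and that whenever an extraction in $S_i$ fails, $g_i(x) \le 1/6$, while for every $x$ some $g_j$ succeeds and is $\ge 1/3$, so the outer $\max$ completely masks the failure and no spurious gradient reaches the output. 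A second gadget is needed to multiply the continuous coordinate $x_i$ by a gradient coefficient that is only available in binary from the Boolean mesa description: I would encode gradients in binary and use a continuous-times-binary multiplication gadget built from truncated linear gates. A careful gate-by-gate perturbation accounting would confirm that Boolean evaluation and binary decoding remain exact, that bit extraction only slightly enlarges failure regions, and that the only surviving perturbations are the small additive shifts on the five pieces of each mesa -- exactly what the Step 2(a) robustness analysis tolerates. Composing the three reductions yields \cls/-hardness of \qKKT, completing the proof.
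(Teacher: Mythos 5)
Your proposal is correct and follows essentially the same route as the paper: membership via gradient descent, then a reduction from KKT points of truncated-linear-gate circuits to \qKKT using exponentially weighted squared gate terms with slack variables $z^+,z^-$ and a perturbation-aware Backpropagation Lemma, followed by a reduction from the 2D smooth KKT problem of \cite{FGHS22} via the halved-gradient mesa construction, the four-subgrid averaging scheme with the $1/6$ versus $1/3$ masking argument, and the continuous-times-binary multiplication gadget. The only cosmetic differences are indexing gate weights by depth rather than by a topological order and omitting the scaling constant $K$ on the slack variables, neither of which affects the argument.
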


The problem remains \cls/-complete even if we only ask for an $\eps$-KKT point, where $\eps > 0$ is allowed to be exponentially small (i.e., is given in binary). Indeed, by standard arguments, finding an exact solution reduces to finding an approximate solution; see \cref{app:approx-to-exact}. For the definition of $\eps$-KKT points, we just replace ``$\leq 0$'' and ``$\geq 0$'' by ``$\leq \eps$'' and ``$\geq -\eps$'' (respectively) in the definition of \qKKT.

We prove \cref{thm:main-formal} by reducing from the \linearKKT problem, which we introduce below. In \cref{sec:lacqp} we reduce from \linearKKT to \qKKT, and in \cref{sec:linear-KKT-hardness,sec:circuit-implementation} we show that \linearKKT is \cls/-hard.

\subsection{KKT points of linear arithmetic circuits}

A linear arithmetic circuit is a circuit that consists of gates implementing piecewise linear operations. As a result, the function represented by a linear circuit is a piecewise linear function. Such circuits can be evaluated in polynomial time~\cite{FGHS22}.

In this paper, we consider linear arithmetic circuits that consist of a single\footnote{Of course, various other types of gates can be simulated using truncated linear gates and we will use this later in the paper.} type of gate: truncated linear gates. A truncated linear gate is defined by rational parameters $a,b,c \in \rationals$. The gate takes as input two variables $x_i, x_j$ of the circuit and outputs $x_k := \trunc(ax_i + bx_j + c)$, where $\trunc: \reals \to [0,1]$ denotes truncation (i.e., projection) to the $[0,1]$ interval.

\paragraph*{Generalized gradients.}
Let $\circuit$ be a linear arithmetic circuit with $n$ inputs and one output. We let $f: \mathbb{R}^n \to \mathbb{R}$ denote the function computed by the circuit $\circuit$. This is a piecewise linear function that is almost everywhere differentiable. The generalized gradient of $f$ at point $y$ can be defined as
\begin{equation*}
\begin{split}
\partial f(y) := \conv\Big\{\lim_{t \to \infty} \nabla f(y_t) : \text{$(y_t)_t$ converging to $y$ such that} &\text{ $f$ is differentiable at $y_t$}\\
&\text{and $\nabla f(y_t)$ also converges}\Big\}.
\end{split}
\end{equation*}

For our purposes we have to introduce a new more general notion of generalized gradient \emph{of a circuit}. Let $\circuit$ be a linear arithmetic circuit consisting of $m$ truncated linear gates. For any $\pi = (\pi_i)_{i \in [m]} \in \mathbb{R}^{m}$, we let $\circuit^\pi$ denote the circuit $\circuit$ perturbed by $\pi$, namely, for each $i \in [m]$ the $i$th gate $x_i := \trunc (a x_j + b x_k + c)$ is replaced by $x_i := \trunc (a x_j + b x_k + c + \pi_i)$. We let $f^{\pi}: \mathbb{R}^n \to \mathbb{R}$ denote the function represented by the perturbed circuit $\circuit^\pi$.

For any $\delta > 0$ and any such circuit $\circuit$, the $\delta$-generalized circuit gradient of $\circuit$ at point $y \in \mathbb{R}^n$ is defined as
$$\widetilde{\partial}_\delta \circuit(y) := \conv\big\{\nabla f^\pi(y): \pi \in [-\delta,\delta]^m \text{ such that $f^\pi$ is differentiable at $y$}\big\}.$$
It can be shown that $\partial f(y) \subseteq \widetilde{\partial}_\delta \circuit(y)$ for all $\delta > 0$. Although it is tempting to think that $\widetilde{\partial}_\delta \circuit(y) \to \partial f(y)$ as $\delta \to 0$, this is not the case. Indeed, \cref{ex:backprop-counter} together with the Backpropagation Lemma (\cref{lem:backpropagation}) provide a counter-example.

We can now define the intermediate computational problem which will act as a bridge between existing \cls/-hard problems and \qKKT.

\begin{definition}
The \linearKKT problem is defined as follows. We are given $\eps, \delta > 0$ and a linear arithmetic circuit $\circuit$ with two inputs and one output, and consisting only of truncated linear gates. The goal is to find a point $y \in [0,1]^2$ that satisfies the $\eps$-KKT conditions with respect to the $\delta$-generalized circuit gradient of $\circuit$, i.e., such that there exists $u \in \widetilde{\partial}_\delta \circuit(y)$ satisfying
\begin{itemize}
\item if $y_i > 0$, then $u_i \leq \eps$
\item if $y_i < 1$, then $u_i \geq - \eps$
\end{itemize}
for $i=1,2$.
\end{definition}

We note that it is not clear whether \linearKKT lies in \tfnp/, because it is not clear whether we can efficiently check if some given $y$ is a solution. Nevertheless, we establish in \cref{sec:linear-KKT-hardness} that this problem is \cls/-hard, which is all we require from this intermediate problem.

\section{Reduction from \linearKKT to \qKKT}\label{sec:lacqp}

The main result of this section is the following.

\begin{proposition}\label{prop:linear-to-quadratic}
There is a polynomial-time reduction from \linearKKT to \qKKT.
\end{proposition}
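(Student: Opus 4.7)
The plan is to construct a quadratic polynomial $p$ whose variables consist of the two inputs $x_1, x_2$, one variable $x_k$ per gate (for its output), and two auxiliary slack variables $z_k^+, z_k^-$ per truncated linear gate. For a gate $x_k := \trunc(a x_i + b x_j + c)$ at topological depth $d_k$ in $\circuit$, include a weighted block
\[
\delta^{d_k}\bigl[(x_k + z_k^+ - z_k^- - a x_i - b x_j - c)^2 + 2 z_k^+ z_k^- + 2 z_k^+ (1-x_k) + 2 z_k^- x_k\bigr],
\]
and add a single driving term $\delta^{m+1} x_{\mathrm{out}}$, where $x_{\mathrm{out}}$ is the output gate of $\circuit$ and $m$ is its number of gates. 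All variables are constrained to $[0,1]$; this is the \qKKT instance produced by the reduction. The construction parameter $\delta > 0$ is chosen inverse-exponentially in the size of the circuit and in the input \linearKKT parameters $\eps_{\mathrm{L}}, \delta_{\mathrm{L}}$; a choice of the form $\delta = 2^{-\mathrm{poly}}$ keeps the bit complexity polynomial since $\delta$ appears only as an integer power.

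First I would verify that each block correctly simulates its gate. The KKT conditions on $z_k^+, z_k^- \in [0,1]$ force these slack variables to absorb the excess of $a x_i + b x_j + c$ outside $[0,1]$: $z_k^+$ handles the part above $1$ and $z_k^-$ the part below $0$, up to an additive error of $O(\delta)$ coming from downstream gates that reuse $x_k$. Feeding this into the KKT condition on $x_k$ shows that $x_k = \trunc(a x_i + b x_j + c + \pi_k)$ for some perturbation $\pi_k$ of size $O(\delta \cdot \mathrm{poly}(\circuit))$. Thus the configuration $(x_1, x_2, \ldots)$ at any KKT point of $p$ is the honest evaluation of a perturbed circuit $\circuit^\pi$ at $(x_1,x_2)$, with $\pi \in [-\delta_{\mathrm{L}}, \delta_{\mathrm{L}}]^m$ once $\delta$ is small enough.

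Next I would invoke the Backpropagation Lemma sketched in the technical overview to turn the seed KKT residual at $x_{\mathrm{out}}$ into a gradient signal at the inputs. The $\delta^{m+1}$ contribution from the driving term in $\partial p/\partial x_{\mathrm{out}}$ must be cancelled by chain-rule residuals propagated from deeper gates. A reverse induction over the topological order of $\circuit$ shows that at each gate the slack variables kill the upstream signal exactly when truncation is active, and otherwise the KKT equation on $x_k$ pulls back the downstream residual through the linear coefficients $a, b$. The payoff is the inclusion
\[
\Bigl(\tfrac{\partial p}{\partial x_1}, \tfrac{\partial p}{\partial x_2}\Bigr) \in \delta^{m+1}\cdot \widetilde{\partial}_{\delta_{\mathrm{L}}} \circuit(x_1,x_2)
\]
at every KKT point of $p$, since the realized $\pi$ is a valid witness of membership. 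Rescaling the QP's KKT conditions on $x_1, x_2$ by $\delta^{-(m+1)}$ then translates them directly into the $\eps_{\mathrm{L}}$-KKT conditions for $(x_1,x_2)$ with respect to $\widetilde{\partial}_{\delta_{\mathrm{L}}}\circuit$, as demanded by \linearKKT.

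The main obstacle will be the Backpropagation Lemma itself. The delicate case is a gate at which the input $a x_i + b x_j + c$ lies near a truncation boundary, or at which both $z_k^+$ and $z_k^-$ could a priori be positive; here one must argue that the slack variables pick out exactly a perturbation $\pi_k$ consistent with differentiability of $f^\pi$ at $(x_1, x_2)$, not some spurious convex combination. The exponentially-decreasing weights $\delta^{d_k}$ are essential to this: they ensure that the KKT equation on $x_k$ is dominated by its own gate's block, up to $O(\delta)$ corrections from the polynomially many gates that read $x_k$, so the chain-rule factors $a, b$ emerge exactly and not merely approximately. Once this induction is executed and the resulting $\pi$ is verified to witness membership in $\widetilde{\partial}_{\delta_{\mathrm{L}}}\circuit(x_1,x_2)$, the remaining steps — bounding accumulated errors, fixing $\delta$, and translating KKT conditions — are routine bookkeeping.
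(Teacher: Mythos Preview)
Your overall architecture matches the paper's: the same gate-by-gate quadratic blocks with slack variables, exponentially decreasing weights, a driving term on the output, and the Backpropagation Lemma as the crux. Two remarks.

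First, a minor point: your slack block omits the scaling factor the paper places on $z_k^\pm$. With $z_k^\pm \in [0,1]$ and no scaling, a gate whose pre-truncation value $a x_i + b x_j + c$ exceeds $2$ (or is below $-1$) cannot have its overflow absorbed. The paper writes $K z_i^+ - K z_i^-$ inside the square (and $2K^2 z_i^+ z_i^-$, etc.) with $K \ge \sum |a_{ij}| + |c_i|$ precisely so that the slacks can cover the full range. This is easy to fix, but without it your forward-evaluation claim fails.

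Second, and this is the real gap: your plan for the Backpropagation Lemma is to exhibit a \emph{single} perturbation $\pi$ such that $f^\pi$ is differentiable at $(x_1,x_2)$ and $(\partial p/\partial x_1,\partial p/\partial x_2)=\delta^{m+1}\nabla f^\pi(x_1,x_2)$; you explicitly frame the boundary case as needing to avoid ``some spurious convex combination.'' This cannot work. The paper's \cref{ex:backprop-counter} gives a circuit computing $f(x_1)=x_1/2$ for which the constructed QP has a KKT point with $\partial p/\partial x_1=0$. For any small perturbation $\pi$, the derivative $\partial f^\pi/\partial x_1$ near $x_1=1/2$ is one of $\{-1/2,\,1/2,\,3/2\}$, never $0$; the value $0$ is only obtainable as a genuine convex combination. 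The paper's proof (the technical \cref{lem:KKT-technical-lemma}) therefore constructs \emph{both} a perturbation vector $\pi$ \emph{and} weights $\lambda\in[0,1]^{n-2}$, and shows
\[
\tfrac{1}{\delta^{n+1}}\Bigl(\tfrac{\partial p}{\partial y_1},\tfrac{\partial p}{\partial y_2}\Bigr)=\sum_{s\in\{+1,-1\}^{n-2}}\Bigl(\prod_j \lambda_{s_j\cdot j}\Bigr)\nabla f^{s\cdot\pi}(y_1,y_2),
\]
a convex combination over all $2^{n-2}$ sign-flipped perturbations. The $\lambda_i$ arises exactly at gates near a truncation boundary, where the backpropagated factor is $\lambda_i\cdot a_{ik}$ with $\lambda_i$ strictly between $0$ and $1$; this is not a single-$\pi$ phenomenon. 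Your reverse induction is the right skeleton, but the boundary case must produce a convex combination, not avoid one.
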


The remainder of this section proves this result. We begin with the detailed construction of the quadratic polynomial and a statement of the Backpropagation Lemma (\cref{lem:backpropagation}), and explain why it implies \cref{prop:linear-to-quadratic}. Then, we prove some simple properties of the construction, before the technical culmination of this section, namely the proof of the Backpropagation Lemma.

\subsection{Construction and Backpropagation Lemma}\label{sec:qkkt-construction}

Let $\circuit$ be a linear arithmetic circuit that has two inputs and one output, and that consists only of truncated linear gates.

Let $n$ denote the number of variables in the circuit $\circuit$. We use $x_i$ to denote the $i$th variable in the circuit, and assume that $x_1, \dots, x_n$ are ordered such that the gate computing $x_i$ uses inputs $x_{\ell(i)}, x_{r(i)}$ with $\ell(i) < i$ and $r(i) < i$. In particular, $x_1$ and $x_2$ are the input variables, and $x_n$ is the output variable. For every $i \in [n] \setminus \{1,2\}$, the $i$th gate of $\circuit$ is the gate computing $x_i$. It will be more convenient to write the $i$th gate's function $x_i = \trunc(a_i x_{\ell(i)} + b_i x_{r(i)} + c_i)$ as $x_i = \trunc(\sum_{j=1}^{i-1} a_{i j} x_j + c_i)$, where
$$a_{ij} = \left\{ \begin{tabular}{cl}
$a_i$ &if $j = \ell(i)$\\
$b_i$ &if $j = r(i)$\\
$0$ &otherwise
\end{tabular} \right.$$
Let also $K \geq 1$ be such that $K \geq \max_{i \in [n] \setminus \{1,2\}} (\sum_{j=1}^{i-1} |a_{ij}| + |c_i|)$.

We now construct a polynomial $p$ on $n + 2(n-2) = 3n - 4$ variables. In more detail, the polynomial will have the following variables:
\begin{itemize}
    \item For each $i \in [n]$, a variable $y_i$, corresponding to each variable $x_i$ of $\circuit$.
    \item For each $i \in [n] \setminus \{1,2\}$, two auxiliary variables $z_i^+$ and $z_i^-$ to help with the implementation of the $i$th gate, which computes $x_i$.
\end{itemize}
For each gate $i \in [n] \setminus \{1,2\}$ we construct a polynomial $q_i$ on variables $y=(y_1,\ldots,y_n)$, $z=(z_3^+,z_3^-,\ldots,z_n^+,z_n^-)$
$$q_i(y,z) := \left(y_i + K z_i^+ - K z_i^- - \sum_{j=1}^{i-1} a_{i j} y_j - c_i\right)^2 +2K^2 z_i^+ z_i^- + 2Kz_i^+(1-y_i) +2Kz_i^-y_i.$$
For a given $\delta \in (0,1)$, the final polynomial $p$ is then constructed as follows
$$p(y,z) := \delta^{n+1} y_n + \sum_{i=3}^n \delta^i q_i(y,z).$$

\paragraph*{\bf \qKKT instance.}
The instance of \qKKT we consider is thus
\begin{equation}\label{eq:KKT-construction}
\begin{split}
\min \quad &p(y,z) \\
\text{ s.t.} \quad  
& (y,z) \in [0,1]^{3n-4}
\end{split}
\end{equation}

We are now ready to state the main technical lemma of this section.

\begin{lemma}[Backpropagation Lemma]\label{lem:backpropagation}
Let $(y,z)$ be a KKT point of the constructed QP \eqref{eq:KKT-construction}, for some $\delta \in (0,1/16K^2)$. Then we have
$$\frac{1}{\delta^{n+1}} \cdot \left(\frac{\partial p}{\partial y_1}(y,z), \frac{\partial p}{\partial y_2}(y,z)\right) \in \widetilde{\partial}_{\delta'} \circuit(y_1,y_2)$$
where $\delta' = 8K^2\delta$.
\end{lemma}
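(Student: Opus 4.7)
Define $A_i := \sum_{j<i} a_{ij}y_j + c_i$, $F_i := y_i - A_i$, and $E_i := F_i + K(z_i^+ - z_i^-)$, together with the scaled signals $R_j := -2E_j/\delta^{n+1-j}$ and $\tilde R_j := -2F_j/\delta^{n+1-j}$; here $E_i$ is the quantity whose square appears in $q_i$. The plan is to (i) classify each gate's KKT regime and obtain a clean recursion between $\tilde R_i$ and $\{R_j\}_{j > i}$, (ii) bound every $|R_j|$ by downward induction, and (iii) realize $(R_1, R_2)$ as a convex combination of backpropagated gradients of perturbed circuits $\circuit^\pi$ with $\pi \in [-\delta',\delta']^{n-2}$.

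For step (i), direct computation gives $\partial q_i/\partial y_i = 2F_i$ (the $z$-contributions cancel), $\partial q_i/\partial z_i^+ = 2K(E_i + Kz_i^- + 1 - y_i)$, and $\partial q_i/\partial z_i^- = 2K(-E_i + Kz_i^+ + y_i)$. The sum of the two $z$-derivatives equals $2K(Kz_i^+ + Kz_i^- + 1) > 0$, so at most one of $z_i^+, z_i^-$ is strictly positive at any KKT point, and each gate therefore lies in exactly one of three regimes: (A) $z_i^+ = z_i^- = 0$, $A_i \in [0,1]$, $E_i = F_i$; (B) $Kz_i^+ = A_i - 1 \geq 0$, $z_i^- = 0$, $E_i = y_i - 1$; or (C) the symmetric version with $E_i = y_i$ and $A_i \leq 0$. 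Across these three regimes, $E_i - F_i \in \{0, A_i - 1, A_i\}$, which shows $E_i$ and $F_i$ have the same sign with $|E_i| \leq |F_i|$. The KKT condition on an interior $y_i$ simplifies to the recursion $\tilde R_i = U_i'$, where $U_i' := \sum_{j > i}a_{ji}R_j + \mathbb{1}[i = n]$, with the one-sided inequalities $\tilde R_i \leq U_i'$ at $y_i = 0$ and $\tilde R_i \geq U_i'$ at $y_i = 1$. For step (ii), combining this with $|a_{ji}| \leq K$ and $|R_i| \leq |\tilde R_i|$ yields $|R_i| \leq K\sum_{j > i}|R_j| + \mathbb{1}[i = n]$, and hence $|R_i| \leq (K+1)^{n-i}$ by induction from $|R_n| \leq 1$; the hypothesis $\delta < 1/(16K^2)$ then ensures $|E_i| \leq (K+1)^{n-i}\delta^{n+1-i}/2$, small enough that the perturbations constructed below fit inside $[-\delta',\delta']$.

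For step (iii), at each gate $i$ I would fix an ``open'' perturbation $\pi_i^{\mathrm{o}} \in [-\delta',\delta']$ placing $A_i + \pi_i^{\mathrm{o}}$ strictly inside $(0,1)$ (so $\tau_i^\pi = 1$) and a ``closed'' perturbation $\pi_i^{\mathrm{c}} \in [-\delta',\delta']$ keeping $A_i + \pi_i^{\mathrm{c}}$ strictly outside $[0,1]$ (so $\tau_i^\pi = 0$), both of which are realizable within $[-\delta',\delta']$ by the three-regime analysis. I would then take the convex combination over all $\pi \in \prod_i\{\pi_i^{\mathrm{o}}, \pi_i^{\mathrm{c}}\}$ with weights $\prod_i w_i(\pi_i)$, where $w_i(\pi_i^{\mathrm{o}}) := R_i/U_i'$ and $w_i(\pi_i^{\mathrm{c}}) := 1 - R_i/U_i'$ (adopting the convention $w_i(\pi_i^{\mathrm{o}}) := 1$ in the indeterminate case $R_i = U_i' = 0$). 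A case analysis using the (in)equalities of step (i) and the sign agreement between $R_i$ and $U_i'$ shows $R_i/U_i' \in [0,1]$ in every regime, simplifying to $E_i/F_i$ at interior $y_i$. The expected backpropagated gradient at input $k \in \{1,2\}$ then equals the weighted path sum $\sum_{P: k \to n} a(P) \prod_{i \in P \setminus \{k\}} w_i$, which by the identity $R_j = w_j U_j'$ and the recursion $U_j' = \sum_{j' > j} a_{j'j} R_{j'} + \mathbb{1}[j = n]$ coincides with $R_k$ via a short induction on the gate index, yielding $(R_1, R_2) = (1/\delta^{n+1})(\partial p/\partial y_1, \partial p/\partial y_2) \in \widetilde{\partial}_{\delta'}\circuit(y_1, y_2)$.

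The main technical obstacle is ensuring that switching between $\pi_i^{\mathrm{o}}$ and $\pi_i^{\mathrm{c}}$ at gate $i$ does not push a downstream gate $j$ across its own truncation threshold in the perturbed circuit $\circuit^\pi$, which would invalidate the path-sum identity because $\tau_j^\pi$ would then disagree with the regime assigned to $j$ at the KKT point. The perturbed pre-activations differ from their KKT counterparts by at most a factor depending only on $K$ and the depth, and the bound $\delta < 1/(16K^2)$ is calibrated so that this shift stays well below the truncation-boundary gap of every gate that is not already borderline; borderline gates are exactly those for which the convex combination splits weight between $\pi_i^{\mathrm{o}}$ and $\pi_i^{\mathrm{c}}$, so they contribute consistently to the identity on each side.
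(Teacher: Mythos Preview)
Your outline follows the same architecture as the paper's proof. Your weights $w_i = R_i/U_i'$ coincide with the paper's $\lambda_i$ (this can be checked from the paper's defining equation for $\lambda_i$ together with the auxiliary-variable lemma, which gives $\trunc(A_i) - y_i = -E_i$), and your product-form convex combination over $2^{n-2}$ perturbation choices is exactly the paper's sum over sign vectors $s \in \{+1,-1\}^{n-2}$. Your path-sum identity is an unrolled version of the paper's backward induction.

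There is, however, a genuine gap at precisely the ``main technical obstacle'' you flag. You construct $\pi_i^{\mathrm{o}}, \pi_i^{\mathrm{c}}$ so that $A_i + \pi_i^{\mathrm{o}} \in (0,1)$ and $A_i + \pi_i^{\mathrm{c}} \notin [0,1]$, but in the perturbed circuit the pre-activation at gate $i$ is $A_i^\pi = \sum_{j<i} a_{ij} x_j^\pi + c_i$, not the KKT value $A_i$. The discrepancy $|A_i^\pi - A_i|$ propagates forward with Lipschitz factor $K$ at each gate, so if all upstream perturbations have magnitude up to $\delta' = 8K^2\delta$, the shift at a deep gate is of order $K^{i}\delta'$. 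This is \emph{not} controlled by $\delta < 1/(16K^2)$ alone. Your claim that the shift ``stays well below the truncation-boundary gap of every gate that is not already borderline'' also fails: a gate with $w_i \in \{0,1\}$ can still have $A_i$ within $O((2K\delta)^{n+1-i})$ of a threshold (this is exactly when the coarse three-case split $\lambda_i \in \{0,1\}$ kicks in), and that gap is far smaller than the propagated shift.

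The paper fixes this by choosing perturbations with \emph{geometrically decreasing} magnitudes: $|\pi_i| \le 4K\eps_{i-1}$ with $\eps_{i-1} = (2K\delta)^{n-i+1}$, and it proves by forward induction that the perturbed values stay in boxes of radius $\eps_i$ around $y_i$. Because $\eps_{i-1} = (2K\delta)\eps_i$, the Lipschitz blow-up by $K$ is absorbed at every step, and the open/closed status of gate $i$ in $\circuit^\pi$ is determined solely by the choice at gate $i$, independently of upstream choices. This is precisely the independence your path-sum factorization needs. Once you add this scaling of the $\pi_i$'s (and the accompanying nested-box induction), your argument and the paper's coincide.
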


Let us see how \cref{prop:linear-to-quadratic} follows from this lemma. Let $\eps'$, $\delta'$, and $\circuit$ be the inputs to a \linearKKT instance. We construct the polynomial $p$ described above with $\delta := \min\{\delta'/8K^2,1/32K^2\}$. Clearly, this can be done in polynomial time. Now, consider any KKT point $(y,z)$ of the resulting QP \eqref{eq:KKT-construction}. We claim that $(y_1,y_2)$ must be a solution to the original \linearKKT instance. Indeed, let
$$u := \frac{1}{\delta^{n+1}} \cdot \left(\frac{\partial p}{\partial y_1}(y,z), \frac{\partial p}{\partial y_2}(y,z)\right).$$
By the Backpropagation Lemma, we have that $u \in \widetilde{\partial}_{\delta'} \circuit(y_1,y_2)$. Furthermore, since $(y,z)$ is a KKT point of \eqref{eq:KKT-construction}, we in particular have for $i = 1,2$
\begin{itemize}
\item if $y_i > 0$, then $\frac{\partial p}{\partial y_i}(y,z) \leq 0$, and thus $u_i \leq 0$
\item if $y_i < 1$, then $\frac{\partial p}{\partial y_i}(y,z) \geq 0$, and thus $u_i \geq 0$.
\end{itemize}
In other words, $(y_1,y_2)$ satisfies the KKT conditions (and thus, in particular, the $\eps'$-KKT conditions) with respect to the $\delta'$-generalized circuit gradient of $\circuit$.

Before proceeding with the proof of the Backpropagation Lemma, we present an example showing that a stronger version of the lemma -- where we ask for the generalized gradient of $f$ at $(y_1,y_2)$ (or even of some point in the vicinity) to be zero -- fails.

\begin{example}\label{ex:backprop-counter}
Consider the circuit $\circuit$ that has one single input $x_1$ and computes $x_2 := \trunc(2x_1)$, $x_3 := \trunc(x_1-1/2)$, and outputs $x_4 := \trunc(x_2/2+x_3-x_1/2)$. It is easy to see that this circuit computes the function $f: [0,1] \to [0,1], x_1 \mapsto x_1/2$. Thus, the only KKT point of $f$ is at $x_1 = 0$. However, it can be checked that if we construct the polynomial $p$ as described above from $\circuit$, then, for any sufficiently small $\delta > 0$, the QP \eqref{eq:KKT-construction} will have a KKT point at $y_1 = 1/2 + \delta^2/4$ (and where we have $y_2 = 1$, $y_3 = 0$, and $y_4 = 1/4 -\delta^2/8 - \delta/2$). In particular, this means that the backpropagation computes gradient $0$ at that point, even though the actual gradient of $f$ is always $1/2$. As a result, no general backpropagation result can be proved for this kind of circuit without taking into account perturbed versions of the circuit.
\end{example}

\subsection{Properties of KKT points}

In this section we prove some simple properties that are satisfied by any KKT point of the \qKKT instance \eqref{eq:KKT-construction}. Recall that a point $(y,z) \in [0,1]^{3n-4}$ is a KKT point of \eqref{eq:KKT-construction} if, for all $i \in [n]$,
\begin{itemize}
    \item if $y_i > 0$, then $\frac{\partial p}{\partial y_i} (y,z) \leq 0$, and
    \item if $y_i < 1$, then $\frac{\partial p}{\partial y_i} (y,z) \geq 0$,
\end{itemize}
and similarly for the other variables $z_i^+$ and $z_i^-$ for all $i \in [n] \setminus \{1,2\}$.

\paragraph*{\bf Truncation.}
The following lemma \ref{lem:KKT-aux-variables} states that, at any KKT point, the auxiliary variables $z$ enforce truncation, in a certain sense.

\begin{lemma}\label{lem:KKT-aux-variables}
Let $(y,z)$ be a KKT point of \eqref{eq:KKT-construction}. Then for all $i \in [n] \setminus \{1,2\}$
$$\trunc\left(\sum_{j=1}^{i-1} a_{i j} y_j + c_i\right) = \sum_{j=1}^{i-1} a_{i j} y_j + c_i - Kz_i^+ + Kz_i^-.$$
\end{lemma}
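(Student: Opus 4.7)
Fix $i \in [n] \setminus \{1,2\}$ and write $s_i := \sum_{j=1}^{i-1} a_{ij} y_j + c_i$ for the pre-truncation value of gate $i$. Since $y_j \in [0,1]$ for all $j$ and $K \geq \sum_{j=1}^{i-1}|a_{ij}| + |c_i|$, we immediately get the crucial bound $|s_i| \leq K$, which guarantees that the auxiliary variables will never be forced to hit their upper bound of $1$. The goal is then to determine the values of $z_i^+$ and $z_i^-$ at a KKT point and check the claim by case analysis on $s_i$.

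First, I would compute the two partial derivatives. Since $z_i^+$ and $z_i^-$ appear only inside $q_i$ (and the outer $\delta^{n+1} y_n$ term does not involve them), a direct calculation gives
\begin{equation*}
\frac{\partial p}{\partial z_i^+} = 2K\delta^i\bigl(Kz_i^+ - s_i + 1\bigr), \qquad \frac{\partial p}{\partial z_i^-} = 2K\delta^i\bigl(Kz_i^- + s_i\bigr),
\end{equation*}
after cancellation (the cross terms from expanding the square are exactly cancelled by the $2Kz_i^+(1-y_i)$, $2Kz_i^-y_i$ and $2K^2 z_i^+z_i^-$ pieces). The key structural observation is that $p$, viewed as a function of $z_i^+$ alone with everything else fixed, is a convex quadratic with positive leading coefficient $K^2 \delta^i$, and similarly for $z_i^-$. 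Hence the KKT conditions for $(y,z)$ on the box $[0,1]^{3n-4}$ imply, specialized to these two coordinates, that $z_i^+$ and $z_i^-$ are the global minimizers of $p$ over $[0,1]$ with all other variables held fixed.

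Combined with the bound $|s_i| \leq K$, which ensures the unconstrained minimizers $(s_i-1)/K$ and $-s_i/K$ lie strictly below $1$, this projection characterization yields the explicit formulas
\begin{equation*}
z_i^+ = \max\{0,(s_i-1)/K\}, \qquad z_i^- = \max\{0,-s_i/K\}.
\end{equation*}
A three-way case analysis on $s_i$ now finishes the proof. If $s_i \in [0,1]$ then $z_i^+ = z_i^- = 0$ and $s_i - Kz_i^+ + Kz_i^- = s_i = \trunc(s_i)$; if $s_i > 1$ then $z_i^+ = (s_i-1)/K$ and $z_i^- = 0$, giving $s_i - Kz_i^+ + Kz_i^- = 1 = \trunc(s_i)$; if $s_i < 0$ then $z_i^+ = 0$ and $z_i^- = -s_i/K$, giving $s_i - Kz_i^+ + Kz_i^- = 0 = \trunc(s_i)$.

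The computation is mostly routine; the only subtle point is making sure that the upper bound constraint $z_i^\pm \leq 1$ is never active, which is precisely what the assumption $K \geq \sum|a_{ij}| + |c_i|$ was engineered to guarantee. Everything else follows from convexity of $p$ in each auxiliary coordinate together with the explicit derivative formulas.
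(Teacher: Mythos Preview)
Your proof is correct and follows essentially the same approach as the paper. Both arguments compute the same simplified partial derivatives $\frac{\partial p}{\partial z_i^+} = 2K\delta^i(Kz_i^+ - s_i + 1)$ and $\frac{\partial p}{\partial z_i^-} = 2K\delta^i(Kz_i^- + s_i)$, then use the box KKT conditions to pin down $z_i^\pm$; you package this via convexity-plus-projection while the paper spells out the case analysis directly, but the content is identical and both arrive at $Kz_i^+ = \max\{0,s_i-1\}$, $Kz_i^- = \max\{0,-s_i\}$. One tiny imprecision: your claim that both unconstrained minimizers lie \emph{strictly} below $1$ need not hold for $-s_i/K$ (it could equal $1$ when $s_i = -K$), but since $\trunc(1)=1=\max\{0,1\}$ the projection formula still gives the right answer, so this does not affect the argument.
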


\begin{proof}
We show the following stronger fact, namely that
\begin{equation}\label{eq:KKT-truncation+}
K z_i^+ = \max\left\{0, \left(\sum_{j=1}^{i-1} a_{i j} y_j + c_i\right) - \trunc\left(\sum_{j=1}^{i-1} a_{i j} y_j + c_i\right)\right\}
\end{equation}
and
\begin{equation}\label{eq:KKT-truncation-}
K z_i^- = \max\left\{0, \trunc\left(\sum_{j=1}^{i-1} a_{i j} y_j + c_i\right) - \left(\sum_{j=1}^{i-1} a_{i j} y_j + c_i\right)\right\}.
\end{equation}
In order to prove \eqref{eq:KKT-truncation+}, note that the variable $z_i^+$ only appears in $q_i$, and thus $\frac{\partial p}{\partial z_i^+} = \delta^i \frac{\partial q_i}{\partial z_i^+}$ and
\begin{equation*}
\begin{split}
\frac{\partial q_i}{\partial z_i^+}(y,z) &= 2 K \left(y_i + K z_i^+ - K z_i^- - \sum_{j=1}^{i-1} a_{i j} y_j - c_i\right) +2K^2 z_i^- + 2K(1-y_i)\\
&= 2 K \left(1 + K z_i^+ - \sum_{j=1}^{i-1} a_{i j} y_j - c_i\right).
\end{split}
\end{equation*}
We now consider two cases. If $\sum_{j=1}^{i-1} a_{i j} y_j + c_i \leq \trunc(\sum_{j=1}^{i-1} a_{i j} y_j + c_i)$, then it must be that $\sum_{j=1}^{i-1} a_{i j} y_j + c_i \leq 1$. As a result, $\frac{\partial p}{\partial z_i^+}(y,z) = \delta^i \frac{\partial q_i}{\partial z_i^+}(y,z) \geq \delta^i \cdot 2K^2 z_i^+$. By the KKT conditions it follows that $z_i^+ = 0$. Indeed, if $z_i^+ > 0$, then we would have $\frac{\partial p}{\partial z_i^+}(y,z) > 0$, which contradicts the KKT conditions.

If, on the other hand, $\sum_{j=1}^{i-1} a_{i j} y_j + c_i > \trunc(\sum_{j=1}^{i-1} a_{i j} y_j + c_i)$, then it must be that $\sum_{j=1}^{i-1} a_{i j} y_j + c_i > 1$. As a result, $\frac{\partial p}{\partial z_i^+}(y,z) = \delta^i \frac{\partial q_i}{\partial z_i^+}(y,z) < \delta^i \cdot 2K^2 z_i^+$. In particular, we cannot have $z_i^+ = 0$, since that would imply $\frac{\partial p}{\partial z_i^+}(y,z) < 0$, which is not allowed by the KKT conditions at $z_i^+ = 0$. We also cannot have $z_i^+ = 1$. Indeed, by the KKT conditions, that would imply that $\frac{\partial p}{\partial z_i^+}(y,z) \leq 0$, which translates to
$$\delta^i \cdot 2 K \left(1 + K - \sum_{j=1}^{i-1} a_{i j} y_j - c_i\right) \leq 0$$
which is impossible, since $K \geq 1$ was chosen such that $K \geq \sum_{j=1}^{i-1} |a_{ij}| + |c_i| \geq \sum_{j=1}^{i-1} a_{i j} y_j + c_i$. As a result, we must have $z_i^+ \in (0,1)$, which implies that the KKT condition is $\frac{\partial p}{\partial z_i^+}(y,z) = 0$. This yields
$$Kz_i^+ = \sum_{j=1}^{i-1} a_{i j} y_j + c_i - 1 = \sum_{j=1}^{i-1} a_{i j} y_j + c_i - \trunc\left(\sum_{j=1}^{i-1} a_{i j} y_j + c_i\right)$$
as desired. We have thus shown that \eqref{eq:KKT-truncation+} always holds at a KKT point.

In order to prove \eqref{eq:KKT-truncation-}, we again note that $\frac{\partial p}{\partial z_i^-} = \delta^i \frac{\partial q_i}{\partial z_i^-}$ and
\begin{equation*}
\begin{split}
\frac{\partial q_i}{\partial z_i^-}(y,z) &= -2 K \left(y_i + K z_i^+ - K z_i^- - \sum_{j=1}^{i-1} a_{i j} y_j - c_i\right) +2K^2 z_i^+ + 2Ky_i\\
&= 2 K \left(K z_i^- + \sum_{j=1}^{i-1} a_{i j} y_j + c_i\right).
\end{split}
\end{equation*}
and then perform a similar case analysis.
\end{proof}

\paragraph*{\bf Approximate evaluation.} The next lemma \ref{lem:KKT-main-variables} states that the gates of the circuit are correctly simulated at a KKT point of \eqref{eq:KKT-construction}, up to some small additive error depending on the parameter $\delta$. The lemma also gives a precise expression for the value of each variable $y_i$ at a KKT point, which will be useful for the next section. In order to state this precise expression, we first have to introduce some additional notation. We define, for any $i \in [n] \setminus \{1\}$,
$$p_i(y,z) := \delta^{n+1} y_n + \sum_{\ell=i+1}^n \delta^\ell q_\ell(y,z).$$
In particular, $p_2 = p$, and $p_n(y,z) = \delta^{n+1} y_n$. We are now ready to state the lemma.

\begin{lemma}\label{lem:KKT-main-variables}
Let $(y,z)$ be a KKT point of \eqref{eq:KKT-construction}. Then for any $i \in [n] \setminus \{1,2\}$
$$y_i = \trunc\left(\sum_{j=1}^{i-1} a_{i j} y_j + c_i - \frac{1}{2 \delta^i} \cdot \frac{\partial p_i}{\partial y_i}(y,z)\right) = \trunc\left(\sum_{j=1}^{i-1} a_{i j} y_j + c_i\right) \pm (2K\delta)^{n+1-i}.$$
\end{lemma}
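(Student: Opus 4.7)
My plan is to prove the two equalities separately. The first is a direct consequence of the KKT conditions on $y_i$, while the second is an inductive bound that tracks how the per-gate evaluation error propagates backward through the circuit. The key computational ingredient, used in both stages, is Lemma~\ref{lem:KKT-aux-variables}, which lets me replace the raw linear combination $\sum_j a_{\ell j} y_j + c_i - K z_\ell^+ + K z_\ell^-$ inside the squared part of each $q_\ell$ by its truncated version.

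For the first equality, I would begin by decomposing
\[
\frac{\partial p}{\partial y_i}(y,z) = \delta^i \frac{\partial q_i}{\partial y_i}(y,z) + \frac{\partial p_i}{\partial y_i}(y,z),
\]
which is valid since $y_i$ does not appear in $q_\ell$ for $\ell < i$. A direct differentiation of $q_i$ shows that the $K z_i^+$ and $K z_i^-$ contributions arising from the squared term are cancelled exactly by those coming from $2Kz_i^+(1-y_i) + 2Kz_i^- y_i$, leaving the clean expression $\frac{\partial q_i}{\partial y_i} = 2(y_i - \sum_j a_{ij} y_j - c_i)$. Setting $\alpha := \sum_j a_{ij} y_j + c_i - \frac{1}{2\delta^i}\frac{\partial p_i}{\partial y_i}(y,z)$, this yields $\frac{\partial p}{\partial y_i}(y,z) = 2\delta^i(y_i - \alpha)$. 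A three-way case split on whether $y_i \in (0,1)$, $y_i = 0$, or $y_i = 1$, combined with the matching KKT condition, then forces $y_i = \trunc(\alpha)$ in each case.

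For the second equality, since $\trunc$ is $1$-Lipschitz it suffices to show that $\bigl|\frac{1}{2\delta^i}\frac{\partial p_i}{\partial y_i}(y,z)\bigr| \leq (2K\delta)^{n+1-i}$, which I would prove by reverse induction on $i$ from $n$ down to $3$. The base case $i = n$ is immediate from $\frac{\partial p_n}{\partial y_n} = \delta^{n+1}$ and $K \geq 1$. For $i < n$, the definition of $p_i$ gives $\frac{\partial p_i}{\partial y_i} = \sum_{\ell=i+1}^n \delta^\ell \frac{\partial q_\ell}{\partial y_i}$, and differentiating $q_\ell$ yields $\frac{\partial q_\ell}{\partial y_i} = -2a_{\ell i}(y_\ell + K z_\ell^+ - K z_\ell^- - \sum_j a_{\ell j} y_j - c_\ell)$. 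At this point Lemma~\ref{lem:KKT-aux-variables} rewrites the parenthesized quantity as $y_\ell - \trunc(\sum_j a_{\ell j} y_j + c_\ell)$, which is exactly the residual bounded by the inductive hypothesis (via the first equality applied at $\ell$). Using $|a_{\ell i}| \leq K$ and the triangle inequality reduces everything to summing a geometric series in $2K$.

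The main obstacle is arranging the constants so that the error propagates at precisely the rate $(2K\delta)^{n+1-i}$ rather than accumulating an extra factor with each level of recursion. This boils down to the geometric-series estimate $\sum_{s=1}^{N}(2K)^s \leq 2(2K)^N$, which is tight exactly because $K \geq 1$ forces the ratio to be at least $2$; any looser constant at the inductive step would compound into a $\textup{poly}(n)$ blowup that would spoil the closed-form bound and prevent the Backpropagation Lemma (whose hypothesis requires $\delta < 1/16K^2$) from kicking in downstream.
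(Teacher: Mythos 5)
Your proposal is correct and follows essentially the same route as the paper's proof: the same decomposition $\frac{\partial p}{\partial y_i} = \delta^i \frac{\partial q_i}{\partial y_i} + \frac{\partial p_i}{\partial y_i}$ with cancellation of the $z$-terms and a KKT case analysis for the first equality, and the same reverse induction bounding $\frac{1}{2\delta^i}\bigl|\frac{\partial p_i}{\partial y_i}\bigr|$ via Lemma~\ref{lem:KKT-aux-variables} and the geometric-series estimate $\sum_{s=1}^{N}(2K)^s \leq 2(2K)^N$ for the second. No gaps.
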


\begin{proof}
Let us first prove the first equality. Note that
\begin{equation*}
\begin{split}
\frac{\partial p}{\partial y_i}(y,z) &= \delta^i \frac{\partial q_i}{\partial y_i}(y,z) + \frac{\partial p_i}{\partial y_i}(y,z)\\
&= \delta^i \left( 2\left(y_i + K z_i^+ - K z_i^- - \sum_{j=1}^{i-1} a_{i j} y_j - c_i\right) - 2Kz_i^+ +2Kz_i^- \right) + \frac{\partial p_i}{\partial y_i}(y,z)\\
&= 2\delta^i \left(y_i - \sum_{j=1}^{i-1} a_{i j} y_j - c_i\right) + \frac{\partial p_i}{\partial y_i}(y,z).
\end{split}
\end{equation*}
Hence if $y_i > \sum_{j=1}^{i-1} a_{i j} y_j - c_i - \frac{1}{2 \delta^i} \frac{\partial p_i}{\partial y_i}(y,z)$, then $\frac{\partial p}{\partial y_i}(y,z) > 0$, and by the KKT conditions we must have $y_i = 0$. If, on the other hand, $y_i < \sum_{j=1}^{i-1} a_{i j} y_j - c_i - \frac{1}{2 \delta^i} \frac{\partial p_i}{\partial y_i}(y,z)$, then $\frac{\partial p}{\partial y_i}(y,z) < 0$, and by the KKT conditions we must have $y_i = 1$. Thus, in all cases we have $y_i = \trunc(\sum_{j=1}^{i-1} a_{i j} y_j - c_i - \frac{1}{2 \delta^i} \frac{\partial p_i}{\partial y_i}(y,z))$.

In order to prove the second equality, we show that $\frac{1}{2\delta^i}|\frac{\partial p_i}{\partial y_i}(y,z)| \leq (2K\delta)^{n+1-i}$ by induction. For $i = n$, we have $\frac{\partial p_n}{\partial y_n}(y,z) = \delta^{n+1}$, and thus $\frac{1}{2\delta^n}|\frac{\partial p_n}{\partial y_n}(y,z)| = \delta/2 \leq 2K\delta$. Now, assume that the statement holds for $i+1, \dots, n$. We show that it also holds for $i$. We can bound
\begin{equation*}
\begin{split}
\frac{1}{2 \delta^i} \left|\frac{\partial p_i}{\partial y_i}(y,z)\right| \leq \frac{1}{2 \delta^i} \sum_{\ell=i+1}^n \delta^\ell \left|\frac{\partial q_\ell}{\partial y_i}(y,z)\right| &\leq \frac{1}{2 \delta^i} \sum_{\ell=i+1}^n \delta^\ell \cdot 2K \cdot (2K\delta)^{n+1-\ell}\\
&= \delta^{n+1-i} \cdot K \cdot \sum_{\ell=i+1}^n (2K\delta)^{n+1-\ell}\\
&\leq \delta^{n+1-i} \cdot K \cdot 2 \cdot (2K)^{n-i} = (2K\delta)^{n+1-i}
\end{split}
\end{equation*}
where we bound $|\frac{\partial q_\ell}{\partial y_i}(y,z)|$ by
\begin{equation*}
\begin{split}
\left|\frac{\partial q_\ell}{\partial y_i}(y,z)\right| &= \left|-2 a_{\ell i}\left(y_\ell + K z_\ell^+ - K z_\ell^- - \sum_{j=1}^{\ell-1} a_{\ell j} y_j - c_\ell \right)\right|\\
&= \left|-2 a_{\ell i}\left(y_\ell - \trunc\left(\sum_{j=1}^{\ell-1} a_{\ell j} y_j - c_\ell\right) \right)\right|\\
&\leq 2K \left|y_\ell - \trunc\left(\sum_{j=1}^{\ell-1} a_{\ell j} y_j - c_\ell\right)\right|\\
&\leq 2K \cdot (2K\delta)^{n+1-\ell}
\end{split}
\end{equation*}
using \cref{lem:KKT-aux-variables}, $|a_{\ell i}| \leq K$, and the induction hypothesis for $\ell \geq i+1$.
\end{proof}

\subsection{Proof of the Backpropagation Lemma}

In this section we prove the Backpropagation Lemma. We begin by recalling some notation, as well as introducing some new notation. We let $f: \mathbb{R}^2 \to \mathbb{R}$ denote the function represented by the circuit $\circuit$. For any $\pi = (\pi_i)_{i \in [n] \setminus \{1,2\}} \in \mathbb{R}^{n-2}$, we let $\circuit^\pi$ denote the circuit $\circuit$ perturbed by $\pi$, namely, for each $i \in [n] \setminus \{1,2\}$ the $i$th gate $x_i := \trunc (\sum_{j=1}^{i-1} a_{i j} x_j + c_i)$ is replaced by $x_i := \trunc (\sum_{j=1}^{i-1} a_{i j} x_j + c_i + \pi_i)$. We let $f^{\pi}: \mathbb{R}^2 \to \mathbb{R}$ denote the function represented by the perturbed circuit $\circuit^\pi$. For any sign vector $s = (s_i)_{i \in [n] \setminus \{1,2\}} \in \{+1,-1\}^{n-2}$, we let $s \cdot \pi \in \mathbb{R}^{n-2}$ denote the coordinate-wise product of vector $s$ with vector $\pi$, i.e., $[s \cdot \pi]_i = s_i \pi_i$ for all $i \in [n] \setminus \{1,2\}$. Below we also use $\lambda_{-i}$ to denote $1 - \lambda_i$.

The Backpropagation Lemma is a consequence of the following technical lemma.

\begin{lemma}\label{lem:KKT-technical-lemma}
Let $(y,z)$ be a KKT point of QP \eqref{eq:KKT-construction}, for some $\delta \in (0,1/16K^2)$. Then there exists a perturbation vector $\pi = (\pi_i)_{i \in [n] \setminus \{1,2\}} \in \mathbb{R}^{n-2}$ satisfying
\begin{itemize}
	\item $|\pi_i| \leq 8K^2 \delta$ for all $i \in [n] \setminus \{1,2\}$
	\item for all $s \in \{+1,-1\}^{n-2}$, $f^{s \cdot \pi}$ is differentiable in a small neighborhood around $(x_1,x_2) = (y_1,y_2)$
\end{itemize}
In addition, there exists $\lambda = (\lambda_i)_{i \in [n] \setminus \{1,2\}} \in [0,1]^{n-2}$ such that for $k=1,2$
$$\frac{\partial p}{\partial y_k}(y,z) = \delta^{n+1} \sum_{s \in \{+1,-1\}^{n-2}} \left( \prod_{j=3}^{n} \lambda_{s_j \cdot j} \right) \frac{\partial f^{s \cdot \pi}}{\partial x_k} (y_1,y_2).$$
\end{lemma}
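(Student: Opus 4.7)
My plan is to read off the perturbation $\pi$ directly from \cref{lem:KKT-main-variables} and then match, coefficient by coefficient, the KKT partial derivative $\frac{\partial p}{\partial y_k}$ against a convex combination of the backpropagation gradients of the $2^{n-2}$ sign-flipped perturbed circuits.

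I will take as a first candidate $\pi_i^\ast := -\tfrac{1}{2\delta^i}\tfrac{\partial p_i}{\partial y_i}(y,z)$. By the second equality in \cref{lem:KKT-main-variables} this already satisfies $|\pi_i^\ast| \leq (2K\delta)^{n+1-i} \leq 2K\delta$, comfortably inside the required budget $8K^2\delta$, while the first equality supplies the key identity $y_i = \trunc(\alpha_i + \pi_i^\ast)$ with $\alpha_i := \sum_{j<i} a_{ij}y_j + c_i$. A straightforward induction along the topological order then shows that evaluating $\circuit^{\pi^\ast}$ on input $(y_1,y_2)$ produces $y_i$ at each internal wire. To additionally guarantee that every sign-flipped circuit $\circuit^{s\cdot\pi}$ is differentiable in a neighbourhood of $(y_1,y_2)$, I will add to $\pi_i^\ast$ a tiny generic safety shift $\eta_i$ chosen so that for every sign pattern $s \in \{+1,-1\}^{n-2}$ and every gate $i$ the resulting argument lies strictly away from $\{0,1\}$; the final $\pi_i$ will still lie in $[-8K^2\delta,8K^2\delta]$, and each gate in every perturbed circuit will land in a definite regime, either strictly active (local derivative $1$) or strictly clamped (local derivative $0$).

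The core calculation will identify both gradients as sums over directed paths in the circuit DAG. On the QP side, combining \cref{lem:KKT-aux-variables} with the formula for $q_i$ yields, for $k \in \{1,2\}$,
\[
\frac{\partial p}{\partial y_k}(y,z) = -2\sum_{i=3}^n \delta^i\, a_{ik}\,\Delta_i, \qquad \Delta_i := y_i - \trunc(\alpha_i),
\]
and the identity $\pi_i^\ast = -\tfrac{1}{2\delta^i}\tfrac{\partial p_i}{\partial y_i}$ together with $\tfrac{\partial p_i}{\partial y_i} = \delta^{n+1}[i=n] - 2\sum_{\ell > i}\delta^\ell a_{\ell i}\Delta_\ell$ will inductively expand each $\Delta_i$ into a weighted sum over downstream gate chains. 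Full unrolling rewrites $\frac{\partial p}{\partial y_k}$ as $\delta^{n+1}$ times a sum over directed paths $P\colon x_k\to x_n$ in $\circuit$, each weighted by the product of edge coefficients along $P$ times a gate-local ``fractional activation''. On the circuit side, the chain rule for $\circuit^{s\cdot\pi}$ writes $\frac{\partial f^{s\cdot\pi}}{\partial x_k}(y_1,y_2)$ as a sum over the same paths with per-gate factor the Boolean activation $\mu_i^{s\cdot\pi} \in \{0,1\}$; taking the convex combination with weights $\prod_j \lambda_{s_j \cdot j}$ will replace each $\mu_i^{s\cdot\pi}$ by its expectation under an independent product Bernoulli distribution. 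Setting $\lambda_i := \Delta_i / \pi_i^\ast$ at gates with $\pi_i^\ast \neq 0$, which lies in $[0,1]$ since $\trunc$ is $1$-Lipschitz and monotone, will align the two path sums; at gates with $\pi_i^\ast = 0$ we also have $\Delta_i = 0$, so $\lambda_i$ can be chosen arbitrarily.

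The main obstacle will be the behaviour at boundary gates: when $y_i \in \{0,1\}$ with $\alpha_i + \pi_i^\ast$ near a truncation boundary, the two sign-flipped arguments can lie on opposite sides of that boundary, and upstream sign flips may then cause intermediate values $\tilde y_j^{s\cdot\pi}$ in the perturbed circuits to drift away from $y_j$, so the ``expected activation equals $\lambda_i$'' reading does not collapse immediately into a product Bernoulli. Handling this will require using the KKT slackness at each such gate to re-absorb the drift into the product-Bernoulli structure, and checking that a single $\pi$ (with a single $\lambda$) works simultaneously for all $2^{n-2}$ sign patterns. Verifying this gate by gate, while confirming that the generic safety shifts $\eta_i$ do not disturb any of the key identities, is where I expect most of the technical effort to go.
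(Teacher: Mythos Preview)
Your choice of $\lambda_i = \Delta_i/\pi_i^\ast$ is exactly the paper's $\lambda$ (it is precisely equation~\eqref{eq:def-lambda-y}), and your path-sum unrolling is the non-inductive way of saying what the paper proves by backward induction in \cref{clm:backpropagation-induction}. The gap is in your choice of $\pi$. Taking $\pi_i = \pi_i^\ast + \eta_i$ with $\eta_i$ tiny does not make the per-gate activation $\mu_i^{s\cdot\pi}$ a function of $s_i$ alone, and it does not even make the single-gate identity $\lambda_i\,\mu_i(+1) + (1-\lambda_i)\,\mu_i(-1) = \lambda_i$ hold. Concretely: take $\alpha_i := \sum_{j<i}a_{ij}y_j+c_i$ slightly negative, say $\alpha_i = -\tfrac12|\pi_i^\ast|$, with $\pi_i^\ast<0$. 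Then $y_i = \trunc(\alpha_i+\pi_i^\ast)=0=\trunc(\alpha_i)$, so $\Delta_i=0$ and $\lambda_i=0$; yet $\alpha_i+\pi_i^\ast<0$ (clamped, $\mu_i(+1)=0$) while $\alpha_i-\pi_i^\ast = \tfrac12|\pi_i^\ast|>0$ (active, $\mu_i(-1)=1$), giving the combination $0\cdot 0 + 1\cdot 1 = 1 \neq \lambda_i$. A tiny generic $\eta_i$ cannot repair this: the two sign-flipped arguments are only $2|\pi_i^\ast|\le 2\eps_{i-1}$ apart, whereas the upstream drift of $\sum_{j<i}a_{ij}\tilde y_j^{s\cdot\pi}$ is of order $K\eps_{i-1}$, so different upstream sign patterns can land gate $i$ on either side of the boundary. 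This destroys both the per-gate identity and the product-Bernoulli factorisation your matching relies on; the ``KKT slackness'' you invoke does not supply the missing decoupling.

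The paper's resolution is to abandon $\pi_i^\ast$ as the perturbation and instead set $\pi_i$ to be either $0$ (when $\alpha_i$ is at distance $>2K\eps_{i-1}$ from $\{0,1\}$) or a \emph{large} shift $\pm 4K\eps_{i-1}$ (when $\alpha_i$ is near a boundary), with the sign chosen so that $+\pi_i$ pushes the argument into $(0,1)$ and $-\pi_i$ pushes it out. Because $4K\eps_{i-1}$ dominates the maximal input drift $K\eps_{i-1}$, the activation of gate $i$ in $\circuit^{s\cdot\pi}$ is determined by $s_i$ alone on an entire $\eps_{i-1}$-box (this is \cref{clm:pi-properties}), and that claim together with the $\lambda$ you already identified then yields $\lambda_i\,\tfrac{\partial\phi_i^{\pi_i}}{\partial x_k}+(1-\lambda_i)\,\tfrac{\partial\phi_i^{-\pi_i}}{\partial x_k}=\lambda_i a_{ik}$ in every case. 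With this $\pi$ the induction (equivalently, your path-sum matching) goes through cleanly.
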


Before moving to the proof of the technical lemma, let us see why it implies the Backpropagation Lemma. From the two bullets we obtain by definition of the generalized circuit gradient that
$$\nabla f^{s \cdot \pi}(y_1,y_2) = \left(\frac{\partial f^{s \cdot \pi}}{\partial x_1} (y_1,y_2),\frac{\partial f^{s \cdot \pi}}{\partial x_2} (y_1,y_2)\right) \in \widetilde{\partial}_{\delta'} \circuit(y_1,y_2)$$
for all $s \in \{+1,-1\}^{n-2}$, and where we let $\delta' := 8K^2\delta$. As a result of the last part of the technical lemma we can write
\begin{equation*}
\begin{split}
\frac{1}{\delta^{n+1}} \cdot \left(\frac{\partial p}{\partial y_1}(y,z), \frac{\partial p}{\partial y_2}(y,z)\right) = \sum_{s \in \{+1,-1\}^{n-2}} \left( \prod_{j=3}^{n} \lambda_{s_j \cdot j} \right) \nabla f^{s \cdot \pi}(y_1,y_2) \in \widetilde{\partial}_{\delta'} \circuit(y_1,y_2)
\end{split}
\end{equation*}
since this is a convex combination of elements in $\widetilde{\partial}_{\delta'} \circuit(y_1,y_2)$, and this set is convex by definition. This is exactly the statement of the Backpropagation Lemma.

\subsubsection{Proof of the Technical Lemma}

Let $(y,z)$ be a KKT point of \eqref{eq:KKT-construction}. For $i \in [n] \setminus \{1\}$, let $\eps_i := (2K\delta)^{n-i}$.

\paragraph*{\bf Construction of $\pi$.}
Let $i \in [n] \setminus \{1,2\}$. We construct $\pi_i$ as follows
\begin{itemize}
\item If $|\sum_{j=1}^{i-1} a_{i j} y_j + c_i - 1| \leq 2K \eps_{i-1}$, then we set $\pi_i := - 4K \eps_{i-1}$.
\item If $|\sum_{j=1}^{i-1} a_{i j} y_j + c_i - 0| \leq 2K \eps_{i-1}$, then we set $\pi_i := 4K \eps_{i-1}$.
\item In all other cases we set $\pi_i := 0$.
\end{itemize}
Note that the two first cases cannot both occur, since $2K \eps_{i-1} \leq 2K \eps_{n-1} = 4K^2 \delta < 1/4$, since $\delta < 1/16K^2$. Furthermore, since $2K \eps_{i-1} < 1/4$, we also have that $\sum_{j=1}^{i-1} a_{i j} y_j + c_i + \pi_i \notin (-2K\eps_{i-1}, 2K\eps_{i-1}) \cup (1-2K\eps_{i-1},1+2K\eps_{i-1})$, and similarly $\sum_{j=1}^{i-1} a_{i j} y_j + c_i - \pi_i \notin (-2K\eps_{i-1}, 2K\eps_{i-1}) \cup (1-2K\eps_{i-1},1+2K\eps_{i-1})$.

\paragraph*{\bf Construction of $\lambda$.}
Let $i \in [n] \setminus \{1,2\}$. We construct $\lambda_i \in [0,1]$ as follows
\begin{itemize}
	\item If $\sum_{j=1}^{i-1} a_{i j} y_j + c_i < -2K \eps_{i-1}$, then set $\lambda_i := 0$.
	\item If $\sum_{j=1}^{i-1} a_{i j} y_j + c_i > 1 + 2K \eps_{i-1}$, then set $\lambda_i := 0$.
	\item If $\sum_{j=1}^{i-1} a_{i j} y_j + c_i \in (2K \eps_{i-1}, 1 - 2K \eps_{i-1})$, then set $\lambda_i := 1$.
	\item Otherwise, pick $\lambda_i \in [0,1]$ as a solution of the equation
	\begin{equation}\label{eq:def-lambda}
	\frac{1}{2 \delta^i} \cdot \frac{\partial p_i}{\partial y_i}(y,z) \cdot \lambda_i = \trunc\left(\sum_{j=1}^{i-1} a_{i j} y_j + c_i\right) - \trunc\left(\sum_{j=1}^{i-1} a_{i j} y_j + c_i - \frac{1}{2 \delta^i} \cdot \frac{\partial p_i}{\partial y_i}(y,z)\right).
	\end{equation}
	Note that such $\lambda_i \in [0,1]$ always exists.
\end{itemize}
In fact, it is not hard to see that $\lambda_i$ satisfies \eqref{eq:def-lambda} in all four cases. This follows from the fact that by the proof of \cref{lem:KKT-main-variables}
$$\left| \frac{1}{2 \delta^i} \cdot \frac{\partial p_i}{\partial y_i}(y,z) \right| \leq (2K\delta)^{n+1-i} = \eps_{i-1} \leq K\eps_{i-1}.$$
Furthermore, note that by \cref{lem:KKT-main-variables} we can rewrite the equation satisfied by $\lambda_i$ as
\begin{equation}\label{eq:def-lambda-y}
\frac{1}{2 \delta^i} \cdot \frac{\partial p_i}{\partial y_i}(y,z) \cdot \lambda_i = \trunc\left(\sum_{j=1}^{i-1} a_{i j} y_j + c_i\right) - y_i.
\end{equation}

Before stating and proving an important claim satisfied by $\pi$ and $\lambda$, we introduce some additional notation. For $i \in [n] \setminus \{1,2\}$ and any $s_i \in \{+1,-1\}$, we define the function $\phi_i^{s_i \cdot \pi_i}: \mathbb{R}^{i-1} \to \mathbb{R}$ by $\phi_i^{s_i \cdot \pi_i}(x_1, \dots, x_{i-1}) = \trunc(\sum_{j=1}^{i-1} a_{i j} x_j + c_i + s_i \cdot \pi_i)$. Recall that we use $\lambda_{-i}$ to denote $1 - \lambda_i$.

\begin{claim}\label{clm:pi-properties}
For $i \in [n] \setminus \{1,2\}$ and for any $s_i \in \{+1,-1\}$, the function $\phi_i^{s_i \cdot \pi_i}$ is differentiable (with respect to its inputs $x_1, \dots, x_{i-1}$) over $\prod_{j \in [i-1]} [y_j - \eps_{i-1}, y_j + \eps_{i-1}]$ and we have
$$\frac{\partial \phi_i^{s_i \cdot \pi_i}}{\partial x_k}(v_1, \dots, v_{i-1}) = \frac{\partial \phi_i^{s_i \cdot \pi_i}}{\partial x_k}(y_1, \dots, y_{i-1})$$
for all $k \in [i-1]$ and all $v \in \prod_{j \in [i-1]} [y_j - \eps_{i-1}, y_j + \eps_{i-1}]$. Furthermore, we also have
$$\lambda_{i} \cdot \frac{\partial \phi_i^{\pi_i}}{\partial x_k}(y_1, \dots, y_{i-1}) + \lambda_{-i} \cdot \frac{\partial \phi_i^{-\pi_i}}{\partial x_k}(y_1, \dots, y_{i-1}) = \lambda_i \cdot a_{ik}$$
for all $k \in [i-1]$.
\end{claim}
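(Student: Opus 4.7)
The plan is to prove both parts by a case analysis driven by where $\xi_i := \sum_{j=1}^{i-1} a_{ij} y_j + c_i$ lies with respect to the breakpoints $0$ and $1$ of $\trunc$. The case split will exactly mirror the one used to define $\pi_i$ and $\lambda_i$, so that in each case either the two perturbations $\phi_i^{\pi_i}$ and $\phi_i^{-\pi_i}$ agree and the claim reduces to a single derivative, or they evaluate the argument of $\trunc$ on opposite sides of the same breakpoint and the claim becomes the convex combination of a linear-region slope $a_{ik}$ with a saturated slope $0$.

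For the differentiability part, I would use the bound $\sum_{j=1}^{i-1} |a_{ij}| \le K$ (which follows from the choice of $K$) to conclude that the affine argument of $\trunc$ inside $\phi_i^{s_i \cdot \pi_i}$ varies by at most $K\eps_{i-1}$ over the box $\prod_{j\in[i-1]}[y_j - \eps_{i-1},\,y_j + \eps_{i-1}]$. By the construction of $\pi_i$, and by the remark immediately after that construction, both $\xi_i + \pi_i$ and $\xi_i - \pi_i$ lie at distance at least $2K\eps_{i-1}$ from both $0$ and $1$; hence on the whole box the argument of $\trunc$ stays strictly on one side of each breakpoint. Therefore $\trunc$ acts as either the identity, the constant $0$, or the constant $1$ throughout the box, so $\phi_i^{s_i \cdot \pi_i}$ is differentiable there with partial derivatives constantly equal to $a_{ik}$ in the identity case and to $0$ in the two saturated cases.

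For the convex-combination identity, I would split according to the four cases in the definition of $\lambda_i$. In the three "unambiguous" cases — $\xi_i < -2K\eps_{i-1}$, $\xi_i > 1 + 2K\eps_{i-1}$, or $\xi_i \in (2K\eps_{i-1},\,1 - 2K\eps_{i-1})$ — we have $\pi_i = 0$, so $\phi_i^{\pi_i} = \phi_i^{-\pi_i}$ and the LHS collapses to one partial derivative weighted by $\lambda_i + \lambda_{-i} = 1$. In the first two cases that derivative is $0$ and $\lambda_i = 0$; in the third it is $a_{ik}$ and $\lambda_i = 1$; either way the identity holds. In the remaining "boundary" case, $\pi_i$ is chosen so that $\xi_i + \pi_i$ lies in the linear region and $\xi_i - \pi_i$ lies on the saturated side of the nearby breakpoint (or vice versa), giving derivatives $a_{ik}$ and $0$ respectively, so the LHS equals $\lambda_i \cdot a_{ik} + \lambda_{-i} \cdot 0 = \lambda_i \cdot a_{ik}$, which matches the RHS.

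The main obstacle is purely bookkeeping: confirming that under $\delta < 1/(16K^2)$ the numerical windows defining the cases of $\pi_i$ and $\lambda_i$ are disjoint, that $6K\eps_{i-1} < 1 - K\eps_{i-1}$ so the "shifted" perturbation lands cleanly in the interior of $[0,1]$ in the boundary subcase, and that the $2K\eps_{i-1}$ safety margin from the construction of $\pi$ indeed dominates the $K\eps_{i-1}$ variation of the affine argument over the box. These are exactly the estimates already recorded in the paragraph immediately following the construction of $\pi$, so no new quantitative input is needed.
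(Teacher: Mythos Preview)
Your proposal is correct and follows essentially the same approach as the paper's proof. The paper organizes the case analysis for the second part by the three cases in the construction of $\pi_i$ (with subcases when $\pi_i=0$), whereas you organize it by the four cases in the construction of $\lambda_i$; these are the same partition of the real line and lead to identical computations. One minor remark: in the boundary case it is always $\phi_i^{\pi_i}$ (the $+\pi_i$ perturbation) that lands in the linear region and $\phi_i^{-\pi_i}$ that lands in the saturated region, so your hedge ``(or vice versa)'' is unnecessary, but your final computation $\lambda_i \cdot a_{ik} + \lambda_{-i} \cdot 0 = \lambda_i \cdot a_{ik}$ is exactly right.
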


\begin{proof}
In order to show that $\phi_i^{s_i \cdot \pi_i}$ is differentiable over $\prod_{j \in [i-1]} [y_j - \eps_{i-1}, y_j + \eps_{i-1}]$, we will show that $\sum_{j=1}^{i-1} a_{i j} v_j + c_i + s_i \cdot \pi_i$ is far from both $0$ and $1$ for all $v \in \prod_{j \in [i-1]} [y_j - \eps_{i-1}, y_j + \eps_{i-1}]$. Indeed, $\pi_i$ has been constructed in order to ensure this. Namely, we have
$$\left| \sum_{j=1}^{i-1} a_{i j} v_j + c_i + s_i \cdot \pi_i - 1 \right| \geq \left| \sum_{j=1}^{i-1} a_{i j} y_j + c_i + s_i \cdot \pi_i - 1 \right| - K \eps_{i-1} \geq K \eps_{i-1}$$
where we used the fact that $\sum_{j=1}^{i-1} a_{i j} y_j + c_i + s_i \cdot \pi_i \notin (1-2K\eps_{i-1},1+2K\eps_{i-1})$. A very similar argument shows that $\sum_{j=1}^{i-1} a_{i j} v_j + c_i + s_i \cdot \pi_i$ is also far from $0$. As a result, the first part of the claim follows, since $\phi_i^{s_i \cdot \pi_i}$ restricted to $\prod_{j \in [i-1]} [y_j - \eps_{i-1}, y_j + \eps_{i-1}]$ is a linear affine function.

We prove the second part of the claim by a case analysis.
\begin{itemize}
	\item If $\pi_i=0$, then $\phi_i^{\pi_i} = \phi_i^{-\pi_i} = \phi_i^{0}$. Thus, since $\lambda_i + \lambda_{-i} = 1$, we can write
	$$\lambda_{i} \cdot \frac{\partial \phi_i^{\pi_i}}{\partial x_k}(y_1, \dots, y_{i-1}) + \lambda_{-i} \cdot \frac{\partial \phi_i^{-\pi_i}}{\partial x_k}(y_1, \dots, y_{i-1}) = \frac{\partial \phi_i^{0}}{\partial x_k}(y_1, \dots, y_{i-1}).$$
	Now we have three subcases:
	\begin{itemize}
		\item If $\sum_{j=1}^{i-1} a_{i j} y_j + c_i < -2K\eps_{i-1}$, then $\frac{\partial \phi_i^{0}}{\partial x_k}(y_1, \dots, y_{i-1}) = 0$ and $\lambda_i = 0$. So the desired equation holds.
		\item If $\sum_{j=1}^{i-1} a_{i j} y_j + c_i > 1 + 2K\eps_{i-1}$, then $\frac{\partial \phi_i^{0}}{\partial x_k}(y_1, \dots, y_{i-1}) = 0$ and $\lambda_i = 0$. So the equation holds.
		\item If $\sum_{j=1}^{i-1} a_{i j} y_j + c_i \in (2K \eps_{i-1}, 1 - 2K \eps_{i-1})$, then $\frac{\partial \phi_i^{0}}{\partial x_k}(y_1, \dots, y_{i-1}) = a_{ik}$ and $\lambda_i = 1$. So the equation holds.
	\end{itemize}
	\item If $\pi_i = -4K\eps_{i-1}$, then, by construction of $\pi_i$, we must have $\sum_{j=1}^{i-1} a_{i j} y_j + c_i \in [1-2K\eps_{i-1}, 1+2K\eps_{i-1}]$. But this implies that $\frac{\partial \phi_i^{\pi_i}}{\partial x_k}(y_1, \dots, y_{i-1}) = a_{ik}$ and $\frac{\partial \phi_i^{-\pi_i}}{\partial x_k}(y_1, \dots, y_{i-1}) = 0$. As a result, the desired equation again holds.
	\item If $\pi_i = 4K\eps_{i-1}$, then, by construction of $\pi_i$, we must have $\sum_{j=1}^{i-1} a_{i j} y_j + c_i \in [-2K\eps_{i-1}, 2K\eps_{i-1}]$. But this implies that $\frac{\partial \phi_i^{\pi_i}}{\partial x_k}(y_1, \dots, y_{i-1}) = a_{ik}$ and $\frac{\partial \phi_i^{-\pi_i}}{\partial x_k}(y_1, \dots, y_{i-1}) = 0$. So the desired equation holds.
\end{itemize}
\end{proof}

We are now ready to prove the technical lemma. We will prove a slightly stronger statement by induction. For this, we need some additional notation. For $i \in [n] \setminus \{1\}$, we let $\circuit_i^\pi$ denote the circuit $\circuit^\pi$ but where we have only kept the gates $i+1, \dots, n$. We think of $\circuit_i^\pi$ as having input variables $x_1, x_2, \dots, x_i$ (even though some of those variables might be unused and thus not have any impact on the output of the circuit). We let $f_i^\pi: \mathbb{R}^i \to \mathbb{R}$ denote the function represented by $\circuit_i^\pi$. Note that $f_2^\pi = f^\pi$ and $f_n^\pi(x_1, \dots, x_n) = x_n$.

\begin{claim}\label{clm:backpropagation-induction}
For any $i \in [n] \setminus \{1\}$ we have
\begin{enumerate}
	\item For any $s \in \{+1,-1\}^{n-2}$, the function $f_i^{s \cdot \pi}$ is differentiable (with respect to its inputs $x_1, x_2, \dots, x_i$) over $\prod_{j \in [i]} [y_j - \eps_i, y_j + \eps_i]$ and we have
	$$\frac{\partial f_i^{s \cdot \pi}}{\partial x_k}(v_1, \dots, v_i) = \frac{\partial f_i^{s \cdot \pi}}{\partial x_k}(y_1, \dots, y_i)$$
	for all $k \in [i]$ and all $v \in \prod_{j \in [i]} [y_j - \eps_i, y_j + \eps_i]$.
	\item For any $k \in [i]$ we have
	$$\frac{\partial p_i}{\partial y_k}(y,z) = \delta^{n+1} \sum_{s: s_j = 1 \; \forall j \leq i} \left(\prod_{j=i+1}^{n} \lambda_{s_j \cdot j}\right) \frac{\partial f_i^{s \cdot \pi}}{\partial x_k} (y_1, \dots, y_i).$$
\end{enumerate}
\end{claim}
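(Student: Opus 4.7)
The natural strategy is descending induction on $i$, starting from $i = n$ and going down to $i = 2$ (the latter being the statement of the Technical Lemma). The base case $i = n$ is immediate: $f_n^{s\cdot \pi}(x_1,\dots,x_n) = x_n$ is linear with $\partial f_n^{s\cdot\pi}/\partial x_k = \mathbf{1}[k=n]$, and $p_n(y,z) = \delta^{n+1} y_n$ gives $\partial p_n/\partial y_k = \delta^{n+1}\mathbf{1}[k=n]$; the RHS of part~2 for $i = n$ reduces to a single term (empty product, single all-ones $s$) that equals $\delta^{n+1}\mathbf{1}[k=n]$.

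For the inductive step, assume the claim at $i+1$ and prove it at $i$. First, I would express $f_i^{s\cdot\pi}(x_1,\dots,x_i) = f_{i+1}^{s\cdot\pi}(x_1,\dots,x_i,\phi_{i+1}^{s_{i+1}\pi_{i+1}}(x_1,\dots,x_i))$. To apply the chain rule and invoke the inductive hypothesis on differentiability, I must verify that whenever $x \in \prod_{j \in [i]}[y_j - \eps_i, y_j + \eps_i]$, the point $(x,\phi_{i+1}^{s_{i+1}\pi_{i+1}}(x))$ lies in $\prod_{j \in [i+1]}[y_j - \eps_{i+1}, y_j + \eps_{i+1}]$. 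This is the main obstacle. I would split into two cases according to whether $\phi_{i+1}^{s_{i+1}\pi_{i+1}}$ has zero or non-zero gradient on the region (it is differentiable there by Claim~\ref{clm:pi-properties}). In the non-zero case, $\phi_{i+1}^{s_{i+1}\pi_{i+1}}(x)$ and $\trunc(\sum_j a_{(i+1)j}y_j + c_{i+1})$ differ by at most $K\eps_i + |\pi_{i+1}| \leq 5K\eps_i$, while $y_{i+1}$ differs from the same truncation by at most $\eps_i$ (Lemma~\ref{lem:KKT-main-variables}); using $\eps_i = 2K\delta \cdot \eps_{i+1}$ and $\delta < 1/(16K^2)$, the total is well below $\eps_{i+1}$. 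In the constant case ($\phi \equiv 0$ or $\phi \equiv 1$), the defining bound $2K\eps_i$ on the perturbed argument combined with the construction of $\pi_{i+1}$ forces the unperturbed argument to be within $2K\eps_i$ of the boundary, and Lemma~\ref{lem:KKT-main-variables} then pins $y_{i+1}$ to the corresponding $0$ or $1$ within $3K\eps_i < \eps_{i+1}$. Once this containment is secured, the chain rule combined with the inductive constancy of $\nabla f_{i+1}^{s\cdot\pi}$ on the larger box and the constancy of $\nabla \phi_{i+1}^{s_{i+1}\pi_{i+1}}$ on the smaller box (Claim~\ref{clm:pi-properties}) delivers part~1 at $i$.

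For part~2, decompose $\partial p_i/\partial y_k = \partial p_{i+1}/\partial y_k + \delta^{i+1}\partial q_{i+1}/\partial y_k$. A direct computation of $\partial q_{i+1}/\partial y_k$, followed by substituting $Kz_{i+1}^+ - Kz_{i+1}^-$ using Lemma~\ref{lem:KKT-aux-variables} and then invoking the reformulation \eqref{eq:def-lambda-y} of $\lambda_{i+1}$, collapses this to
\[
\delta^{i+1}\frac{\partial q_{i+1}}{\partial y_k} = a_{(i+1)k}\,\lambda_{i+1}\,\frac{\partial p_{i+1}}{\partial y_{i+1}}.
\]
Plugging in the inductive expressions for $\partial p_{i+1}/\partial y_k$ and $\partial p_{i+1}/\partial y_{i+1}$, and then using the identity $\lambda_{i+1} a_{(i+1)k} = \sum_{\sigma \in \{+1,-1\}} \lambda_{\sigma \cdot (i+1)} \partial\phi_{i+1}^{\sigma\pi_{i+1}}/\partial x_k$ from Claim~\ref{clm:pi-properties} together with $\lambda_{i+1} + \lambda_{-(i+1)} = 1$ to fold the sum over $s_{i+1}$ into the formula, matches exactly the chain-rule expansion $\partial f_i^{s\cdot\pi}/\partial x_k = \partial f_{i+1}^{s\cdot\pi}/\partial x_k + \partial f_{i+1}^{s\cdot\pi}/\partial x_{i+1}\cdot \partial\phi_{i+1}^{s_{i+1}\pi_{i+1}}/\partial x_k$ (with $\partial f_{i+1}^{s\cdot\pi}$ independent of $s_{i+1}$). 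This completes the induction, and the Technical Lemma follows by reading off the case $i = 2$, since $f_2^{s\cdot\pi} = f^{s\cdot\pi}$ and $p_2 = p$.
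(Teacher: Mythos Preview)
Your proposal is correct and follows essentially the same approach as the paper: descending induction on $i$, with the chain rule and Claim~\ref{clm:pi-properties} driving Part~1, and the identity $\delta^{i+1}\partial q_{i+1}/\partial y_k = a_{(i+1)k}\lambda_{i+1}\,\partial p_{i+1}/\partial y_{i+1}$ (obtained from Lemma~\ref{lem:KKT-aux-variables} and \eqref{eq:def-lambda-y}) together with the chain-rule expansion driving Part~2. The one notable difference is your case split in Part~1 according to whether $\phi_{i+1}^{s_{i+1}\pi_{i+1}}$ has zero or nonzero gradient on the box: the paper avoids this entirely by using the $1$-Lipschitzness of $\trunc$ directly, bounding $|y_{i+1}-\phi_{i+1}^{s_{i+1}\pi_{i+1}}(x)| \le |y_{i+1}-\trunc(\sum_j a_{(i+1)j}y_j+c_{i+1})| + K\eps_i + |\pi_{i+1}| \le 6K\eps_i$ in one stroke, which is cleaner and makes your case analysis unnecessary.
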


The technical lemma (\cref{lem:KKT-technical-lemma}) simply follows from the claim by noting that for $i=2$ we have $f_2^{s \cdot \pi} = f^{s \cdot \pi}$ and $p_2 = p$. Furthermore, note that for all $i \in [n] \setminus \{1,2\}$ we have $|\pi_i| \leq 4K\eps_{i-1} \leq 4K \eps_{n-1} = 8K^2 \delta$, as desired. It remains to prove the claim.

\begin{proof}
We prove the claim by induction on $i$.

\paragraph{\bf Base case: $i = n$.}
Recall that $f_n^{s \cdot \pi} (v_1, \dots, v_n) = v_n$ and $p_n(y,z) = \delta^{n+1} y_n$. Clearly, $f_n^{s \cdot \pi}$ is differentiable with $\frac{\partial f_n^{s \cdot \pi}}{\partial x_n}(v_1, \dots, v_n) = 1$ and $\frac{\partial f_n^{s \cdot \pi}}{\partial x_k}(v_1, \dots, v_n) = 0$ for all $k \in [n-1]$ and for all $v \in \mathbb{R}^n$. Furthermore, $\frac{\partial p_n}{\partial y_n}(y,z) = \delta^{n+1}$ and $\frac{\partial p_n}{\partial y_k}(y,z) = 0$ for all $k \in [n-1]$. Thus, for all $k \in [n]$ we have
$$\frac{\partial p_n}{\partial y_k}(y,z) = \delta^{n+1} \sum_{s: s_j = 1 \; \forall j \leq n} \left( \prod_{j=n+1}^{n} \lambda_{s_j \cdot j} \right)  \frac{\partial f_n^{s \cdot \pi}}{\partial x_k} (y_1, \dots, y_n)$$
where we note that $\prod_{j=n+1}^{n} \lambda_{s_j \cdot j} = 1$, because it is an empty product, and that the sum has a single summand.

\paragraph{\bf Induction step.}
Let $i \in [n] \setminus \{1,2\}$, and assume that the induction hypothesis holds for $i$. We show that it also holds for $i-1$. Fix any $k \in [i-1]$. For any $s \in \{+1,-1\}^{n-2}$, by the definition of $f_{i-1}^{s \cdot \pi}$, we have that for all $v \in \mathbb{R}^{i-1}$
$$f_{i-1}^{s \cdot \pi}(v_1, \dots, v_{i-1}) = f_i^{s \cdot \pi}(v_1, \dots, v_{i-1}, \phi_i^{s_i \cdot \pi_i}(v_1, \dots, v_{i-1}))$$
where we recall that $\phi_i^{s_i \cdot \pi_i}(v_1, \dots, v_{i-1}) = \trunc(\sum_{j=1}^{i-1} a_{i j} v_j + c_i + s_i \cdot \pi_i)$.
Now, for any $v \in \prod_{j=1}^{i-1} [y_j - \eps_{i-1}, y_j + \eps_{i-1}]$, we have that
\begin{equation*}
\begin{split}
\left|y_i - \phi_i^{s_i \cdot \pi_i}(v_1, \dots, v_{i-1})\right| &= \left| y_i - \trunc\left(\sum_{j=1}^{i-1} a_{i j} v_j + c_i + s_i \cdot \pi_i\right) \right|\\
&\leq \left| y_i - \trunc\left(\sum_{j=1}^{i-1} a_{i j} y_j + c_i \right) \right| + K \eps_{i-1} + |\pi_i|\\
&\leq (2K\delta)^{n+1-i} + K \eps_{i-1} + 4K \eps_{i-1}\\
&\leq 6K \eps_{i-1} = 6K (2K\delta) \eps_i \leq \eps_i
\end{split}
\end{equation*}
where we used \cref{lem:KKT-main-variables}, $(2K\delta)^{n+1-i} = \eps_{i-1}$, $\eps_{i-1} = (2K\delta) \eps_i$, and $\delta < 1/16K^2$. As a result, we have that $(v_1, \dots, v_{i-1}, \phi_i^{s_i \cdot \pi_i}(v_1, \dots, v_{i-1})) \in \prod_{j=1}^{i} [y_j - \eps_i, y_j + \eps_i]$. By the induction hypothesis and by \cref{clm:pi-properties}, it follows using the chain rule that $f_{i-1}^{s \cdot \pi}$ is differentiable on $\prod_{j=1}^{i-1} [y_j - \eps_{i-1}, y_j + \eps_{i-1}]$, and that its partial derivative with respect to $x_k$ satisfies
\begin{equation*}
\begin{split}
\frac{\partial f_{i-1}^{s \cdot \pi}}{\partial x_k}(v) &= \frac{\partial f_i^{s \cdot \pi}}{\partial x_k}(v, \phi_i^{s_i \cdot \pi_i}(v)) + \frac{\partial \phi_i^{s_i \cdot \pi_i}}{\partial x_k}(v) \cdot \frac{\partial f_i^{s \cdot \pi}}{\partial x_i}(v, \phi_i^{s_i \cdot \pi_i}(v))\\
&= \frac{\partial f_i^{s \cdot \pi}}{\partial x_k}(y_{\leq i}) + \frac{\partial \phi_i^{s_i \cdot \pi_i}}{\partial x_k}(y_{\leq i-1}) \cdot \frac{\partial f_i^{s \cdot \pi}}{\partial x_i}(y_{\leq i})
\end{split}
\end{equation*}
for all $v \in \prod_{j=1}^{i-1} [y_j - \eps_{i-1}, y_j + \eps_{i-1}]$. Here we used $y_{\leq i}$ to denote $(y_1, \dots, y_i)$.

For any sign vector $s = (s_i)_{i \in [n] \setminus \{1,2\}}$ with $s_i = +1$, let $s' = (s_3, \dots, s_{i-1}, -s_i, s_{i+1}, \dots, s_n)$. We can write
\begin{equation*}
\begin{split}
\frac{\partial f_{i-1}^{s' \cdot \pi}}{\partial x_k}(y_{\leq i-1}) &= \frac{\partial f_i^{s' \cdot \pi}}{\partial x_k}(y_{\leq i}) + \frac{\partial \phi_i^{s_i' \cdot \pi_i}}{\partial x_k}(y_{\leq i-1}) \cdot \frac{\partial f_i^{s' \cdot \pi}}{\partial x_i}(y_{\leq i})\\
&= \frac{\partial f_i^{s \cdot \pi}}{\partial x_k}(y_{\leq i}) + \frac{\partial \phi_i^{- s_i \cdot \pi_i}}{\partial x_k}(y_{\leq i-1}) \cdot \frac{\partial f_i^{s \cdot \pi}}{\partial x_i}(y_{\leq i})
\end{split}
\end{equation*}
where we used the fact that $f_i^{s' \cdot \pi} = f_i^{s \cdot \pi}$, since the $i$th gate is not included in the corresponding arithmetic circuits. As a result,
\begin{equation}\label{eq:clm-induction-proof}
\begin{split}
&\lambda_{i} \frac{\partial f_{i-1}^{s \cdot \pi}}{\partial x_k}(y_{\leq i-1}) + \lambda_{-i} \frac{\partial f_{i-1}^{s' \cdot \pi}}{\partial x_k}(y_{\leq i-1})\\
= &\frac{\partial f_i^{s \cdot \pi}}{\partial x_k}(y_{\leq i}) + \left( \lambda_{i} \frac{\partial \phi_i^{\pi_i}}{\partial x_k}(y_{\leq i-1}) + \lambda_{-i} \frac{\partial \phi_i^{-\pi_i}}{\partial x_k}(y_{\leq i-1}) \right) \cdot \frac{\partial f_i^{s \cdot \pi}}{\partial x_i}(y_{\leq i})\\
= &\frac{\partial f_i^{s \cdot \pi}}{\partial x_k}(y_{\leq i}) + \lambda_i \cdot a_{ik} \cdot \frac{\partial f_i^{s \cdot \pi}}{\partial x_i}(y_{\leq i})
\end{split}
\end{equation}
where we used $\lambda_{s_i \cdot i} + \lambda_{-s_i \cdot i} = 1$ and \cref{clm:pi-properties}. Thus, we can now write
\begin{align*}
&\sum_{s: s_j = 1 \; \forall j \leq i-1} \left( \prod_{j=i}^{n} \lambda_{s_j \cdot j} \right) \frac{\partial f_{i-1}^{s \cdot \pi}}{\partial x_k} (y_{\leq i-1})\\
= &\sum_{s: s_j = 1 \; \forall j \leq i} \left( \prod_{j=i+1}^{n} \lambda_{s_j \cdot j} \right) \cdot \left( \lambda_i \frac{\partial f_{i-1}^{s \cdot \pi}}{\partial x_k} (y_{\leq i-1}) + \lambda_{-i} \frac{\partial f_{i-1}^{s' \cdot \pi}}{\partial x_k} (y_{\leq i-1}) \right)\\
= &\sum_{s: s_j = 1 \; \forall j \leq i} \left( \prod_{j=i+1}^{n} \lambda_{s_j \cdot j} \right) \cdot \left( \frac{\partial f_i^{s \cdot \pi}}{\partial x_k}(y_{\leq i}) + \lambda_i \cdot a_{ik} \cdot \frac{\partial f_i^{s \cdot \pi}}{\partial x_i}(y_{\leq i}) \right)\\
= &\sum_{s: s_j = 1 \; \forall j \leq i} \left( \prod_{j=i+1}^{n} \lambda_{s_j \cdot j} \right) \cdot \frac{\partial f_i^{s \cdot \pi}}{\partial x_k}(y_{\leq i}) + \lambda_i \cdot a_{ik} \cdot \sum_{s: s_j = 1 \; \forall j \leq i} \left( \prod_{j=i+1}^{n} \lambda_{s_j \cdot j} \right) \cdot \frac{\partial f_i^{s \cdot \pi}}{\partial x_i}(y_{\leq i})\\
= &\frac{1}{\delta^{n+1}} \cdot \frac{\partial p_i}{\partial y_k}(y,z) + \lambda_i \cdot a_{ik} \frac{1}{\delta^{n+1}} \frac{\partial p_i}{\partial y_i}(y,z)
\end{align*}
where we used \eqref{eq:clm-induction-proof} and the induction hypothesis for $i$. Thus, it remains to show that
$$\frac{\partial p_i}{\partial y_k}(y,z) + \lambda_i \cdot a_{ik} \frac{\partial p_i}{\partial y_i}(y,z) = \frac{\partial p_{i-1}}{\partial y_k}(y,z)$$
to establish that the induction hypothesis also holds for $i-1$. By definition of $p_{i-1}$, we have that
$$\frac{\partial p_{i-1}}{\partial y_k}(y,z) = \frac{\partial p_i}{\partial y_k}(y,z) + \delta^i \frac{\partial q_i}{\partial y_k}(y,z)$$
which means that it suffices to prove that
$$\frac{\partial q_i}{\partial y_k}(y,z) = \frac{1}{\delta^i} \cdot \lambda_i \cdot a_{ik} \frac{\partial p_i}{\partial y_i}(y,z).$$
We have
$$\frac{\partial q_i}{\partial y_k}(y,z) = -2 a_{ik} \left(y_i + K z_i^+ - K z_i^- - \sum_{j=1}^{i-1} a_{ij} y_j - c_i\right) = -2 a_{ik} \left(y_i - \trunc\left(\sum_{j=1}^{i-1} a_{ij} y_j + c_i\right)\right)$$
where we also used \cref{lem:KKT-aux-variables}. We can rewrite this as
$$\frac{\partial q_i}{\partial y_k}(y,z) = \frac{1}{\delta^i} \cdot \lambda_i \cdot a_{ik} \frac{\partial p_i}{\partial y_i}(y,z)$$
by using the fact that $\lambda_i$ satisfies \eqref{eq:def-lambda-y}, i.e.,
$$y_i - \trunc\left(\sum_{j=1}^{i-1} a_{i j} y_j + c_i\right) = -\frac{1}{2 \delta^i} \cdot \frac{\partial p_i}{\partial y_i}(y,z) \cdot \lambda_i .$$
Thus, the induction hypothesis also holds for $i-1$ and the proof is complete.
\end{proof}

\section{\cls/-hardness of \linearKKT}
\label{sec:linear-KKT-hardness}

In this section we prove the following lower bound.

\begin{proposition}\label{prop:linear-kkt-CLS-hard}
The \linearKKT problem is \cls/-hard.
\end{proposition}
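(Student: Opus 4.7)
The plan is to reduce from the \cls/-complete problem of computing an approximate KKT point of a smooth function $F:[0,1]^2 \to \reals$ represented by a general arithmetic circuit, which is known to be \cls/-hard~\cite{FGHS22}. Given such an $F$ and approximation parameters, I would construct in polynomial time a linear arithmetic circuit $\circuit$ together with parameters $\eps,\delta > 0$ such that any point $y \in [0,1]^2$ satisfying the $\eps$-KKT conditions with respect to $\widetilde{\partial}_\delta \circuit(y)$ yields an approximate KKT point of $F$. The construction proceeds in two stages: stage A designs an abstract piecewise linear target function $f:[0,1]^2 \to \reals$ whose KKT structure mirrors that of $F$ and is \emph{robust to arbitrary small additive perturbations of its linear pieces}, and stage B realizes $f$ by a linear arithmetic circuit using only truncated linear gates in such a way that any gate-level perturbation induced by $\widetilde{\partial}_\delta$ shows up only as one of the additive perturbations tolerated in stage A.

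For stage A I would use the mesa construction sketched in the technical overview. Fix an exponentially fine grid $G \subseteq [0,1]^2$. For each $y \in G$ define a mesa $m(\cdot,y)$ that is an affine function of gradient $\tfrac{1}{2}\nabla F(y)$ (the halved-gradient trick) on a small square around $y$, and that falls off steeply outside that square, so that only mesas centered near $x$ can achieve the maximum at $x$. Set $f(x) := \max_{y \in G} m(x,y)$. I would then show that any $\eps$-KKT point of $f$ lies near an $O(\eps)$-KKT point of $F$: at any non-KKT point of $F$, the mesa that wins the maximum has a nonzero gradient proportional to $\nabla F$, and the halved-gradient trick guarantees that small additive perturbations of individual linear pieces cannot make a flat piece emerge as the winner of the max (it would be dominated by a neighbouring unperturbed mesa), which is exactly what rules out spurious KKT points arising from perturbations.

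For stage B, since $|G|$ is exponential I cannot naively take a max over all mesas. Instead, on input $x$ I extract the high-order bits of the coordinates of $x$ to identify the constant-sized set of grid points near $x$, and take the max of the corresponding mesas. To handle the discontinuity of bit extraction (which cannot be computed exactly by a continuous linear circuit), I would follow the four-colour scheme of \cref{fig:grid}: partition $G$ into four subgrids $S_1,\ldots,S_4$ of doubled spacing, build $g_i(x) := \max_{y \in S_i} m(x,y)$ for each, and set $f(x) = \max_i g_i(x)$. By averaging the bit extraction over a constant number of nearby probes and arranging that failed extractions can only yield values below $1/6$ while the correct $g_j$ always produces at least $1/3$, any spurious gradients from bit-extraction failures are strictly dominated in the outer max and thus invisible in $f$. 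To realize a mesa with a prescribed rational gradient (read off from a Boolean subcircuit evaluating $F$ at the grid point in binary), I would use the trick that a continuous variable can be multiplied by a binary-encoded variable inside a linear arithmetic circuit, so the affine pieces of each mesa can be assembled with exactly the right slopes.

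The main obstacle I expect is the perturbation analysis of stage B. The definition of $\widetilde{\partial}_\delta \circuit$ allows an adversarial additive shift of at most $\delta$ at every single truncated linear gate, and I need to show that the cumulative effect on $f$ is only the kind of small per-piece additive perturbation that stage A tolerates. This requires three separate robustness arguments: (i) Boolean subcircuits and binary decoders compute their intended discrete outputs exactly under any sufficiently small perturbation, since their outputs lie in $\{0,1\}$ with a constant gap; (ii) perturbed bit extraction enlarges the failure zone only slightly, so the $1/6$ vs.\ $1/3$ masking in the four-colour scheme still applies; and (iii) each mesa is built so that perturbations cannot alter its slopes, only shift its pieces additively. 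Combining these, every element of $\widetilde{\partial}_\delta \circuit(y)$ is a convex combination of gradients of perturbed mesas at $y$, which by the robustness of stage A forces $y$ to be near an approximate KKT point of $F$, completing the reduction and establishing \cref{prop:linear-kkt-CLS-hard}.
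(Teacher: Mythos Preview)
Your proposal is correct and follows essentially the same approach as the paper: reduce from the \cls/-complete 2D approximate-KKT problem of~\cite{FGHS22}, build the target function as a max of mesas with halved gradients on an exponentially fine grid (stage A), and then implement it by a linear circuit using bit extraction, the four-subgrid masking scheme, and continuous-times-binary multiplication, together with the three-part perturbation-robustness analysis you describe (stage B). The paper's proof is organized exactly this way, with stage A carried out in \cref{sec:mesa,sec:mesa-properties} (Lemmas~\ref{lem:mesa-drop}--\ref{lem:mesa-bad-pieces}) and stage B encapsulated in \cref{lem:circuit-construction} and proved in \cref{sec:circuit-implementation}.
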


The section is structured as follows. We begin with the formal description of the mesa functions (\cref{sec:mesa}), and then prove some key properties about these functions (\cref{sec:mesa-properties}). Finally, we use these to provide a proof of \cref{prop:linear-kkt-CLS-hard}. One key lemma (\cref{lem:circuit-construction}) in that proof, which pertains to how the mesa functions can be implemented by circuits in a robust way, is deferred to \cref{sec:circuit-implementation}.

\subsection{The Mesa Functions}
\label{sec:mesa}

Fix some side length $\ell > 0$, and a boundary slope $\Gamma > 0$.
A mesa function is defined by a center-point $p \in \mathbb{R}^2$, a value $a \in \mathbb{R}$, and a gradient $g \in \mathbb{R}^2$. Formally, we define the mesa function $M(\cdot;p,a,g): \mathbb{R}^2 \to \mathbb{R}$ as the minimum of five linear (affine) functions, namely for all $x \in \mathbb{R}^2$
$$M(x;p,a,g) = \min\big\{P_c(x;p,a,g), P_r(x;p,a,g), P_t(x;p,a,g), P_l(x;p,a,g), P_b(x;p,a,g)\big\}$$
where these linear functions are defined as follows:
\begin{itemize}
\item (center) The linear function with gradient $g = (g_1,g_2)$ that has value $a$ at point $p$. Formally, this is the function
$$P_c(x;p,a,g) = (x_1-p_1)g_1 + (x_2-p_2)g_2 + a.$$
\item (right) The linear function with gradient $(-\Gamma + g_1,g_2)$ that has value $P_c(p_1+\ell/2,p_2+\ell/2) = (g_1 + g_2)\ell/2 + a$ at point $(p_1+\ell/2,p_2+\ell/2)$. Formally, this is the function
\begin{equation*}
\begin{split}
P_r(x;p,a,g) &= (x_1-p_1-\ell/2)(-\Gamma + g_1) + (x_2-p_2-\ell/2)g_2 + (g_1 + g_2)\ell/2 + a\\
&= (x_1-p_1)(-\Gamma + g_1) + (x_2-p_2)g_2 + a + \Gamma \ell/2.
\end{split}
\end{equation*}
\item (top) The linear function with gradient $(g_1,-\Gamma + g_2)$ that has value $P_c(p_1+\ell/2,p_2+\ell/2) = (g_1 + g_2)\ell/2 + a$ at point $(p_1+\ell/2,p_2+\ell/2)$. Formally, this is the function
\begin{equation*}
\begin{split}
P_t(x;p,a,g) &= (x_1-p_1-\ell/2)g_1 + (x_2-p_2-\ell/2)(-\Gamma + g_2) + (g_1 + g_2)\ell/2 + a\\
&= (x_1-p_1)g_1 + (x_2-p_2)(-\Gamma + g_2) + a + \Gamma \ell/2.
\end{split}
\end{equation*}
\item (left) The linear function with gradient $(\Gamma + g_1,g_2)$ that has value $P_c(p_1-\ell/2,p_2-\ell/2) = -(g_1 + g_2)\ell/2 + a$ at point $(p_1-\ell/2,p_2-\ell/2)$. Formally, this is the function
\begin{equation*}
\begin{split}
P_l(x;p,a,g) &= (x_1-p_1+\ell/2)(\Gamma + g_1) + (x_2-p_2+\ell/2)g_2 -(g_1 + g_2)\ell/2 + a\\
&= (x_1-p_1)(\Gamma + g_1) + (x_2-p_2)g_2 + a + \Gamma \ell/2.
\end{split}
\end{equation*}
\item (bottom) The linear function with gradient $(g_1,\Gamma + g_2)$ that has value $P_c(p_1-\ell/2,p_2-\ell/2) = -(g_1 + g_2)\ell/2 + a$ at point $(p_1-\ell/2,p_2-\ell/2)$. Formally, this is the function
\begin{equation*}
\begin{split}
P_b(x;p,a,g) &= (x_1-p_1+\ell/2)g_1 + (x_2-p_2+\ell/2)(\Gamma + g_2) -(g_1 + g_2)\ell/2 + a\\
&= (x_1-p_1)g_1 + (x_2-p_2)(\Gamma + g_2) + a + \Gamma \ell/2.
\end{split}
\end{equation*}
\end{itemize}

\begin{remark}
It will be convenient to abuse notation and to also define the function $M(\cdot;p,A,g)$, where $A$ is a vector of values, namely $A = (a_c, a_r, a_t, a_l, a_b) \in \mathbb{R}^5$. In that case, we let
$$M(x; p, A, g) = \min\left\{P_c(x; p, a_c, g), P_r(x; p, a_r, g), P_t(x; p, a_t, g), P_l(x; p, a_l, g), P_b(x; p, a_b, g)\right\}$$
i.e., the values $a$ used by the different linear pieces are not (necessarily) the same and are given by the vector $A$. This will be useful to analyze the effect of perturbations.
\end{remark}

\subsection{Key Properties of the Mesa Construction}\label{sec:mesa-properties}

\begin{lemma}\label{lem:mesa-drop}
For any $\ell > 0$, $\Gamma > 0$, $p \in [0,1]^2$, $g \in [-1,1]^2$, and $A = (a_c, a_r, a_t, a_l, a_b) \in [0,1]^5$, we have that for all $x \in [0,1]^2$ with $\|x-p\|_\infty \geq \ell/2 + 3/\Gamma$
$$M(x; p, A, g) \leq 0.$$
\end{lemma}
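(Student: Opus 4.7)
The plan is to do a case analysis on which coordinate direction witnesses the $\ell_\infty$ distance, and in each case show that the corresponding side-piece of the mesa (right/top/left/bottom) is already $\leq 0$, which suffices since $M$ is the minimum of the five pieces.

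First, because $\|x - p\|_\infty \geq \ell/2 + 3/\Gamma$, at least one of the four inequalities $x_1 - p_1 \geq \ell/2 + 3/\Gamma$, $-(x_1 - p_1) \geq \ell/2 + 3/\Gamma$, $x_2 - p_2 \geq \ell/2 + 3/\Gamma$, $-(x_2 - p_2) \geq \ell/2 + 3/\Gamma$ must hold. By the symmetry of the four side-pieces around the center, it is enough to handle one case, say $x_1 - p_1 \geq \ell/2 + 3/\Gamma$, and bound $P_r(x; p, a_r, g)$; the other three cases follow analogously using $P_t$, $P_l$, $P_b$.

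For that case, I would start from the rewritten form
\[
P_r(x; p, a_r, g) = -\Gamma(x_1 - p_1) + g_1(x_1 - p_1) + g_2(x_2 - p_2) + a_r + \Gamma \ell/2,
\]
and isolate the $-\Gamma(x_1-p_1)$ term, which is the term producing the drop. Using the hypothesis $x_1 - p_1 \geq \ell/2 + 3/\Gamma$, this term is at most $-\Gamma \ell/2 - 3$, which cancels the offset $+\Gamma\ell/2$ and leaves a slack of $-3$. The remaining terms are bounded by brute force using $p, x \in [0,1]^2$ (so $|x_i - p_i| \leq 1$), $g \in [-1,1]^2$ (so $|g_i| \leq 1$), and $a_r \in [0,1]$: each of $g_1(x_1-p_1)$, $g_2(x_2-p_2)$, and $a_r$ is at most $1$ in absolute value, contributing at most $+3$ in total. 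Adding these bounds gives $P_r(x; p, a_r, g) \leq 0$, and hence $M(x;p,A,g) \leq P_r(x;p,a_r,g) \leq 0$.

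There is no real obstacle here; the lemma is essentially a sanity check that the constant $3/\Gamma$ in the distance threshold is calibrated so the steep slope $\Gamma$ of a side-piece dominates the at-most-unit contributions from $g$, the positional offset, and the vertical value $a_r$. The only thing worth stating carefully is that the bound must hold for all choices of $A \in [0,1]^5$ simultaneously, but this is automatic since each side-piece only depends on its own component of $A$.
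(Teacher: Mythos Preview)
Your proof is correct and follows essentially the same approach as the paper: a case analysis on which coordinate witnesses the $\ell_\infty$ distance, and in each case bounding the corresponding side-piece using $|x_i-p_i|\le 1$, $|g_i|\le 1$, and $a_\bullet\in[0,1]$ to show it is $\le 0$. The only cosmetic difference is that the paper groups the $\Gamma$ terms as $-(x_1-p_1-\ell/2)\Gamma$ rather than isolating $-\Gamma(x_1-p_1)$ and later cancelling against $+\Gamma\ell/2$, but the arithmetic is identical.
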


For example, if $\Gamma \geq 12/\ell$, then $M(x; p, A, g) \leq 0$ for all $x \in [0,1]^2$ with $\|x-p\|_\infty \geq 3\ell/4$.

\begin{proof}
First, note that for any $x \in [0,1]^2$ we have
\begin{equation*}
\begin{split}
P_r(x;p,a_r,g) &= (x_1-p_1)(-\Gamma + g_1) + (x_2-p_2)g_2 + a_r + \Gamma \ell/2\\
&= -(x_1-p_1-\ell/2)\Gamma + (x_1-p_1)g_1 + (x_2-p_2)g_2 + a_r\\
&\leq -(x_1-p_1-\ell/2)\Gamma + 1 + 1 + 1
\end{split}
\end{equation*}
and thus $P_r(x;p,a_r,g) \leq 0$ whenever $x_1 \geq p_1 + \ell/2 + 3/\Gamma$. Similarly, one can also show that for all $x \in [0,1]^2$
\begin{itemize}
    \item $P_t(x;p,a_t,g) \leq 0$ whenever $x_2 \geq p_2 + \ell/2 + 3/\Gamma$,
    \item $P_l(x;p,a_l,g) \leq 0$ whenever $x_1 \leq p_1 - \ell/2 - 3/\Gamma$,
    \item $P_b(x;p,a_b,g) \leq 0$ whenever $x_2 \leq p_2 - \ell/2 - 3/\Gamma$.
\end{itemize}
Now, since the mesa function is the minimum of these four linear functions (and of the piece $P_c$), it follows that $M(x; p, A, g) \leq 0$ for all $x \in [0,1]^2$ with $\|x-p\|_\infty \geq \ell/2 + 3/\Gamma$, as claimed.
\end{proof}

\begin{lemma}\label{lem:mesa-value}
Let the following be given: $\ell = 1/N$ for some $N \in \mathbb{N}$, $\Gamma > 0$, and for all $p \in G$, $A^p = (a_c^p, a_r^p, a_t^p, a_l^p, a_b^p) \in [0.4,0.6]^5$ and $g^p \in [-0.01,0.01]$, where $G = \{0,\ell,2\ell,\cdots,1-\ell,1\}^2$. Then we have that for all $x \in [0,1]^2$
$$\max_{p \in G} M\big(x;p,A^p,g^p\big) \in [1/3,2/3].$$
\end{lemma}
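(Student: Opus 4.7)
The plan is to prove the two inequalities $\max_{p \in G} M(x;p,A^p,g^p) \leq 2/3$ and $\max_{p \in G} M(x;p,A^p,g^p) \geq 1/3$ separately. For the upper bound I would use the fact that $M$ is the minimum over five pieces, so for any $p \in G$
\[
M(x;p,A^p,g^p) \;\leq\; P_c(x;p,a_c^p,g^p) \;=\; (x_1-p_1)g_1^p + (x_2-p_2)g_2^p + a_c^p.
\]
Using $|x_i - p_i| \leq 1$ (since both are in $[0,1]$), $|g_i^p| \leq 0.01$, and $a_c^p \leq 0.6$, this yields $M(x;p,A^p,g^p) \leq 0.62 < 2/3$, and since this bound holds uniformly in $p$, the same bound holds for the maximum.

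For the lower bound, the key idea is to exhibit a single good witness $p^\star \in G$. Since $G$ is the grid with spacing $\ell$, I would pick $p^\star$ to be a grid point closest to $x$ in $\ell_\infty$ norm, so that $\|x-p^\star\|_\infty \leq \ell/2$. The whole point of the mesa construction is that inside the central square $[p^\star_1 - \ell/2, p^\star_1 + \ell/2] \times [p^\star_2 - \ell/2, p^\star_2 + \ell/2]$, the four lateral pieces are designed so that the steep contribution $\Gamma \ell/2$ exactly cancels the worst-case decrease from the slope $-\Gamma$ (or is augmented by a non-negative quantity for slopes $+\Gamma$). Concretely, rewriting
\[
P_r(x;p^\star,a_r^{p^\star},g^{p^\star}) = \Gamma\bigl(\ell/2 - (x_1-p^\star_1)\bigr) + (x_1-p^\star_1)g_1^{p^\star} + (x_2-p^\star_2)g_2^{p^\star} + a_r^{p^\star},
\]
the first term is non-negative since $x_1 - p^\star_1 \leq \ell/2$, and the remaining terms are at least $-2 \cdot (\ell/2) \cdot 0.01 + 0.4 \geq 0.39$. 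I would carry out the analogous algebraic rewriting for $P_t, P_l, P_b$, each time using the appropriate side of $\|x - p^\star\|_\infty \leq \ell/2$, and for $P_c$ simply bound it below using $|x_i - p^\star_i| \leq \ell/2 \leq 1/2$ and $|g^{p^\star}_i|\leq 0.01$, giving $P_c \geq 0.4 - 0.01 \geq 0.39$.

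Thus all five pieces are at least $0.39 > 1/3$ at $p = p^\star$, so $M(x; p^\star, A^{p^\star}, g^{p^\star}) \geq 1/3$, and therefore the maximum over $p \in G$ is at least $1/3$. There is no real obstacle here — the argument is essentially a bookkeeping exercise matching the pieces of $M$ against the case $\|x - p^\star\|_\infty \leq \ell/2$ — the only mildly delicate point is making sure that the choice of parameter ranges $A^p \in [0.4, 0.6]^5$, $g^p \in [-0.01, 0.01]^2$, together with $\ell \leq 1$, give enough slack to land strictly inside $[1/3, 2/3]$, which the numerics above confirm.
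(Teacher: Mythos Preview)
Your proof is correct and follows the same overall structure as the paper's: the upper bound via $M \le P_c$ is identical, and for the lower bound you likewise pick a nearest grid point $p^\star$ with $\|x-p^\star\|_\infty \le \ell/2$ and show the mesa there exceeds $1/3$.

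The only difference is in how the lower bound on $M(x;p^\star,A^{p^\star},g^{p^\star})$ is established. You bound each of the five pieces $P_c,P_r,P_t,P_l,P_b$ separately from below, using the algebraic rewriting that isolates the non-negative $\Gamma$-term. The paper instead takes a shortcut: it sets $a^* = \min\{a_c^{p^\star},\dots,a_b^{p^\star}\}$, observes that $M(x;p^\star,A^{p^\star},g^{p^\star}) \ge M(x;p^\star,a^*,g^{p^\star})$ by monotonicity in the $a$-values, and then uses the structural fact that when all five $a$-values coincide and $\|x-p^\star\|_\infty \le \ell/2$, the mesa equals its center piece $P_c$, reducing to a single bound. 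Your direct piece-by-piece check is slightly more work but equally valid, and arguably more self-contained since it does not appeal to the ``by construction'' equality $M = P_c$ on the central square.
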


\begin{proof}
We begin by proving the upper bound, namely $\max_{p \in G} M(x;p,A^p,g^p) \leq 2/3$. We will show that for all $p \in G$ and all $x \in [0,1]^2$, $M(x;p,A^p,g^p) \leq 2/3$. By construction of the mesa function, it suffices to show that $P_c(x;p,a_c^p,g^p) \leq 2/3$. For any $x \in [0,1]^2$ we have
$$P_c(x;p,a_c^p,g^p) = (x_1-p_1)g_1^p + (x_2-p_2)g_2^p + a_c^p \leq |g_1^p| + |g_2^p| + a_c^p \leq 0.62 \leq 2/3.$$

For the lower bound, fix any $x \in [0,1]^2$ and let $p \in G$ be such that $\|x-p\|_\infty \leq \ell/2$. (Such a $p$ must necessarily exist.) It suffices to show that $M(x;p,A^p,g^p) \geq 1/3$, since we take the maximum over all mesas. Let $a^* := \min\{a_c^p, a_r^p, a_t^p, a_l^p, a_b^p\}$. Then, we must have $M(x;p,A^p,g^p) \geq M(x;p,a^*,g^p)$, so now it suffices to show that $M(x;p,a^*,g^p) \geq 1/3$. By construction, we have that since $\|x-p\|_\infty \leq \ell/2$, $M(x;p,a^*,g^p) = P_c(x;p,a^*,g^p)$ and then
$$P_c(x;p,a^*,g^p) = (x_1-p_1)g_1^p + (x_2-p_2)g_2^p + a^* \geq -|g_1^p| -|g_2^p| + a^* \geq 0.38 \geq 1/3$$
which proves the lemma.
\end{proof}

\begin{lemma}\label{lem:mesa-bad-pieces}
Let the following be given: $\ell = 1/N$ for some $N \in \mathbb{N}$, $\Gamma \geq 6/\ell$, and for all $p \in G$, $A^p = (a_c^p, a_r^p, a_t^p, a_l^p, a_b^p) \in [0.4,0.6]^5$ and $g^p \in [-0.01,0.01]^2$, where $G = \{0,\ell,2\ell,\cdots,1-\ell,1\}^2$. Furthermore, assume that for all $p, p' \in G$ that are adjacent (i.e., $\|p-p'\|_\infty \leq \ell$) we have
\begin{enumerate}
    \item $\|g^p - g^{p'}\|_\infty \leq \ell$,
    \item for all $i,j \in \{c,r,t,l,b\}$, $|a_i^{p'} - a_j^p - \langle 2g^p, p'-p \rangle| \leq \ell^2$.
\end{enumerate}
Define the function $f: [0,1]^2 \to \mathbb{R}$
$$f(x) := \max_{p \in G} M\big(x;p,A^p,g^p\big).$$
Then for any $p \in G$ and any $x \in [0,1]^2$ with $\|x-p\|_\infty \leq \ell/2$ such that $f$ is differentiable at $x$, we have
\begin{itemize}
\item If $g_1^p \geq 10\ell$, then $\frac{\partial f}{\partial x_1} (x) \geq g_1^p - \ell$.
\item If $g_1^p \leq -10\ell$, then $\frac{\partial f}{\partial x_1} (x) \leq g_1^p + \ell$.
\item If $g_2^p \geq 10\ell$, then $\frac{\partial f}{\partial x_2} (x) \geq g_2^p - \ell$.
\item If $g_2^p \leq -10\ell$, then $\frac{\partial f}{\partial x_2} (x) \leq g_2^p + \ell$.
\end{itemize}
\end{lemma}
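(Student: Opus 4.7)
I will focus on the first of the four (symmetric) bullets, namely $g_1^p \geq 10\ell$; the remaining three follow by reflecting/transposing coordinates and using an appropriately offset adjacent mesa in the contradiction step. The plan has three stages: first I pin down the mesa and linear piece responsible for $f$ near $x$; second I reduce the desired bound to ruling out a single ``bad'' active piece; and third I derive a contradiction by producing a neighbouring mesa whose value exceeds $P_r^{p^*}(x)$.

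Since $f$ is differentiable at $x$ and $f(x) \geq 1/3$ by \cref{lem:mesa-value}, the max defining $f(x)$ is uniquely attained at some $p^* \in G$ and locally around $x$ the mesa $M(\cdot;p^*,A^{p^*},g^{p^*})$ agrees with exactly one of its five linear pieces $P_{i^*}(\cdot;p^*,\cdot,g^{p^*})$ for a unique $i^* \in \{c,r,t,l,b\}$. Applying \cref{lem:mesa-drop} to $p^*$ gives $\|x-p^*\|_\infty \leq \ell/2 + 3/\Gamma \leq \ell$ (using $\Gamma \geq 6/\ell$), which combined with $\|x-p\|_\infty \leq \ell/2$ and the grid structure forces $p^*$ to be adjacent to $p$; condition~1 then yields $g_1^{p^*} \geq 9\ell$. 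Inspecting the gradient of each piece, the $x_1$-component of $\nabla f(x)$ equals $g_1^{p^*}$ when $i^* \in \{c,t,b\}$, equals $\Gamma + g_1^{p^*}$ when $i^* = l$, and equals $-\Gamma + g_1^{p^*}$ when $i^* = r$. The first two are already $\geq g_1^p - \ell$, so the entire task is to rule out $i^* = r$.

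Suppose for contradiction $i^* = r$ and set $\eta := (x_1 - p^*_1) - \ell/2$. The strict inequality $P_r^{p^*}(x) < P_c^{p^*}(x)$, together with $|a_c^{p^*} - a_r^{p^*}| \leq \ell^2$ from condition~2 taken with $p = p' = p^*$, forces $\eta > -\ell^2/\Gamma$; since $x_1 \leq 1$ we also get $p^*_1 \leq 1 - \ell$, so $p^{**} := p^* + (\ell, 0) \in G$. I will show that $M(x;p^{**},A^{p^{**}},g^{p^{**}}) > P_r^{p^*}(x) = f(x)$, contradicting the maximality of $p^*$. The main algebraic computation expands $P_c^{p^{**}}(x) - P_r^{p^*}(x)$: writing $g^{p^{**}} = g^{p^*} + \delta$ with $\|\delta\|_\infty \leq \ell$ and substituting $a_c^{p^{**}} - a_r^{p^*} = 2\ell g_1^{p^*} \pm \ell^2$ from condition~2 (the $g_2^{p^*}$-contribution vanishes exactly because $p^{**} - p^* = (\ell,0)$), the $-\ell g_1^{p^*}$ produced by the linear expansion telescopes with $+2\ell g_1^{p^*}$ and one obtains
\[
P_c^{p^{**}}(x) - P_r^{p^*}(x) \;=\; \Gamma\eta + \ell g_1^{p^*} + O(\ell^2) \;\geq\; -\ell^2 + 9\ell^2 - O(\ell^2) \;>\; 0,
\]
with $O(\ell^2)$ absorbing the cross-terms $(\eta-\ell/2)\delta_1$ and $(x_2-p^{**}_2)\delta_2$ and the slack in condition~2. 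The pieces $P_l^{p^{**}}$ and $P_r^{p^{**}}$ differ from $P_c^{p^{**}}$ by $\Gamma\eta + O(\ell^2) \geq -2\ell^2$ and $\Gamma(\ell-\eta) + O(\ell^2) \gg \ell^2$ respectively, so using the displayed bound both still exceed $P_r^{p^*}$.

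The hardest step is handling $P_t^{p^{**}}$ and $P_b^{p^{**}}$, whose gaps with $P_c^{p^{**}}$ are $\mp\Gamma(x_2 - p^{**}_2 \mp \ell/2) + O(\ell^2)$ and can be strongly negative when $x_2$ sits outside the vertical band $[p^{**}_2 - \ell/2, p^{**}_2 + \ell/2]$. The key idea for beating this obstacle is that since $p^{**}_2 = p^*_2$, the inequality $x_2 - p^{**}_2 > \ell/2$ is equivalent to $x_2 - p^*_2 > \ell/2$, which lets me invoke the strict-minimum hypothesis $P_r^{p^*}(x) < P_t^{p^*}(x)$ to get $\Gamma((x_2-p^*_2) - (x_1-p^*_1)) < a_t^{p^*} - a_r^{p^*} \leq \ell^2$, i.e., $\Gamma(w' - \eta) \leq \ell^2$ where $w' := (x_2-p^{**}_2) - \ell/2$. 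Repeating the same telescoping calculation for $P_t^{p^{**}}(x) - P_r^{p^*}(x)$ (still using $a_t^{p^{**}} - a_r^{p^*} = 2\ell g_1^{p^*} \pm \ell^2$ since $p^{**} - p^* = (\ell,0)$) yields $\Gamma(\eta - w') + \ell g_1^{p^*} + O(\ell^2) \geq -\ell^2 + 9\ell^2 - O(\ell^2) > 0$; the case $x_2 - p^{**}_2 < -\ell/2$ is handled symmetrically via $P_r^{p^*} < P_b^{p^*}$. Thus every piece of $M^{p^{**}}$ at $x$ strictly exceeds $P_r^{p^*}(x)$, producing the required contradiction.
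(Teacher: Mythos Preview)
Your proposal is correct and follows essentially the same route as the paper: both isolate the active mesa, observe that only the $P_r$ piece could produce a bad $x_1$-derivative, and rule it out by comparing with the right-neighbour mesa $p^{**}=p^*+(\ell,0)$ via the five affine pieces. The paper phrases this as the contrapositive (``if $P_r^{p}\ge M^{p'}$ then $P_r^{p}>M^{p}$'') and does a five-case analysis on which piece of $M^{p'}$ lies below $P_r^p$, whereas you argue directly that each of the five pieces of $M^{p^{**}}$ strictly exceeds $P_r^{p^*}$, using the constraints $P_r^{p^*}\le P_c^{p^*}$, $P_r^{p^*}\le P_t^{p^*}$, $P_r^{p^*}\le P_b^{p^*}$; the case-by-case computations match almost line for line once one unwinds the contrapositive.

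Two small presentational points: your uniqueness claim for $(p^*,i^*)$ is not needed and not quite true in general (two active mesas can share a common affine piece), but your argument survives with only $\le$ in place of $<$ in the hypotheses, since the $\ell g_1^{p^*}\ge 9\ell^2$ term provides strict slack in every bound; and your $P_t^{p^{**}}$ computation in fact works for all $x$ (not just $x_2>p^{**}_2+\ell/2$), so the case split there is unnecessary.
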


\begin{proof}
Let $p^* \in G$ and $x^* \in [0,1]^2$ with $\|x^*-p^*\|_\infty \leq \ell/2$ such that $f$ is differentiable at $x^*$. We only consider the case where $g_1^{p^*} \geq 10\ell$. The other cases are handled analogously.

First, observe that for all $x \in [0,1]^2$ with $\|x-p^*\|_\infty \leq \ell$ we have
$$f(x) = \max_{p \in G_{p^*}} M(x;p,A^{p},g^{p})$$
where $G_{p^*} := \{p \in G: \|p-p^*\|_\infty \leq \ell\}$ is the set of grid points adjacent to $p^*$. Indeed, for any $p \in G \setminus G_{p^*}$ we have $\|x-p\|_\infty \geq \ell \geq \ell/2 + 3/\Gamma$ and thus by \cref{lem:mesa-drop} it must be that $M(x;p,A^{p},g^{p}) \leq 0$. On the other hand, by \cref{lem:mesa-value} we have $f(x) \in [1/3,2/3]$, which implies that $f(x) > M(x;p,A^{p},g^{p})$ for all $p \in G \setminus G_{p^*}$.

Now, since $f$ is differentiable at $x^*$, $\nabla f(x^*)$ must correspond to the gradient of one of the linear affine pieces of $M(\cdot; p, A^{p}, g^{p})$ for some $p \in G_{p^*}$. By construction of the mesa function $M(\cdot; p, A^{p}, g^{p})$, its linear affine pieces $P_c, P_t$ and $P_b$ all have gradient equal to $g_1^p$ in the first coordinate. Furthermore, the linear affine piece $P_l$ has gradient $g_1^p + \Gamma \geq g_1^p$ in the first coordinate. Thus, as long as the piece $P_r$ does not appear in $f$ (i.e., $P_r(x;p,A^p,g^p) < f(x)$ whenever $P_r(x;p,A^p,g^p) = M(x; p, A^{p}, g^{p})$), we can deduce that $\frac{\partial f}{\partial x_1} (x^*) \geq g_1^p$. This implies the desired result that $\frac{\partial f}{\partial x_1} (x^*) \geq g_1^{p^*} - \ell$, since by assumption 1 we have $\|g^{p^*} - g^{p}\|_\infty \leq \ell$. 

It remains to prove that for all $p \in G_{p^*}$ the piece $P_r$ indeed does not appear in $f$. Since $g_1^p \geq g_1^{p^*} - \ell \geq 9\ell$, this is implied by the following slightly more general claim: If $g_1^p \geq 9\ell$ for some $p \in G$, then for all $x \in [0,1]^2$ we have
$$P_r(x; p, a_r^p, g^p) = M(x; p, A^p, g^p) \implies P_r(x; p, a_r^p, g^p) < f(x).$$
In the remainder of this proof, we prove this claim.

Let $p \in G$ with $g_1^p \geq 9\ell$. We begin by noting the following fact: for any $p\in G$ and $i,j \in \{c,r,t,l,b\}$, we have
\begin{equation}\label{eq:mesa-gap}
|a_i^p-a_j^p| \leq |a_i^{p'} - a_i^p - \langle 2g^p, p'-p \rangle| + |a_i^{p'} - a_j^p - \langle 2g^p, p'-p \rangle| \leq 2\ell^2
\end{equation}
by assumption 2, where $p'$ is some arbitrary gridpoint adjacent to $p$.

Furthermore, note that since $\Gamma \geq 6/\ell$, \cref{lem:mesa-drop} implies that $M(x; p, A^p, g^p) \leq 0$ for all $x \in [0,1]^2$ with $\|x-p\|_\infty \geq \ell$. On the other hand, by \cref{lem:mesa-value} we have $f(x) \geq 1/3$ for all $x \in [0,1]^2$. As a result, for any $x \in [0,1]^2$ with $\|x-p\|_\infty \geq \ell$, if $P_r(x; p, a_r^p, g^p) = M(x; p, A^p, g^p)$, then $P_r(x; p, a_r^p, g^p) \leq 0 < 1/3 \leq f(x)$, as desired. Thus, it remains to prove the claim for all $x \in ([p_1-\ell,p_1+\ell] \times [p_2-\ell,p_2+\ell]) \cap [0,1]^2 =: B_\ell(p) \cap [0,1]^2$.

First of all, note that if $p$ lies on the right-hand boundary of $G$, i.e., $p_1 = 1$, then $P_r(x; p, a_r^p, g^p) > P_c(x; p, a_c^p, g^p) \geq M(x; p, A^p, g^p)$ for all $x \in [0,1]^2$, and thus the claim holds trivially. Indeed
\begin{equation*}
\begin{split}
& \quad P_r(x; p, a_r^p, g^p) - P_c(x; p, a_c^p, g^p)\\
&= (x_1-p_1)(-\Gamma + g_1^p) + (x_2-p_2)g_2^p + a_r^p + \Gamma \ell/2 -(x_1-p_1)g_1^p - (x_2-p_2)g_2^p - a_c^p\\
&= -(x_1-p_1)\Gamma + \Gamma \ell/2 + a_r^p - a_c^p\\
&\geq \Gamma \ell/2 - 2\ell^2 = \ell(\Gamma/2 - 2\ell) > 0
\end{split}
\end{equation*}
where we used $x_1 \leq 1 = p_1$, $a_r^p - a_c^p \geq -2\ell^2$ by \eqref{eq:mesa-gap}, and $\Gamma \geq 6/\ell \geq 6 > 4\ell$.

Now consider the case where $p$ does not lie on the right-hand boundary of $G$. Then the point $p' := (p_1 + \ell, p_2)$ lies on the grid $G$. We will show that for all $x \in B_\ell(p)$, $P_r(x; p, a_r^p, g^p) = M(x; p, A^p, g^p) \implies P_r(x; p, a_r^p, g^p) < M(x; p', A^{p'}, g^{p'})$, which implies the claim, since $M(x; p', A^{p'}, g^{p'}) \leq f(x)$. For ease of notation, we drop the $p, p'$ superscripts, and use $A := A^p$, $A' := A^{p'}$, $a_i := a_i^p$, $a_i' := a_i^{p'}$, $g := g^p$, $g' := g^{p'}$. We will prove the contrapositive, namely $P_r(x; p, a_r, g) \geq M(x; p', A', g') \implies P_r(x; p, a_r, g) > M(x; p, A, g)$ for all $x \in B_\ell(p)$. We proceed by a case analysis. If $x$ is such that $P_r(x; p, a_r, g) \geq M(x; p', A', g')$, then at least one of the following five cases must occur.

\paragraph*{Case 1:} $P_r(x; p, a_r, g) \geq P_c(x; p', a_c', g')$. In this case, we show that $P_r(x; p, a_r, g) > P_c(x; p, a_c, g) \geq M(x; p, A, g)$. Indeed, we have
\begin{equation*}
\begin{split}
P_r(x; p, a_r, g) - P_c(x; p, a_c, g) &\geq P_c(x; p', a_c', g') - P_c(x; p, a_c, g)\\
&= (x_1-p_1-\ell)(g_1' - g_1) - \ell g_1 + (x_2-p_2)(g_2' - g_2) + a_c' - a_c\\
&\geq - 2\ell |g_1' - g_1| - \ell |g_2' - g_2| - \ell g_1 + a_c' - a_c\\
&\geq -3 \ell^2 + \ell g_1 - \ell^2 = \ell (g_1 - 4\ell) > 0
\end{split}
\end{equation*}
where we used assumptions 1 and 2, as well as $g_1 > 4\ell$. Indeed, note that assumption 2 implies that $a_c'-a_c -2\ell g_1 \geq -\ell^2$ and thus $a_c'-a_c -\ell g_1 \geq \ell g_1 - \ell^2$, which we used here.

\paragraph*{Case 2:} $P_r(x; p, a_r, g) \geq P_t(x; p', a_t', g')$. In this case, we show that $P_r(x; p, a_r, g) > P_t(x; p, a_t, g) \geq M(x; p, A, g)$.
Indeed, we have
\begin{equation*}
\begin{split}
P_r(x; p, a_r, g) - P_t(x; p, a_t, g) &\geq P_t(x; p', a_t', g') - P_t(x; p, a_t, g)\\
&= (x_1-p_1-\ell)(g_1' - g_1) - \ell g_1 + (x_2-p_2)(g_2' - g_2) + a_t' - a_t\\
&\geq - 2\ell |g_1' - g_1| - \ell |g_2' - g_2| - \ell g_1 + a_t' - a_t\\
&\geq \ell (g_1 - 4\ell) > 0
\end{split}
\end{equation*}
as in the previous case.

\paragraph*{Case 3:} $P_r(x; p, a_r, g) \geq P_b(x; p', a_b', g')$. In this case, we show that $P_r(x; p, a_r, g) > P_b(x; p, a_b, g) \geq M(x; p, A, g)$.
Indeed, we have
\begin{equation*}
\begin{split}
P_r(x; p, a_r, g) - P_b(x; p, a_b, g) &\geq P_b(x; p', a_b', g') - P_b(x; p, a_b, g)\\
&= (x_1-p_1-\ell)(g_1' - g_1) - \ell g_1 + (x_2-p_2)(g_2' - g_2) + a_b' - a_b\\
&\geq - 2\ell |g_1' - g_1| - \ell |g_2' - g_2| - \ell g_1 + a_b' - a_b\\
&\geq \ell (g_1 - 4\ell) > 0
\end{split}
\end{equation*}
as in the previous cases.

\paragraph*{Case 4:} $P_r(x; p, a_r, g) \geq P_r(x; p', a_r', g')$. This case is in fact impossible, since
\begin{equation*}
\begin{split}
P_r(x; p', a_r', g') - P_r(x; p, a_r, g) &= (x_1-p_1-\ell)(g_1' - g_1) + \Gamma \ell - \ell g_1 + (x_2-p_2)(g_2' - g_2) + a_r' - a_r\\
&\geq -2 \ell |g_1' - g_1| - \ell |g_2' - g_2|  + \Gamma \ell - \ell g_1 + a_r' - a_r\\
&\geq \ell(\Gamma + g_1 - 4 \ell) \geq \ell (g_1 - 4\ell) > 0
\end{split}
\end{equation*}
as in the previous cases.

\paragraph*{Case 5:} $P_r(x; p, a_r, g) \geq P_l(x; p', a_l', g')$. In this case, we show that $P_r(x; p, a_r, g) > P_c(x; p, a_c, g) \geq M(x; p, A, g)$. Indeed, assume towards a contradiction that $P_r(x; p, a_r, g) \leq P_c(x; p, a_c, g)$. Then
$$0 \leq P_c(x; p, a_c, g) - P_r(x; p, a_r, g) = (x_1-p_1)\Gamma + a_c - a_r - \Gamma \ell / 2$$
which yields $x_1 \geq (a_r-a_c)/\Gamma + \ell/2 + p_1$. But then, we have
\begin{equation*}
\begin{split}
& \quad P_l(x; p', a_l', g') - P_r(x; p, a_r, g)\\
&= (x_1-p_1-\ell)(2\Gamma + g_1' - g_1) + \Gamma \ell - \ell g_1 + (x_2-p_2)(g_2' - g_2) + a_l' - a_r\\
&\geq \Gamma (\ell + 2(x_1-p_1-\ell)) -2 \ell |g_1' - g_1| - \ell |g_2' - g_2| - \ell g_1 + a_l' - a_r\\
&\geq 2(a_r-a_c) -2 \ell |g_1' - g_1| - \ell |g_2' - g_2| - \ell g_1 + a_l' - a_r\\
&\geq \ell (g_1 - 8\ell) > 0
\end{split}
\end{equation*}
where we used assumptions 1 and 2, as well as $a_r-a_c \geq -2\ell^2$ by \eqref{eq:mesa-gap} and $g_1 > 8\ell$. We have thus obtained a contradiction and the claim holds in this case as well.
\end{proof}

\subsection{Proof of Proposition~\ref{prop:linear-kkt-CLS-hard}}
\label{sec:mesa-proof-of-prop}

We reduce from the problem of computing an $\eps$-KKT point of a continuously differentiable function $h: [0,1]^2 \to \mathbb{R}$ with $L$-Lipschitz gradient $\nabla h$, where both $h$ and $\nabla h$ are given as well-behaved arithmetic circuits. This problem is known to be \cls/-complete~\cite{FGHS22}. We do not need to define well-behaved arithmetic circuits here, since we will only use the fact that they can be evaluated in polynomial time (see \cite{FGHS22}).

Without loss of generality, we can assume that $h$ satisfies:
\begin{itemize}
    \item $h(x) \in [0.49,0.51]$ for all $x \in [0,1]^2$,
    \item $\nabla h(x) \in [-0.001,0.001]^2$ for all $x \in [0,1]^2$,
    \item $L \leq 1/100$.
\end{itemize}
Indeed, this is easy to achieve by scaling $h$ and $\eps$ by some sufficiently small factor, and by adding a constant offset to $h$. This does not change the set of solutions.

\paragraph*{\bf Construction.}
Set $\ell := 1/(2^n-1)$ for some sufficiently large $n \in \mathbb{N}$ such that $\ell \leq \eps/100$. Define the grid $G = \{0,\ell,2\ell,\cdots,1-\ell,1\}^2$.

Since $h$ and $\nabla h$ are given as well-behaved arithmetic circuits, we can construct the following two Boolean circuits in polynomial time:
\begin{itemize}
    \item $a: G \to [0.45,0.55]$ that satisfies $|a(p) - h(p)| \leq \ell^2/100$ for all $p \in G$,
    \item $g: G \to [-0.01,0.01]^2$ that satisfies $\|2g(p) - \nabla h(p)\|_\infty \leq \ell/100$ for all $p \in G$.
\end{itemize}
Note that this is where we use the ``half-gradient'' trick, namely we let $g(p)$ be close to $\nabla h(p)/2$, instead of $\nabla h(p)$.

Set $\Gamma := 12/\ell$, and $\delta' := \ell^2/100$. The following lemma states that we can now construct a circuit $\circuit$ which implements the corresponding mesa construction in a robust manner.

\begin{lemma}\label{lem:circuit-construction}
Let the following be given: $\delta' > 0$, $\ell = 1/(2^n-1)$ for some $n \in \mathbb{N}$, $\Gamma \geq 12/\ell$, and Boolean circuits\footnote{The outputs of the two Boolean circuits are represented in binary as described in \cref{sec:binary-circuits}.} computing functions:
\begin{itemize}
    \item $a: G \to [0.45,0.55]$
    \item $g: G \to [-0.01,0.01]^2$
\end{itemize}
where $G = \{0,\ell,2\ell,\cdots,1-\ell,1\}^2$. Then, in polynomial time we can compute $\delta > 0$ and construct a linear arithmetic circuit $\circuit$ which implements the mesa construction described by $(\ell, \Gamma, a, g)$ in a \emph{robust} manner, i.e., for every $\pi \in [-\delta,\delta]^m$ (where $m$ is the number of gates of $\circuit$), there exists a vector $\tau \in [-\delta',\delta']^{G\times\{c,r,t,l,b\}}$ such that for all $x \in [0,1]^2$
$$\circuit^\pi(x) = \max_{p \in G} M\big(x;p,A^{p,\tau},g(p)\big)$$
where $A^{p,\tau}$ is the following five-tuple
$$A^{p,\tau} := \big(a(p)+\tau_{p,c},a(p)+\tau_{p,r},a(p)+\tau_{p,t},a(p)+\tau_{p,l},a(p)+\tau_{p,b}\big).$$
\end{lemma}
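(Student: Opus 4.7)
The plan is to realise the three-layer construction sketched in Step 2(b) of the technical overview: a \emph{decoding} layer that, given a continuous input $x \in [0,1]^2$, produces binary encodings of the $O(1)$ grid points of $G$ that are near $x$; an \emph{evaluation} layer that, for each decoded grid point $p$, invokes the given Boolean circuits for $a(p)$ and $g(p)$ and then assembles the five linear pieces $P_c, P_r, P_t, P_l, P_b$ that make up the mesa $M(\cdot;p,A^{p,\tau},g(p))$; and an \emph{aggregation} layer that takes a $\min$ over the five pieces of each mesa and then a $\max$ over all relevant mesas. By \cref{lem:mesa-drop}, each mesa has effective support of radius at most $\ell/2 + 3/\Gamma \le 3\ell/4$, so only a constant number of mesas can attain the maximum at any given $x$, and the aggregation layer has constant size.

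To circumvent the fact that bit extraction is inherently discontinuous (and hence cannot be implemented exactly by a linear arithmetic circuit), I partition $G$ into four shifted sub-grids $S_1, \dots, S_4$ of double spacing, build a sub-circuit computing $g_i(x) := \max_{p \in S_i} M(x; p, A^{p,\tau}, g(p))$ for each $i$, and let $\circuit(x) := \max_i g_i(x)$. Inside each $g_i$, the averaging trick is applied in the manner described in the overview: bits are extracted not only at $x$ but also at $11$ small nearby shifts, and the resulting mesa values are averaged. The failure regions of the four sub-grids are arranged so that, whenever any bit extraction fails for $g_i$, the averaged value of $g_i$ falls to at most $1/6$. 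Since \cref{lem:mesa-value} guarantees $\max_j g_j(x) \ge 1/3$, the correctly-decoding sub-circuits always dominate the outer $\max$, and the erroneous contributions of a failing $g_i$ never reach $\circuit(x)$. The one remaining non-trivial arithmetic step is assembling a linear piece of a mesa, which requires the product of a continuous quantity $x_j - p_j$ with a binary-encoded gradient component $g_j(p)$; this is done by a \ctb sub-circuit, which realises continuous-times-binary multiplication exactly using only truncated linear gates. All remaining operations ($+$, multiplication by a fixed rational, $\min$, $\max$, $\trunc$) are directly expressible by truncated linear gates.

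The main obstacle, and the content that \cref{sec:circuit-implementation} has to supply, is the robustness analysis. With each truncation gate possibly perturbed by up to $\delta$, one must show that for a suitable $\delta$, polynomial in $\delta'$ and the size of the construction, the only effect on $\circuit^\pi$ is to additively perturb the constant term of each of the five pieces of each mesa by at most $\delta'$, while leaving every gradient and the grid structure intact. This decomposes into three sub-claims: (i) binary values and Boolean operations are computed \emph{exactly} under perturbation, by padding each $0/1$ bit sufficiently far from the truncation threshold so that the worst-case cumulative perturbation is absorbed; (ii) the averaged bit extraction, the \ctb multipliers, and the $\min$/$\max$ aggregation all have the property that perturbations propagate only into the additive constants of the final linear pieces and not into their gradients, which are rigidly determined by the exactly-decoded values of $g(p)$ and by the fixed parameter $\Gamma$; and (iii) the slight enlargement of the bit-extraction failure region induced by the perturbations is still small enough that at most two of the twelve averaged extractions fail, preserving the $\le 1/6$ bound on failing $g_i$ and hence the domination argument above.

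Once these three sub-claims are established, the vector $\tau \in [-\delta',\delta']^{G \times \{c,r,t,l,b\}}$ is read off piece by piece from the net constant perturbation that each linear piece of each mesa incurs, and the identity $\circuit^\pi(x) = \max_{p \in G} M(x;p,A^{p,\tau},g(p))$ holds pointwise by construction. Choosing $\delta$ to be an explicit inverse polynomial in the number of gates of $\circuit$ and in $1/\delta'$ (small enough to validate the exact-decoding padding and the bit-extraction failure bound) ensures that the whole construction is carried out in polynomial time, completing the proof of the lemma.
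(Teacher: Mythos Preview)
Your proposal follows the paper's approach closely: the four-sub-grid partition, the twelve-point averaging with the $1/6$ versus $1/3$ domination argument, the \ctb multiplier, and the three robustness sub-claims all match Section~\ref{sec:circuit-implementation}.

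There is, however, one genuine subtlety that you describe in the wrong order. You write that ``the resulting mesa values are averaged,'' which reads as averaging the twelve post-$\min$ mesa values. The paper explicitly remarks (in Section~\ref{sec:final-circuit}) that this ordering does \emph{not} work: if you first take the $\min$ over the five pieces to form each of the twelve mesas and then average those mesas, the perturbations shift the piece boundaries of each copy slightly differently, so near a boundary the average can blend, say, a center piece of one copy with a right piece of another, producing a gradient that is neither $g(p)$ nor $(-\Gamma+g_1(p),\,g_2(p))$. This would directly falsify your sub-claim~(ii). The paper's fix is to average \emph{piecewise}: for each direction $d \in \{c,r,t,l,b\}$, average the twelve copies of $P_d$ first (all summands share exactly the same gradient, so averaging only averages the additive perturbations), and only \emph{then} take the $\min$ over the five averaged pieces. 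With that single correction to the order of operations, your plan and the paper's coincide.
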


We will now complete the proof of \cref{prop:linear-kkt-CLS-hard} assuming this lemma. The next section will then be dedicated to proving the lemma.

By \cref{lem:circuit-construction} we obtain $\delta > 0$ and a circuit $\circuit$ such that for all $x \in [0,1]^2$ and all perturbations $\pi \in [-\delta,\delta]^m$ we have
$$\circuit^\pi(x) = \max_{p \in G} M\big(x;p,A^{p},g(p)\big)$$
where $A^p = (a_c^p, a_r^p, a_t^p, a_l^p, a_b^p)$ satisfies $|a_i^p - a(p)| \leq \delta'$ for all $i$ and $p$. The following claim shows that the function computed by the circuit $\circuit^\pi$ satisfies the conditions of \cref{lem:mesa-bad-pieces}.

\begin{claim}
For all adjacent $p,p' \in G$, we have
\begin{enumerate}
    \item $\|g(p) - g(p')\|_\infty \leq \ell$,
    \item for all $\tau \in [-\delta',\delta']^{G\times\{c,r,t,l,b\}}$ and all $i,j \in \{c,r,t,l,b\}$
    \begin{enumerate}
        \item $a_i^p \in [0.4,0.6]$, and
        \item $|a_i^{p'} - a_j^p - \langle 2g(p), p'-p \rangle| \leq \ell^2$.
    \end{enumerate}
\end{enumerate}
\end{claim}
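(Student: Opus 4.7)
The strategy is to reduce each bound to the approximation bounds between $a,g$ and $h,\nabla h$ (namely $|a(p)-h(p)| \le \ell^2/100$ and $\|2g(p)-\nabla h(p)\|_\infty \le \ell/100$), together with smoothness of $h$, via straightforward triangle inequalities. Recall throughout that adjacency means $\|p-p'\|_\infty \le \ell$, that $a_i^p$ satisfies $|a_i^p-a(p)| \le \delta' = \ell^2/100$ for every $i$, and that $\nabla h$ is $L$-Lipschitz with $L \le 1/100$.

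For item (1), I would write
\[
\|g(p)-g(p')\|_\infty \le \tfrac{1}{2}\|2g(p)-\nabla h(p)\|_\infty + \tfrac{1}{2}\|\nabla h(p)-\nabla h(p')\|_\infty + \tfrac{1}{2}\|\nabla h(p')-2g(p')\|_\infty,
\]
and then bound the first and last terms by $\ell/200$ using the approximation of $\nabla h$ by $2g$, and the middle term by $L\|p-p'\|_\infty/2 \le \ell/200$. The sum $3\ell/200$ is comfortably below $\ell$.

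For item 2(a), simply observe that $a_i^p \in [a(p)-\delta', a(p)+\delta'] \subseteq [0.45-\ell^2/100,\,0.55+\ell^2/100] \subseteq [0.4,0.6]$, since $\ell$ can be taken small.

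Item 2(b) is the only step with any substance, and the key idea is Taylor's theorem for functions with Lipschitz gradient:
\[
|h(p')-h(p)-\langle \nabla h(p),p'-p\rangle| \le \tfrac{L}{2}\|p'-p\|_2^2 \le \tfrac{1}{100}\cdot \ell^2 = \tfrac{\ell^2}{100},
\]
using $\|p'-p\|_2 \le \sqrt{2}\,\ell$ and $L \le 1/100$. I would then split
\begin{align*}
|a_i^{p'} - a_j^p - \langle 2g(p), p'-p\rangle|
&\le |a_i^{p'}-h(p')| + |h(p)-a_j^p| \\
&\quad + |h(p')-h(p)-\langle \nabla h(p),p'-p\rangle| + |\langle \nabla h(p)-2g(p),\,p'-p\rangle|,
\end{align*}
and bound each term: the first two by $\delta' + \ell^2/100 = \ell^2/50$ (triangle inequality through $a(p')$ and $a(p)$); the third by $\ell^2/100$ as above; and the fourth by $\|\nabla h(p)-2g(p)\|_\infty \cdot \|p'-p\|_1 \le (\ell/100)(2\ell) = \ell^2/50$. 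Summing gives at most $7\ell^2/100 \le \ell^2$, as required. There is no real obstacle here; the only thing one has to be careful about is keeping track of the various factors of $1/2$ introduced by the half-gradient trick (i.e., that $g$ approximates $\nabla h/2$, not $\nabla h$), which is why the target inner product is $\langle 2g(p),p'-p\rangle$ rather than $\langle g(p),p'-p\rangle$.
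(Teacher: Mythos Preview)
Your proposal is correct and follows essentially the same approach as the paper's own proof: the same triangle-inequality decompositions for items (1) and 2(b), the same Taylor estimate $|h(p')-h(p)-\langle\nabla h(p),p'-p\rangle|\le \tfrac{L}{2}\|p'-p\|_2^2$, and the same bounds $\ell^2/50$ on the $a$-terms and the inner-product term (the paper uses Cauchy--Schwarz in the $2$-norm rather than your $\infty$--$1$ H\"older pairing, but both yield $\ell^2/50$). The total $7\ell^2/100$ you obtain matches the paper exactly.
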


\begin{proof}
For the first point, note that by the $L$-Lipschitz-continuity of $\nabla h$, we have $\|\nabla h(p) - \nabla h(p')\|_\infty \leq L \ell \leq \ell/100$, since $L \leq 1/100$. Thus, by construction of $g(\cdot)$ we obtain
\begin{equation*}
\|g(p) - g(p')\|_\infty \leq \|\nabla h(p)/2 - \nabla h(p')/2\|_\infty + 2 \cdot \ell/200 \leq \ell/200 + 2 \cdot \ell/200 \leq \ell.
\end{equation*}

For the second point, note that $a_i^p \in [0.4,0.6]$, because $a(p) \in [0.45,0.55]$ by construction, and $|a_i^p - a(p)| \leq \delta' \leq \ell^2/100 \leq 0.05$. Next, by Taylor's theorem, we also have
$$|h(p') - h(p) - \langle \nabla h(p), p' - p \rangle| \leq \frac{L}{2} \|p' - p\|_2^2 \leq \frac{1}{100} \ell^2$$
where we used $L \leq 1/100$ and $\|p' - p\|_2 \leq \sqrt{2}\ell$. Now, for all $i,j \in \{c,r,t,l,b\}$, we can rewrite this as
\begin{equation*}
|a_i^{p'} - a_j^p - \langle 2g(p), p'-p \rangle| \leq \frac{1}{100} \ell^2 + 2\cdot \ell^2/50 + \ell^2/50 \leq \ell^2
\end{equation*}
where we used $|a_j^{p} - h(p)| \leq |a_j^{p} - a(p)| + |a(p) - h(p)| \leq \delta' + \ell^2/100 \leq \ell^2/50$ (since $\delta' \leq \ell^2/100$) and similarly $|a_i^{p'} - h(p')| \leq \ell^2/50$, as well as
$$|\langle 2g(p) - \nabla h(p), p'-p \rangle| \leq \|2g(p) - \nabla h(p)\|_2 \cdot \|p'-p\|_2 \leq 2 \ell \cdot \|2g(p) - \nabla h(p)\|_\infty \leq \ell^2/50$$
by construction of $g(\cdot)$.
\end{proof}

Now we can use \cref{lem:mesa-bad-pieces} to prove the following claim, which finishes the proof of \cref{prop:linear-kkt-CLS-hard}.

\begin{claim}
If $x \in [0,1]^2$ satisfies the $\eps/3$-KKT conditions with respect to the $\delta$-generalized circuit gradient of $\circuit$, then $x$ also satisfies the $\eps$-KKT conditions with respect to the gradient of $h$.
\end{claim}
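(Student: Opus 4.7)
The plan is to argue by contrapositive: assume $x$ violates the $\eps$-KKT conditions for $\nabla h$ and deduce that any $u \in \widetilde{\partial}_\delta \circuit(x)$ must violate the $\eps/3$-KKT conditions as well. Four cases arise, one for each coordinate and sign, but all are symmetric, so I focus on the case $x_1 > 0$ and $\frac{\partial h}{\partial x_1}(x) > \eps$.

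The first step is to choose a nearest grid point $p^* \in G$ with $\|x - p^*\|_\infty \le \ell/2$, which exists by construction of the grid. Combining $L$-Lipschitzness of $\nabla h$ (with $L \le 1/100$) and the construction guarantee $\|2g(p^*) - \nabla h(p^*)\|_\infty \le \ell/100$, I would estimate
\[
2 g_1(p^*) \ge \frac{\partial h}{\partial x_1}(p^*) - \frac{\ell}{100} \ge \frac{\partial h}{\partial x_1}(x) - L\sqrt{2}\,\ell - \frac{\ell}{100} > \eps - \tfrac{\ell}{50} - \tfrac{\ell}{100}.
\]
Since $\ell \le \eps/100$, this yields $g_1(p^*) > \eps/3$, and in particular $g_1(p^*) \ge 10\ell$, which is the key hypothesis needed to invoke \cref{lem:mesa-bad-pieces}.

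Next, I would unpack the $\delta$-generalized circuit gradient. By definition $u = \sum_{\pi} \mu_\pi \nabla f^\pi(x)$ is a convex combination over perturbations $\pi \in [-\delta,\delta]^m$ with $f^\pi$ differentiable at $x$. By \cref{lem:circuit-construction}, each such $f^\pi$ equals $\max_{p \in G} M(\,\cdot\,;\,p, A^{p,\tau}, g(p))$ for some $\tau \in [-\delta', \delta']^{G \times \{c,r,t,l,b\}}$; the preceding Claim established that these $A^{p,\tau}$ and $g$ satisfy the hypotheses of \cref{lem:mesa-bad-pieces}. That lemma, applied with our $p^*$ (using $g_1(p^*) \ge 10 \ell$), gives $\tfrac{\partial f^\pi}{\partial x_1}(x) \ge g_1(p^*) - \ell$ for every $\pi$ in the support of the convex combination. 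Averaging preserves the lower bound, so $u_1 \ge g_1(p^*) - \ell > \eps/2 - 2\ell$. Since $\ell \le \eps/100$ this strictly exceeds $\eps/3$, contradicting the $\eps/3$-KKT condition $u_1 \le \eps/3$ implied by $x_1 > 0$.

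The only real obstacle is keeping the constant-factor bookkeeping tight: we lose $\ell/50$ from Lipschitzness, $\ell/100$ from the approximation of $\nabla h$ by $2g$, and $\ell$ from the slack in \cref{lem:mesa-bad-pieces}, so the margin between $\eps/2$ and $\eps/3$ must absorb all of these. The condition $\ell \le \eps/100$ built into the construction leaves ample room. The remaining three cases (sign/coordinate variants) are identical up to obvious symmetry, using the corresponding bullets of \cref{lem:mesa-bad-pieces}.
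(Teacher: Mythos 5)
Your proposal is correct and follows essentially the same route as the paper: contrapositive, pick the grid point $p$ with $\|x-p\|_\infty \le \ell/2$, use Lipschitzness of $\nabla h$ and the approximation guarantee on $g$ to get $g_1(p) \ge 10\ell$, and then apply \cref{lem:mesa-bad-pieces} to every differentiable $f^\pi$, noting that the lower bound survives taking convex combinations in $\widetilde{\partial}_\delta \circuit(x)$. The constant bookkeeping also matches the paper's, so nothing further is needed.
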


\begin{proof}
We prove the contrapositive. Let $x \in [0,1]^2$ be a point that does not satisfy the $\eps$-KKT conditions with respect to the gradient of $h$. In other words, there exists $j \in \{1,2\}$ such that at least one of the two following statements holds
\begin{itemize}
\item $x_j > 0$ and $\frac{\partial h}{\partial x_j} (x) > \eps$
\item $x_j < 1$ and $\frac{\partial h}{\partial x_j} (x) < -\eps$
\end{itemize}
In the remainder of this proof, we only consider the case where $x_1 > 0$ and $\frac{\partial h}{\partial x_1} (x) > \eps$. The other three cases are handled analogously. Our goal is to show that $x$ cannot satisfy the $\eps/3$-KKT conditions with respect to the $\delta$-generalized circuit gradient of $\circuit$. Namely, we will show that all $u \in \widetilde{\partial}_\delta \circuit (x)$ satisfy $u_1 > \eps/3$. By the definition of $\delta$-generalized circuit gradient, it suffices to show that $\frac{\partial f^\pi}{\partial x_1} (x) > \eps/3$ for all $\pi \in [-\delta,\delta]^m$ such that $f^\pi$ is differentiable at $x$, where $f^\pi$ denotes the function computed by $\circuit^\pi$.

Let $\pi \in [-\delta,\delta]^m$ be such that $f^\pi$ is differentiable at $x$. Let $p \in G$ be such that $\|x-p\|_\infty \leq \ell/2$. By the $L$-Lipschitz-continuity of $h$ and the construction of $g(\cdot)$ we have
$$\|2g(p) - \nabla h(x)\|_\infty \leq \|2g(p) - \nabla h(p)\|_\infty + \|\nabla h(p) - \nabla h(x)\|_\infty \leq \ell/100 + L \cdot \ell/2 \leq \ell$$
where we used $L \leq 1/100$. As a result, we have
$$g_1(p) \geq \frac{1}{2} \cdot \frac{\partial h}{\partial x_1} (x) - \ell \geq \eps/2 - \ell \geq 10\ell$$
where we used $\ell \leq \eps/100$. As argued above, the assumptions of \cref{lem:mesa-bad-pieces} are satisfied and we can thus use it to deduce that $\frac{\partial f^\pi}{\partial x_1} (x) \geq g_1(p) - \ell \geq \eps/2 - 2\ell > \eps/3$, as desired.
\end{proof}

\section{Construction of the circuit}\label{sec:circuit-implementation}

In this section, we prove \cref{lem:circuit-construction}, which was crucially used in the last section to obtain \cref{prop:linear-kkt-CLS-hard}. In order to prove the lemma, we have to show that we can construct arithmetic circuits that implement the mesa construction from the last section in a robust manner.

In our description of the construction of the circuit, it will be convenient to extend the set of gates we can use.
Consider an arithmetic circuit $\circuit$ with $n$ inputs and one output, and using gates $+, -, c, \times c, \min$, $\max$, $\trunc_{[a,b]}$.\footnote{$\trunc_{[a,b]}$ denotes truncation to the interval $[a,b]$.} Let $m$ denote the number of $\min, \max, \trunc_{[a,b]}$ gates. For any perturbation vector $\pi = (\pi_i)_{i \in [m]} \in \mathbb{R}^m$, we let $\circuit^\pi$ denote the circuit perturbed by $\pi$, namely, for each $i \in [m]$, the $i$th gate of type $\min, \max, \trunc_{[a,b]}$ is perturbed as follows:
\begin{itemize}
    \item $x_i := \trunc_{[a,b]}(a_ix_j + b_ix_k+c_i)$ is replaced by $x_i := \trunc_{[a,b]}(a_ix_j + b_ix_k+c_i + \pi_i)$.
    \item $x_i := \min \{x_j,x_k\}$ is replaced by $x_i := \min \{x_j,x_k + \pi_i\}$.
    \item $x_i := \max \{x_j,x_k\}$ is replaced by $x_i := \max \{x_j,x_k + \pi_i\}$.
\end{itemize}

Our goal now is to prove the following lemma.

\begin{lemma}\label{lem:circuit-all-gates}
Let the following be given: $\delta' > 0$, $\ell = 1/(2^n-1)$ for some $n \in \mathbb{N}$, $\Gamma \geq 12/\ell$, and Boolean circuits computing functions:
\begin{itemize}
    \item $a: G \to [0.45,0.55]$
    \item $g: G \to [-0.01,0.01]^2$
\end{itemize}
where $G = \{0,\ell,2\ell,\cdots,1-\ell,1\}^2$. Then, in polynomial time we can compute $\delta > 0$ and construct a linear arithmetic circuit $\circuit$ (using gates $+, -, c, \times c, \min, \max, \trunc_{[a,b]}$) which implements the mesa construction described by $(\ell, \Gamma, a, g)$ in a \emph{robust} manner, i.e., for every $\pi \in [-\delta,\delta]^m$ (where $m$ is the number of $\min, \max, \trunc_{[a,b]}$ gates of $\circuit$), there exists a vector $\tau \in [-\delta',\delta']^{G\times\{c,r,t,l,b\}}$ such that for all $x \in [0,1]^2$
$$\circuit^\pi(x) = \max_{p \in G} M\big(x;p,A^{p,\tau},g(p)\big)$$
where $A^{p,\tau}$ is the following five-tuple
$$A^{p,\tau} := \big(a(p)+\tau_{p,c},a(p)+\tau_{p,r},a(p)+\tau_{p,t},a(p)+\tau_{p,l},a(p)+\tau_{p,b}\big).$$
\end{lemma}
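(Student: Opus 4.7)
The plan is to build the circuit in modular layers and then analyze how perturbations propagate through the composition, showing that they only manifest as additive $\tau$-perturbations to the values of individual mesa pieces, never affecting the gradients.

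First, I would set up the basic gadgets. I would show that any Boolean circuit can be evaluated exactly by a linear arithmetic circuit: represent each bit as a variable in $\{0,1\}$ and simulate $\textup{AND}$, $\textup{OR}$, $\textup{NOT}$ by $\min$, $\max$, $1-(\cdot)$. Crucially, since $\min\{0,1+\pi\} = 0$ and $\max\{1,0+\pi\} = 1$ exactly whenever $|\pi|<1$, perturbations at any such gate do not propagate past Boolean computations as long as $\delta$ is small. Next, given a continuous $x\in[0,1]$ and a binary bit $b \in \{0,1\}$, the product $b\cdot x = \min\{b,x\}$ is exact, so we can multiply a continuous variable by a binary-encoded value $y = \sum_i 2^{-i} b_i$ via $x\cdot y = \sum_i 2^{-i}\min\{b_i,x\}$, all within the allowed gate set. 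This lets us realize each of the five affine pieces $P_c,\dots,P_b$ of a single mesa $M(\cdot;p,A^{p,\tau},g(p))$ exactly with $g(p)$ given in binary, and then take their $\min$ to obtain the mesa itself. I would track perturbations here and show they can be absorbed into an additive $\tau_{p,i}\in[-\delta',\delta']$ on each of the five pieces, with the gradients $g(p)$ of the pieces preserved exactly.

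Second, I would handle the exponential grid. Performing bit extraction on each coordinate of $x$ yields the binary encoding of the index of a nearby grid point, but with inevitable ``failure'' transition regions where the extracted bits form an invalid intermediate value. I would split $G$ into the four sets $S_1,\dots,S_4$ from the $2\times 2$ coloring and, for each, build $g_i(x) = \max_{y\in S_i} m(x,y)$ by extracting bits at spacing $2\ell$ so that at most one coordinate of one set can fail at a given $x$. Applying the averaging trick over $12$ spatial offsets (as described in the overview) ensures that whenever bit extraction fails, the resulting spurious contribution is at most $1/6$, which is strictly less than $1/3 \leq \max_j g_j$. Since the final output $\max\{g_1,g_2,g_3,g_4\}$ then masks any failed $g_i$ by a correctly-evaluated $g_j$, gradient-carrying pieces that come from bit-extraction failures are never active in the output.

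Third, and this is where the main work lies, I would verify robustness to perturbations. I would organize the circuit so that Boolean-circuit sub-components, bit-selection operations, and binary-gradient decoding all operate on values in $\{0,1\}$ and are therefore exactly robust for $\delta$ sufficiently small. The only $\min,\max,\trunc_{[a,b]}$ gates that sit on ``continuous'' signal paths are (i) those implementing the five linear pieces of each mesa and the $\min$ combining them, (ii) those combining mesas via $\max$, (iii) those in the averaging aggregator, and (iv) those in the (failing) bit-extraction comparators. For category (i), an additive perturbation at the $\min$ gate, or inside an affine piece before truncation, translates exactly into an additive perturbation $\tau_{p,i}$ on a single piece $P_i$, which is precisely the form allowed by the lemma statement. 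For (ii), a perturbation at an outer $\max$ gate is equivalent to offsetting one of its mesa arguments, again absorbable into $\tau$. For (iii), perturbations in the averaging aggregator only shift the failure-detection threshold by $O(\delta)$, which enlarges the failure region slightly but the $1/3$ vs.\ $1/6$ gap leaves room. Category (iv) is the main subtlety: perturbations at bit-extractor comparators enlarge the failure region, but do not create new affine pieces with wrong gradients because the extracted bits are still either $\{0,1\}$ outside the (enlarged) failure region, and within that region the masking argument of (iii) kicks in. Choosing $\delta$ polynomially small in $\delta',\ell$, and the depth of the circuit closes all of these bounds simultaneously.

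The main obstacle, as suggested, is the bookkeeping in the perturbation analysis: making precise the claim that every perturbed gate either acts on a Boolean signal (hence has no effect), perturbs a single mesa piece additively (hence is absorbed into some $\tau_{p,i}$), or lives in a region that is later masked by a larger $g_j$ term. The rest reduces to carefully composing standard interpolation and bit-extraction gadgets within the restricted gate set.
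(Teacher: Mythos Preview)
Your high-level architecture matches the paper's, but two of your gadget-level claims are wrong in ways that break the robustness guarantee.

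First, your Boolean simulation via $\min/\max$ is \emph{not} perturbation-robust. You justify it with ``$\min\{0,1+\pi\}=0$ and $\max\{1,0+\pi\}=1$'', but those are not the dangerous cases. Under the perturbation model $\min\{x_j,x_k\}\mapsto\min\{x_j,x_k+\pi\}$, the gate $\min\{1,1+\pi\}$ outputs $1+\pi$ when $\pi<0$, and $\min\{1,0+\pi\}$ outputs $\pi$ rather than $0$; either error then cascades through the rest of the Boolean computation. The paper avoids this by implementing AND as $\trunc_{[0,1]}\bigl(4(v_j+v_k-1.5)\bigr)$, so the pre-truncation value is always $\le -2$ or $\ge 2$ and any perturbation with $|\pi|<1$ is absorbed exactly. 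The same issue afflicts your bit-multiply $b\cdot x=\min\{b,x\}$: when $b=0$ and $x+\pi<0$ (which happens since the continuous factor is $x_i-p_i$ and can be near~$0$ or negative), the output is $x+\pi$, not~$0$, so the gradient you output is corrupted. The paper's \textsc{BitMultiply} uses a deliberately over-shifted $\max$ so that the wrong branch is never within perturbation range.

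Second, and more structurally, you build the mesa by taking $\min$ of the five pieces and only then average over the $12$ bit-extraction offsets. The paper does it in the opposite order---average each of the five pieces separately, \emph{then} take $\min$---and explicitly flags your order as fatal: under perturbation the $j$th copy of the mesa has its piece boundaries shifted slightly differently from the $j'$th copy, so near a boundary you end up averaging, say, a $P_c$ piece with gradient $g$ against a $P_r$ piece with gradient $(-\Gamma+g_1,g_2)$, producing a convex-combination gradient that cannot be written as any $M(x;p,A^{p,\tau},g(p))$. That violates the exact equality the lemma demands, not just an $\eps$-approximation. Averaging pieces first keeps all $12$ copies of $P_d$ with identical gradient, so the average is still an affine function with the correct gradient and merely an averaged additive offset---which is exactly the $\tau_{p,d}$ the lemma allows.
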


Indeed, \cref{lem:circuit-all-gates} implies \cref{lem:circuit-construction}, because of the following lemma, which is proved in \cref{app:proof-perturbation-lemma}.

\begin{lemma}
\label{lem:perturbation}
Let $\circuit$ be an arithmetic circuit with $n$ inputs and one output, and using gates $+, -, c, \times c, \min, \max, \trunc_{[a,b]}$, and such that the output gate is a $\trunc_{[0,1]}$ gate. Then we can construct in polynomial time a number $K > 0$ and an arithmetic circuit $\circuitbar$ with the same number of inputs and outputs as $\circuit$, but that only uses $\trunc_{[0,1]}$ gates, and such that for any $\delta \leq 1/K$ and any $\delta$-perturbation $\pi$ of $\circuitbar$, there exists a $\delta K$-perturbation $\sigma$ of $\circuit$ that satisfies
$$f^\sigma(y) = \fbar^\pi(y)$$
for all $y \in [0,1]^n$, where $f^\sigma$ and $\fbar^\pi$ are the functions computed by $\circuit^\sigma$ and $\circuitbar^\pi$, respectively.
\end{lemma}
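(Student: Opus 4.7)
The plan is to compile $\circuit$ gate-by-gate into $\circuitbar$ after applying a single global affine rescaling $v \mapsto \bar v := (v + R)/(2R)$ that sends every wire value of $\circuit$ into $[0,1]$ when the inputs lie in $[0,1]^n$. The bound $R \ge 1$ is obtained by an easy forward pass through $\circuit$ (bounding the absolute value of each gate's output inductively); it is computable in polynomial time, though its magnitude may be exponential in the depth of $\circuit$, which I will swallow into the constant $K$. Because each $\trunc_{[0,1]}$ gate already carries a two-variable affine combination inside, the gates $+, -, c, \times c$ each cost only a single $\trunc_{[0,1]}$ ``identity wrapper'' whose rescaled input is kept strictly inside $[0,1]$ by the choice of $R$, so the wrapper evaluates linearly without clipping (e.g., $x_k = x_i + x_j$ compiles to $\bar x_k := \trunc_{[0,1]}(\bar x_i + \bar x_j - \tfrac{1}{2})$). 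The gates $\min, \max, \trunc_{[a,b]}$ are handled by the exact identities $\min(x,y) = x - \trunc_{[0,1]}(x - y)$, $\max(x,y) = x + \trunc_{[0,1]}(y - x)$, and $\trunc_{[a,b]}(z) = \min(b,\max(a,z))$, each valid once the arguments are rescaled into $[0,1]$. Each original gate is thus simulated by $O(1)$ new $\trunc_{[0,1]}$ gates, so $\size(\circuitbar) = O(\size(\circuit))$.

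For the perturbation statement, fix a $\delta$-perturbation $\pi$ of $\circuitbar$ with $\delta \le 1/K$. Each perturbed $\trunc_{[0,1]}$ gate in $\circuitbar$ sits inside exactly one gadget; provided no identity wrapper clips, its perturbation acts as an additive shift of magnitude $O(\delta)$ on the gadget's rescaled output wire. I would ensure the no-clipping property by choosing $R$ large enough that every rescaled wire lies in an interval $[\alpha, 1-\alpha]$ with $\alpha$ comfortably exceeding the worst-case accumulated shift, enlarging the forward-pass bound by a $1/K$-sized margin. Starting from each perturbed gate, I would propagate its shift $\eta$ forward along the downstream wires: through a $+, -, c, \times c$ gate the shift is multiplied by the relevant coefficient; when it arrives at the next $\min/\max/\trunc_{[a,b]}$ gate $g$ of $\circuit$, it is absorbed into a perturbation of $g$ in $\sigma$. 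For $\trunc_{[a,b]}$ this absorption is immediate: a shift $\eta$ on the input equals a perturbation $\eta$ of the $c$ constant. For $\min(x,y)$ (and symmetrically $\max$), a shift $\eta_y$ on the second input is absorbed directly; a shift $\eta_x$ on the first input is absorbed by setting $\pi_g = -\eta_x$ and then continuing to propagate an output shift of $\eta_x$ further downstream, using the identity $\min(x+\eta_x, y) = \min(x, y - \eta_x) + \eta_x$. Any residual output shift eventually reaches the outer $\trunc_{[0,1]}$ output gate of $\circuit$ and is absorbed into its $c$-perturbation. Summing shifts from all sources at each target gate, the total perturbation of each coordinate of $\sigma$ is bounded by $C \cdot \delta$, where $C$ is polynomial in $\size(\circuit)$, $R$, and the maximum coefficient product along any chain of affine gates; setting $K := C$ gives the desired $\delta K$-perturbation.

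The main obstacle is precisely this accumulation of shifts along long chains of affine gates: since each $\times c$ can multiply the shift by $|c|$, and $R$ itself can be exponential in the depth of $\circuit$, the constant $K$ will in general be exponential in the size of $\circuit$ as a number, though it has polynomial bit complexity and is computable in polynomial time, which is exactly what the lemma asks. A secondary subtlety is ensuring that identity wrappers never clip after perturbation; this is a margin-choice argument decoupled from the rest of the proof once $R$ and $\alpha$ are fixed. Given those, the pointwise equality $f^\sigma(y) = \fbar^\pi(y)$ for all $y \in [0,1]^n$ will follow by unrolling the construction: each identity wrapper computes its linear function exactly, each $\min/\max/\trunc$ gadget realises the above identities exactly, and the shifts absorbed into $\sigma$ are chosen precisely to reproduce the outputs of the corresponding perturbed gadgets of $\circuitbar^\pi$.
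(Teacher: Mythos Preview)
Your overall architecture—globally rescale wire values into $[0,1]$, wrap each gate of $\circuit$ in a small $\trunc_{[0,1]}$ gadget, then track perturbations as additive shifts that propagate through affine chains and are absorbed at the next nonlinear gate of $\circuit$—matches the paper's approach closely, and your treatment of $+,-,c,\times c$ and of $\min,\max$ is correct. The no-clipping margin argument and the bound on $K$ are handled somewhat loosely, but they can be made rigorous along the lines you sketch.

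There is, however, a genuine gap in your gadget for $\trunc_{[a,b]}$. You implement it as $\min(b,\max(a,z))$, which after rescaling uses \emph{two} $\trunc_{[0,1]}$ gates whose lower thresholds are both active: one clamps at the $a$-boundary, the other at the $b$-boundary. Under a perturbation $\pi$ of $\circuitbar$, the gadget therefore computes (in the original scale) something of the form
\[
\min\bigl(b+\theta_2,\ \max(a+\theta_1,\ z+\theta_0)\bigr)
\]
where $\theta_1$ and $\theta_2$ are in general different functions of $\pi$. You then need this to equal $\trunc_{[a,b]}(z+\sigma_i)+\eta$ for some $\sigma_i,\eta$ not depending on $z$, because the perturbation model for $\circuit$ allows only a single additive shift on the input of a $\trunc_{[a,b]}$ gate. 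But $\trunc_{[a,b]}(z+\sigma_i)+\eta=\min(b+\eta,\max(a+\eta,z+\sigma_i+\eta))$ forces $\theta_1=\theta_2$, which is not guaranteed. So your absorption argument breaks precisely at $\trunc_{[a,b]}$ gates.

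The fix is to use a single truncation: take the affine map $\psi(z)=(z-a)/(b-a)$ and implement $\trunc_{[a,b]}(z)$ as $\psi^{-1}(\trunc_{[0,1]}(\psi(z)))$, followed by one re-encoding wrapper. Now only one $\trunc_{[0,1]}$ actually clips; its perturbation becomes exactly $(b-a)$ times a shift on $z$ and is absorbed as $\sigma_i$, while the wrapper contributes only an output shift. This is what the paper does; with this change your proof goes through.
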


In the remainder of this section, we present an overview of the proof of \cref{lem:circuit-all-gates}, followed by the formal proof.

\paragraph{Rescaling the grid.}

\cref{lem:circuit-all-gates} uses the grid $G = \{0, 1/(2^n-1), \dots,
1\}^2$. This grid discretizes each dimension into $2^n$ points, and has points
along the boundaries at $0$ and $1$. Since we will use a standard binary
encoding to represent grid points, it will be more technically convenient to
use the grid $\widetilde{G} = \{0, 1/{2^n}, \dots, (2^{n-1} - 1)/2^n\}$, which
discretizes each dimension into $2^n$ points and has points along the
boundaries at $0$ and $1 - 1/2^n$. 

This can be achieved via a straightforward rescaling. From
\cref{lem:circuit-all-gates}
we are given Boolean
circuits $a : G \rightarrow [0.45, 0.55]$ and $g : G \rightarrow [-0.01,
0.01]^2$. We then build the following Boolean circuits.
\begin{itemize}
\item We build the circuit $\widetilde{a} : \widetilde{G} \rightarrow [0.45, 0.55]$. For each point $x = (x_1/2^n, x_2/2^n)$
we set 
\begin{equation*}
\widetilde{a}(x) = a(x_1/(2^n-1), x_2/(2^n-1)).
\end{equation*}
That is, $\widetilde{a}$ simply looks up the
corresponding point in $x' \in G$ and outputs $a(x')$. 

\item We build the circuit $\widetilde{g} : \widetilde{G} \rightarrow [-0.02, 0.02]^2$. 
For each point $x = (x_1/2^n, x_2/2^n)$ we set 
\begin{equation*}
\widetilde{g}(x) = \frac{1}{1 - 1/2^n} \cdot g(x_1/(2^n-1), x_2/(2^n-1))
\end{equation*}
So $\widetilde{g}$ looks up the corresponding point in $G$ and outputs the
corresponding gradient, but it also rescales that gradient to account for the
smaller space.
\end{itemize}
Observe that $\widetilde{a}$ and $\widetilde{g}$ can both be constructed in
polynomial time with respect to $a$ and $g$. 

In the rest of this section we will build a linear circuit that outputs, for
each point $x \in [0, 1 - 1/2^n]^2$ 
\begin{equation*}
\widetilde{f}(x) = \max_{p \in \widetilde{G}} M(x; p, \widetilde{a}(p), \widetilde{g}(p))
\end{equation*}
We can then build the final circuit for 
\cref{lem:circuit-all-gates} by undoing the rescaling in the following
way.
\begin{equation*}
f(x) = \widetilde{f}((1 - 1/2^n) \cdot x).
\end{equation*}
Clearly $f$ can be constructed in polynomial time once we have constructed
$f'$. Observe that, since we have undone the rescaling of the gradients in
$\widetilde{g}$, so we will have $f(x) = \max_{p \in G} M(x; p, a(p), g(p))$, as required.

\paragraph{\bf Important properties of the mesas.}

We will use two properties of the mesa construction repeatedly. 
\begin{enumerate}
\item From \cref{lem:mesa-value} we have that $f(x) \in [1/3,
2/3]$ for all $x \in [0, 1]$. Since $\widetilde{f}$ is a simple rescaling of
the domain of
$f$, which does not change the output values, we therefore have
$\widetilde f(x) \in [1/3, 2/3]$ for all $x \in [0, 1-1/2^n]$ as well. 

\item 
By \cref{lem:mesa-drop} and the choice of $\Gamma$ in the statement of
\cref{lem:circuit-all-gates}, we have 
$M(x; p, a(p), g(p)) \le 0$ whenever
$\| x - y \|_\infty > 3/4 \cdot 1/{(2^n-1)}$. In the rescaled 
$\widetilde f(x)$ we therefore have 
$\| x - y \|_\infty > 3/4 \cdot 1/{2^n}$, meaning that 
the region in which mesa takes positive values is contained within a
region of radius $3/4 \cdot 1/2^n$ around the center of the mesa. 
\end{enumerate}

\paragraph{\bf Overview.}

The rest of this section is dedicated to building a linear circuit that outputs
$\widetilde{f}$. We begin by building linear circuits that implement some
useful operations. In \cref{sec:binary-circuits} we build linear circuits that
can encode and decode values that are represented in binary, and we build a linear
circuit that can simulate a Boolean circuit.  In \cref{sec:mesa-circuits} we
build linear circuits that, given a point $p \in \tG$ encoded in
binary, can output the five affine pieces of $M(x; p, \ta(p), \tg(p))$. To do this, we first build a
linear circuit that can multiply a continuous variable by a binary variable,
and then we use that operation to build a circuit that can output a specific
affine function whose parameters are encoded in binary. Finally, in
\cref{sec:final-circuit} we build $\tf$ itself. 

Along the way we must consider how our circuits behave when they are evaluated
with perturbations. Recall from \cref{lem:circuit-all-gates} that the $\min$,
$\max$, and $\trunc$ operations will be perturbed, and that we must show that
these perturbations only additively affect the pieces of each of the mesas,
without affecting their gradients. 

To keep track of the effect of the perturbations, for each linear circuit $C$,
we will use $C^\pi$ to refer to the version of that circuit that is evaluated under
the perturbation vector $\pi$. Along the way, we will prove lemmas that bound
the deviation in the output of $C$ as a function of the perturbation vector~$\pi$.

\subsection{\bf Binary variables.}
\label{sec:binary-circuits}

Throughout the construction we will make use of variables that are
encoded in binary. An $n$-bit binary variable can represent values from the set
$V := \left\{ 0, 1/2^n, 2/2^n, \dots, (2^{n}-1)/2^n\right\}$. For each value $x \in
V$, we use $x_1, x_2, \dots, x_n$ to denote the $n$ bits used to
represent that number, where $x_1$ is the most significant bit. 

We will also need to encode negative numbers in binary, and the most
expedient way of doing this for our construction is to write an $n$ bit number
$x$
using $2n$ bits $x^+_1$ through $x^+_n$, and $x^-_1$ through $x^-_n$, where:
\begin{itemize}
\item If $x > 0$ then $x^+_1$ through $x^+_n$ hold a binary encoding of $x$ and
$x^-_i = 0$ for all $i$. 
\item If $x < 0$ then $x^-_1$ through $x^-_n$ hold a binary encoding of $-x$
and
$x^+_i = 0$ for all $i$. 
\item If $x = 0$ then $x^+_i = x^-_i = 0$ for all $i$.
\end{itemize}
Throughout this section, we will use the binary variable $x$ and the set of bits $\{x^j_i\}_{1 \le i \le n, j \in \{-,+\}}$ interchangeably to refer to the value $x$.

When we use variables in a linear circuit to hold a binary variable $x$, it is
possible that some bits used in the encoding of $x$ may not be in $\{0, 1\}$.
If it is the case that $x^j_i \in \{0, 1\}$ for all $i$ and $j$ then we say
that $x$ is a \emph{correct} binary encoding.

\subsubsection{The \decode circuit: Decoding a binary variable}

The first circuit that we build is the standard \emph{binary decoder}
circuit, which takes a correct binary encoding of a variable, and outputs a
continuous variable with that value.
Given a binary variable $x$ we define
\begin{align*}
\decode(x) & = \phantom{-} x^+_1 \cdot \frac{1}{2} + x^+_2 \cdot \frac{1}{4} + \dots +
x^+_n \cdot \frac{1}{2^n} \\
& \phantom{=} -x^-_1 \cdot \frac{1}{2} - x^-_2 \cdot \frac{1}{4} - \dots -
x^-_n \cdot \frac{1}{2^n} \\
\end{align*}
This operation simply follows the definition of the binary encoding that we are
using, and so if $x$ is a correct binary variable, then the circuit will output
a continuous variable holding $x$.

Recall that $\decode^\pi$ is the version of $\decode$ that is evaluated under
the perturbation vector~$\pi$. 
Since $\decode$ does not use $\min$, $\max$ or truncation, 
we have that $\decode^\pi(x) = \decode(x)$ for all $x$ and $\pi$.

\subsubsection{The \extract circuit: Extracting bits from a continuous variable}

Given a continuous variable $x \in [0, 1]$, an integer $n \ge 1$, and a
positive rational constant $L > 0$, the $\extract(x, n, L)$ circuit will output
a binary variable $b$ consisting of bits $b^+_1$, $b^+_2$, \dots, $b^+_n$, which are
the first $n$ bits of $x$. Since a
linear circuit must compute a continuous function, it is not possible to
correctly extract bits for every value of $x \in [0, 1]$. Thus, our circuit
will fail for some inputs, and the constant $L$ will control the values for
which these failures occur. 

We implement \extract in the following way. We fix $x_0 = x$, and we define
\begin{align*}
b^+_i &= \trunc_{[0, 1]} \left( \left( x_{i-1} - \frac{1}{2^i} \right) \cdot
\frac{1}{L} \right), \\
x_i &= x_{i-1} - \frac{b^+_i}{2^i}.
\end{align*}
Since we only require \extract to work for positive inputs, we can simply set
$b^-_i = 0$ for all $i$.

It is relatively straightforward to show that if $x$ does not lie in a
\emph{bad region}
$\left[\frac{k}{2^n}, \frac{k}{2^n} + L\right]$ for any integer $k \in \{0, 1, \dots,
2^n - 1\}$, then this circuit will correctly extract the first $n$ bits of $x$.
This is because when $x$ is not in a bad region, the value inside the truncation
gate will either be less than or equal to $0$, or greater than or equal to $1$,
and thus the value will be truncated to either $0$ or $1$, meaning that the bit
is correctly decoded. 

On the other hand, if $x$ does lie in a bad region, then the value inside
the truncation gate may lie in the range $(0, 1)$, meaning that the bit $b^+_i$ will
not be correctly decoded, and all bits $b^+_j$ with $j > i$ may also have
incorrect values. 

Recall that $\extract^\pi$ is the $\extract$ circuit evaluated under the
perturbation vector
$\pi$. The following lemma shows that the circuit still functions correctly
when evaluated with perturbations, but the bad regions are slightly
expanded. 

\begin{lemma}
\label{lem:extract}
If $x$ does not lie in 
\begin{equation*}
\left[\frac{k}{2^n} - \frac{\max_i |\pi_i|}{L}, \;\; \frac{k}{2^n} + L +
\frac{\max_i |\pi_i|}{L} \right] 
\end{equation*}
for any integer $k \in \{0, 1, \dots, 2^n-1\}$, then $\extract^\pi(x,
n, L)$ will correctly extract the leading~$n$ bits of $x$. 
\end{lemma}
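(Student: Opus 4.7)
The plan is to prove the lemma by induction on the step index $i$, establishing the following invariant: if $x$ avoids every bad region listed in the statement, then after step $i$ the bits $b^+_1,\ldots,b^+_i$ computed by $\extract^\pi$ are exactly the $i$ leading binary digits of $x$, and the intermediate variable $x_i$ equals exactly the residual $x - \sum_{j=1}^i b^+_j/2^j$. The base case $i = 0$ is immediate from $x_0 = x$.

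For the inductive step, the key observation is that inside the \extract pipeline the only perturbed operation is the truncation itself; the $+$, $-$, and multiplication-by-a-constant gates used to form $(x_{i-1}-1/2^i)/L$ and to apply the update $x_i = x_{i-1} - b^+_i/2^i$ are unperturbed, hence exact. Assuming the invariant through step $i-1$, the value of $x_{i-1}$ fed into the $i$-th truncation is therefore the exact residual and in particular lies in $[0,1/2^{i-1})$; the true $i$-th bit of $x$ equals $0$ iff $x_{i-1} < 1/2^i$. The gate computes $b^+_i = \trunc_{[0,1]}\bigl((x_{i-1}-1/2^i)/L + \pi_i\bigr)$, and one reads off that outside a small interval of $x_{i-1}$-values around the critical threshold $1/2^i$, whose one-sided expansion is controlled by $|\pi_i|$ and $L$, the gate's argument is either $\le 0$ or $\ge 1$; the truncation then produces cleanly $0$ or $1$, matching the correct bit of $x$.

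To conclude, I would translate the per-step bad interval from $x_{i-1}$-coordinates back to $x$-coordinates via the shift $x = x_{i-1} + S_{i-1}$ with $S_{i-1} := \sum_{j=1}^{i-1} b^+_j/2^j$, a multiple of $1/2^{i-1}$ and hence of $1/2^n$. Taking the union over $1 \le i \le n$ places all dangers around points of the form $k/2^n$ with $k \in \{0,1,\ldots,2^n-1\}$, matching exactly the family of intervals listed in the lemma (as the first $i-1$ bits of $x$ range over all $2^{i-1}$ combinations, the shifts $S_{i-1}$ collectively cover all relevant multiples). Once $x$ is outside this union, the truncation at every step is clean, $b^+_i$ is the genuine next bit of $x$, and the exact update $x_i = x_{i-1} - b^+_i/2^i$ maintains the invariant. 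The main bookkeeping point is to verify that the expansion of the bad interval, after accounting for a perturbation $\pi_i$ inside the truncation, has the form $\max_i|\pi_i|/L$ stated in the lemma; no error propagates between steps, because a successful truncation outputs a value in $\{0,1\}$ exactly, so all subsequent subtractions are exact.
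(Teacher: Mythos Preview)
Your proposal is correct and follows essentially the same approach as the paper: an induction on $i$ with the invariant that $b^+_1,\ldots,b^+_i$ are the true leading bits and $x_i = x - \sum_{j\le i} b^+_j/2^j$, observing that only the truncation gate is perturbed and that outside the stated bad regions its argument is forced to be $\le 0$ or $\ge 1$. Your treatment is in fact slightly more explicit than the paper's, which asserts ``one of these two cases must be true since $x$ is not in a bad region'' without spelling out the translation from $x_{i-1}$-coordinates back to $x$-coordinates via the shift $S_{i-1}$.
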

\begin{proof}
We will inductively prove for each $i$ that each $b^+_j$ with $j < i$ correctly
contains bit $j$ of $x$, and that  $x_i = x - \sum_{j = 1}^{i}
\frac{b^+_i}{2^i}$. The base case, when $i = 0$, is trivial.

For the inductive step, 
if $\pi_i$ is the perturbation associated with the truncation operation used to
define $b^+_i$, then we have
\begin{equation*}
b^+_i = \trunc \left( \left( x_{i-1} - \frac{1}{2^i} \right)
\cdot \frac{1}{L} + \pi_i \right).
\end{equation*}
Hence, if $x_{i-1} - 1/2^i \ge L - \pi_i/L$ then we will have $b^+_i =
1$, while if $x_{i-1} - 1/2^i \le - \pi_i/L$ then we will have $b^+_i =
0$, and we note that one of these two cases must be true since $x$ is not in a
bad region.
Thus, $b^+_i$ is decoded correctly.
Since $x_{i-1} = x - \sum_{j = 1}^{i-1} \frac{b^+_j}{2^j}$, and
since $x_i = x_{i-1} - \frac{b^+_i}{2^i}$, we have
that $x_i$ is computed correctly as well.
\end{proof}

\subsubsection{Evaluating a Boolean circuit}

A \emph{Boolean circuit} is defined by a tuple $(V, X, Y, G)$, where $V$ is a
set of variables, $X \subset V$ is a set of \emph{input variables}, and $Y
\subset V$ is a set of \emph{output variables}. Each variable can hold a value
from the set $\{0, 1\}$. We can therefore interpret the set $X = \{x_1, x_2,
\dots, x_k\}$ as a vector $x \in \{0, 1\}^k$, and the set $Y = \{y_1, y_2,
\dots, y_l\}$ as a vector $y \in \{0, 1\}^l$. The set $G$ contains gates of the
following form.
\begin{itemize}
\item \textbf{Not:} $v_1 = 1 - v_2$, where $v_1, v_2 \in V$.
\item \textbf{And:} $v_1 = 1$ if $v_2 = v_3 = 1$, and $v_1 = 0$ otherwise,
where $v_1, v_2, v_3 \in V$.
\end{itemize}
We insist that every non-input variable is the output of exactly one gate.
We also require that the circuit be acyclic in the sense that one can
topologically order the set $V = \{v_1, v_2, \dots, v_n\}$ such that 
the gate that outputs a value to variable $v_i$ only takes inputs from
variables
$v_j$ with $j < i$. 

Therefore each Boolean circuit computes a function $F : \{0, 1\}^k \rightarrow \{0, 1\}^l$, where the value $y = F(x)$ is the result of fixing $x$ at the input variables of the circuit, and then evaluating the gates to obtain the output value $y$. 

We can simulate a Boolean circuit via a linear circuit in the following way.
For each not-gate $v_i$ with input $v_j$ we set
\begin{equation*}
v_i = 1 - v_j
\end{equation*}
in the linear circuit. For each and-gate $v_i$ with inputs $v_j$ and $v_k$ we
set
\begin{equation*}
v_i = \trunc_{[0, 1]}(4 \cdot (v_j + v_k - 1.5)).
\end{equation*}
We use $\eval(G, x)$ to refer to a linear circuit that simulates $G$ in this
way on inputs given by the binary variable $x$. 

Recall that $\eval^\pi(G, x)$ refers to the version of \eval that is perturbed
by the vector $\pi$. The following lemma states that, if perturbations in $\pi$
are in the range $(-1, 1)$, then $\eval^\pi(G, x)$ correctly evaluates the
Boolean circuit with no perturbations appearing in the output variables. When
we prove \cref{lem:circuit-all-gates}, we will set $\delta$ to be much smaller
than $1$, meaning that we will have $\pi_i \in (-1, 1)$, so the lemma implies
that from now on we can assume that all Boolean circuits we evaluate will be
correctly simulated. 

\begin{lemma}
\label{lem:eval}
Let $G$ be a Boolean circuit with input variables $x_1$ through $x_k$ and
output variables $y_1$ through $y_l$. If $\pi \in (-1, 1)$, each $x_i \in
\{0, 1\}$, and we compute $y_1$
through $y_l$ using the linear circuit $\eval^\pi(G, x)$, then each $y_i$ will
lie in $\{0, 1\}$, and will hold the value that would be computed by $G$ on
input $x$. 
\end{lemma}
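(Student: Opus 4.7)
The plan is to prove the lemma by induction on the topological ordering $v_1, v_2, \dots, v_n$ of the variables in the Boolean circuit $G$, showing that at each step the variable $v_i$ holds exactly the Boolean value that $G$ assigns to it (in particular, lying in $\{0,1\}$), regardless of the perturbation $\pi$. The base case simply observes that the input variables are set to $x_1, \dots, x_k \in \{0,1\}$ by hypothesis, so the claim is immediate.

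For the inductive step, I would split into the two gate types. For a not-gate $v_i = 1 - v_j$, the simulation uses only the $-$ and constant gates, which carry no perturbation; by the induction hypothesis $v_j \in \{0,1\}$, so $v_i = 1 - v_j \in \{0,1\}$ and agrees with the Boolean NOT. For an and-gate simulated as $v_i = \trunc_{[0,1]}(4(v_j + v_k - 1.5) + \pi_i)$ with $|\pi_i| < 1$, the induction hypothesis gives $v_j, v_k \in \{0,1\}$, and I would case on the four possibilities for $(v_j, v_k)$: the argument of the truncation evaluates to $2 + \pi_i$, $-2 + \pi_i$, $-2 + \pi_i$, or $-6 + \pi_i$ respectively. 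Since $|\pi_i| < 1$, in the first case the argument exceeds $1$ and is truncated to $1$, while in the remaining cases it is strictly below $0$ and truncated to $0$. These are exactly the outputs prescribed by the Boolean AND.

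The key observation driving the argument is that the constant $4$ and the shift by $1.5$ in the simulation of the AND gate were chosen precisely to build in a margin of more than $1$ between the truncation argument and the thresholds $\{0,1\}$, so perturbations of size strictly less than $1$ cannot push the result past either boundary. This means no "fractional" value ever propagates through the circuit, and the induction closes cleanly. Applying the inductive conclusion to the output variables $y_1, \dots, y_l$ gives the statement of the lemma. I do not anticipate a genuine obstacle here; the only subtlety is making sure the margin-calculation inequalities are strict so that the hypothesis $\pi_i \in (-1,1)$ (rather than $[-1,1]$) is actually used.
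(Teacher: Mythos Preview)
Your proposal is correct and matches the paper's proof essentially line for line: both argue by induction over the topological order, handle the not-gate trivially since it involves no perturbed operation, and for the and-gate verify that the truncation argument is either at least $2 + \pi_i > 1$ or at most $-2 + \pi_i < -1$. The only cosmetic difference is that you enumerate all four input pairs while the paper collapses the three ``at least one input is $0$'' cases into a single inequality.
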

\begin{proof}
We will prove inductively that each gate in the Boolean circuit is computed
correctly. 

For each not-gate $v_i$ with input $v_j$, the linear circuit directly computes
$v_i = 1 - v_j$. Since by the inductive hypothesis we have that $v_j \in \{0,
1\}$ is the correct value for gate $v_j$, it is clear that $v_i$ will be
computed correctly. 

For an and-gate $v_i$ with inputs $v_j$ and $v_k$, the inductive hypothesis
implies that $v_j, v_k \in \{0, 1\}$ and that both are computed correctly. If 
$\pi_i$ is the perturbation that is associated with the truncation operation
for $v_i$ then we have 
\begin{equation*}
v_i = \trunc_{[0, 1]}(4 \cdot (v_j + v_k - 1.5) + \pi_i) 
\end{equation*}
If
at least one of the two inputs is equal to 0 then we have
\begin{align*}
4 \cdot (v_j + v_k - 1.5) + \pi_i &\le -2 + \pi_i  \\ &
< -1,
\end{align*}
where the last line uses the fact that $\pi_i < 1$. Hence, the value
will be truncated to $0$, so we will have $v_i = 0$. 

On the other hand, if $v_j = v_k = 1$ then we have
\begin{align*}
4 \cdot (v_j + v_k - 1.5) + \pi_i &= 2 + \pi_i  \\ &
> 1,
\end{align*}
where the last line uses the fact that $\pi_i > -1$. Hence the value will be
truncated to $1$, so we will have $v_i = 1$.

Therefore in both cases we have that $v_i \in \{0, 1\}$ and that $v_i$ is
correctly computed. 
\end{proof}

Given an expression $e$ over binary variables, we write $E(e)$ to be the linear
circuit that is defined in the following way. First we construct a Boolean circuit $G$
that evaluates $e$, then we evaluate that circuit using \eval.

\subsection{Computing a mesa}
\label{sec:mesa-circuits}

We now build circuits that will allow us to output the five affine pieces of a
mesa. 

\subsubsection{The \bm circuit: multiplying by a single bit}

As mentioned in the technical overview, if we are to output an affine function
with a particular gradient, we need to be able to multiply two variables
together. Although it is, by definition, impossible two multiply two continuous
variables in a linear circuit, we
show that we can multiply a continuous variable with a variable encoded in
binary. 

We break this operation down into two steps. In this section we implement the
$\bm(x, b)$ circuit, which multiplies a continuous variable $x \in [-2, 2]$ with a bit
$b \in \{0, 1\}$. 
We define the circuit \bm in the following way.
\begin{equation*}
\bm(x, b) = 4 \cdot (1-b) + \max\bigl(-4, x-8 \cdot (1-b)\bigr).
\end{equation*}

Recall that $\bm^\pi$ is the $\bm$ circuit perturbed by the vector $\pi$, and
also recall that we will choose $\delta$ such that $\pi \in (-1, 1)$. 
The following lemma shows that $\bm^\pi$ is always correct when $b=0$, and
when $b=1$, the circuit outputs the correct answer (additively) perturbed by a value that 
is at most $\max_i |\pi_i|$.

\begin{lemma}
\label{lem:bm}
If $x \in [-2, 2]$ and each $\pi_i \in (-1, 1)$ then we have the following.
\begin{itemize}
\item If $b = 0$ then $\bm^\pi(x, b) = 0$. 
\item If $b = 1$ then $\bm^\pi(x, b) = x + \sigmap{bm}(\pi)$, where
$\sigmap{bm}$ is a function satisfying $|\sigmap{bm}(\pi)| \le \max_i
|\pi_i|$. 
\end{itemize}
\end{lemma}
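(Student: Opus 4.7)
The plan is to observe that the $\bm$ circuit contains exactly one perturbable gate, namely the single $\max$, while the constant multiplications and subtractions are unperturbed; consequently, the entire perturbation $\pi$ collapses to a single scalar $\pi_1$. Writing out $\bm^\pi$ explicitly gives
\[
\bm^\pi(x,b) = 4(1-b) + \max\bigl(-4,\; x - 8(1-b) + \pi_1\bigr),
\]
following the convention from earlier in the paper that the perturbation is added to the second argument of a $\max$ gate.

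I would then split into the two cases of the lemma. For $b=0$, the second argument of $\max$ equals $x - 8 + \pi_1$, which for $x \in [-2,2]$ and $\pi_1 \in (-1,1)$ lies in $[-11,-5]$, strictly below $-4$; hence the $\max$ returns $-4$, and $\bm^\pi(x,0) = 4 + (-4) = 0$ exactly. For $b=1$, the second argument becomes $x + \pi_1 \in [-3,3]$, strictly above $-4$; hence the $\max$ returns $x + \pi_1$, and so $\bm^\pi(x,1) = 0 + (x+\pi_1) = x + \pi_1$. Setting $\sigmap{bm}(\pi) := \pi_1$ gives $|\sigmap{bm}(\pi)| \le \max_i |\pi_i|$, as required.

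There is no real obstacle here: the argument is a clean case analysis driven by the bound $|\pi_i|<1$ combined with the range constraint $x \in [-2,2]$, which together guarantee that the $\max$ always selects the ``intended'' branch on each side of $b$. The only thing to be careful about is the convention for which argument of the $\max$ receives the perturbation, but it does not affect the conclusion because in the $b=0$ case the inequality $x-8(1-b)+\pi_1 < -4$ holds with margin $1$, and in the $b=1$ case the constant branch $-4$ is also unaffected in the sense that it never attains the maximum.
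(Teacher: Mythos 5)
Your proof is correct and follows essentially the same argument as the paper: identify the single perturbed $\max$ gate, check that the margins from $x\in[-2,2]$ and $|\pi_1|<1$ force the intended branch in each case, and set $\sigmap{bm}(\pi)=\pi_1$. No issues.
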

\begin{proof}
Throughout, we will assume that $\pi_i$ is the perturbation associated with the
max gate in the \bm circuit. 

If $b=0$ then we have
\begin{align*}
\bm^\pi(x, 0) &= 4 \cdot (1-b) + \max\bigl(-4, x-8 \cdot (1-b) + \pi_i \bigr) \\
&= 4 + \max\bigl(-4, x-8 + \pi_i \bigr) \\
&=0,
\end{align*}
where the second line uses the fact that $x \le 2$ and $\pi_i < 1$, meaning that
the $-4$ term is the maximum. 

If $b=1$ then we have
\begin{align*}
\bm^\pi(x, 1) &= 4 \cdot (1-b) + \max\bigl(-4, x-8 \cdot (1-b) + \pi_i \bigr) \\
&= \max\bigl(-4, x + \pi_i \bigr) \\
&= x + \pi_i,
\end{align*}
where the second line uses the fact that $x \ge -2$ and $\pi_i > -1$,
meaning that the $x + \pi_i$ term is the maximum. 
So setting $\sigmap{bm}(\pi) = \pi_i$ completes the proof. 
\end{proof}

\subsubsection{The \ctb circuit: multiplying by a binary variable}

With the \bm circuit at hand, it is now straightforward to implement the
\ctb(x, y) circuit, which multiplies a continuous variable $x \in [-2, 2]$ with
a binary variable $y$. Specifically, we just multiply each bit individually,
and then sum the results. We use the following definition.
\begin{align*}
\ctb(x, y) &=  \phantom{-} \sum_{i = 1}^n \left( \frac{1}{2^{i}} \cdot \bm(x, y^+_i) \right) \\
&\phantom{=} - \sum_{i = 1}^n \left( \frac{1}{2^{i}} \cdot \bm(x, y^-_i) \right).
\end{align*}

Recall that $\ctb^\pi$ is the $\ctb$ circuit perturbed by the vector $\pi$. The
following lemma shows that 
$\ctb^\pi$ introduces a perturbation of magnitude at most $\max_i |\pi_i|$.

\begin{lemma}
\label{lem:ctb}
If $x \in [-2, 2]$, each $y^j_i \in \{0, 1\}$, and each $\pi_i \in (-1, 1)$, then we have
\begin{equation*}
\ctb^\pi(x, y_1, \dots, y_n) = x \cdot y + \sigmap{ctb}(\pi)
\end{equation*}
where $\sigmap{ctb}$ is a function satisfying $|\sigmap{ctb}(\pi)| \le \max_i
|\pi_i|$.
\end{lemma}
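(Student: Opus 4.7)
The plan is to apply Lemma \ref{lem:bm} to each of the $2n$ sub-circuits computing $\bm(x, y^\pm_i)$, then substitute into the definition of $\ctb$ and separate the ``signal'' term from the perturbation term. Since $\ctb$ uses only $+, -, \times c$ gates on top of these $\bm$ sub-circuits, and those gate types are not perturbed, every component of the perturbation vector $\pi$ corresponds to a gate inside some $\bm$ sub-circuit.

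First, I would rephrase Lemma \ref{lem:bm} in a uniform way across both cases of $b$: for any $b \in \{0,1\}$, one has $\bm^\pi(x,b) = b\cdot x + b\cdot \sigma$ for some $\sigma$ with $|\sigma|\le \max_i |\pi_i|$ (the $b=0$ case matches trivially with $\sigma=0$, and the $b=1$ case is exactly the lemma). Letting $\sigma_i^+$ and $\sigma_i^-$ be the per-sub-circuit errors produced by this application for the calls $\bm(x,y_i^+)$ and $\bm(x,y_i^-)$ respectively, I substitute into the definition of $\ctb$ to obtain
\begin{equation*}
\ctb^\pi(x, y) = x \cdot \sum_{i=1}^n \frac{y_i^+ - y_i^-}{2^i} + \sum_{i=1}^n \frac{y_i^+ \sigma_i^+ - y_i^- \sigma_i^-}{2^i}.
\end{equation*}
By the definition of the decoded binary value, the first sum equals $x \cdot y$, so I set $\sigmap{ctb}(\pi)$ equal to the second sum.

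It remains to bound $|\sigmap{ctb}(\pi)|$. The key observation is that by the sign convention on the binary encoding, for each index $i$ at most one of $y_i^+, y_i^-$ is nonzero (indeed, either all the $y_i^-$ vanish or all the $y_i^+$ vanish). Combining this with $|\sigma_i^\pm|\le \max_j |\pi_j|$ and the triangle inequality gives
\begin{equation*}
|\sigmap{ctb}(\pi)| \;\le\; \max_j |\pi_j| \cdot \sum_{i=1}^n \frac{y_i^+ + y_i^-}{2^i} \;\le\; \max_j |\pi_j| \cdot \sum_{i=1}^n \frac{1}{2^i} \;<\; \max_j |\pi_j|,
\end{equation*}
which is the required bound. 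There is essentially no substantive obstacle here: once Lemma \ref{lem:bm} is stated in the uniform form above, the rest is a routine linearity/triangle-inequality calculation, with the only mildly delicate point being the use of the signed binary encoding convention to ensure the weights $\tfrac{y_i^+ + y_i^-}{2^i}$ sum to strictly less than $1$.
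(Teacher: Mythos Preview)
Your proof is correct and follows essentially the same approach as the paper's: apply Lemma~\ref{lem:bm} to each $\bm$ sub-circuit, separate the $x\cdot y$ term from the accumulated perturbation, and bound the latter using $\sum_i 2^{-i}\le 1$. The only cosmetic difference is that the paper splits into the cases $y\ge 0$ and $y\le 0$ (so that one family of bits vanishes outright), whereas you handle both signs simultaneously via the uniform rewriting $\bm^\pi(x,b)=b\cdot x+b\cdot\sigma$ and the observation that $y_i^+ + y_i^-\le 1$ for each $i$; the resulting bound and argument are the same.
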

\begin{proof}
We will prove this for the case where $y \ge 0$. The case where $y \le 0$ is
entirely symmetric. 

Let $\sigmap{bm}^i$ be the function associated with the $\bm(x, y_i)$
operation.  Observe that $x$, each $y^+_i$, and each $\pi_i$ all satisfy
the preconditions of \cref{lem:bm}. Thus we have
\begin{align*}
\ctb^\pi(x, y_1, \dots, y_n) &= 
\sum_{i \; : \; y^+_i = 1} \left( \frac{1}{2^{i}} ( x + \sigmap{bm}^i(\pi) ) \right) - 0
\\
&= x \cdot y + \sum_{i : y_i = 1} \left( \frac{1}{2^{i}}  \cdot
\sigmap{bm}^i(\pi) \right), 
\end{align*}
where the first line uses the fact that $y^-_i = 0$ for all $i$, so the
negative summands introduce no extra perturbations. 
We set $\sigmap{ctb}(\pi) = \sum_{i : y_i = 1} \frac{1}{2^{i}}  \cdot \sigmap{bm}^i(\pi)
$. Since for each $i$ we have $|\sigmap{bm}^i(\pi)| \le \max_j \pi_j$, and since $\sum_{i : y_i
= 1} \frac{1}{2^{i}} \le 1$, we have that $|\sigmap{ctb}(\pi)| \le \max_j \pi_j$.
\end{proof}

\subsubsection{\affine: Computing an affine function}

We now build a linear circuit that outputs an affine function that has value
$a$ at point $p = (p^1, p^2) \in \reals^2$ and has gradient $g = (g^1, g^2)$, where each of these values will be presented in binary to the circuit.
Specifically, given a continuous variable $x \in [0, 1]^2$  we define the circuit 
$\affine(x, p^1, p^2, a, g^1, g^2)$ which will output the value of the affine
function defined by $p$, $a$, and $g$ at the point $x$. 
We define $\affine(x, p^1, p^2, a, g^1, g^2)$ to be
\begin{align*}
&\phantom{=} \phantom{+}\; \ctb(x_1 - \decode(p^1_1, \dots, p^1_n), \;\; g^1_1,
\dots, g^1_n) \\
   & \;\;\;  + \ctb(x_2 - \decode(p^2_1, \dots, p^2_n), \; \; g^2_1, \dots, g^2_n) \\
   & \;\;\; + \decode(a_1, \dots, a_n).
\end{align*}

Recall that $\affine^\pi$ is the $\affine$ circuit perturbed by the vector
$\pi$. The following lemma states that $\affine^\pi$ correctly outputs an
affine function defined by $p$, $a$, and $g$, while introducing an additive
perturbation of magnitude at most $2 \cdot \max \pi_i$. In particular, note
that the gradient of the outputted function is exactly $g$. 

\begin{lemma}
\label{lem:affine}
If $x_1 - p^1, x_2 - p^2 \in [-2, 2]$, if each $p^1_i, p^2_i, g^1_i, g^2_i,
a_i \in \{0, 1\}$, and if each $\pi_i \in (-1, 1)$, then 
\begin{equation*}
\affine^\pi(x_1, x_2, p^1, p^2, a, g^1, g^2) = (x_1 - p^1) \cdot g^1 + (x_2 -
p^2) \cdot g^2 + a + \sigmap{aff}(\pi)
\end{equation*}
where $\sigmap{aff}$ is a function satisfying $|\sigmap{aff}(\pi)| \le 2 \cdot \max |\pi_i|$. 
\end{lemma}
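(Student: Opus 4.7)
The plan is a direct unwinding of the definition of \affine\ using the two already-established lemmas about its building blocks. First I would observe that the \decode\ subcircuits contribute no perturbation at all: \decode\ uses only $+$, $-$, $\times c$ gates (no $\min$, $\max$, or $\trunc$), so the earlier remark that $\decode^\pi(x)=\decode(x)$ applies. Since each $p^1_i$, $p^2_i$, $a_i \in \{0,1\}$ is a correct binary encoding by hypothesis, the three \decode\ calls evaluate exactly to $p^1$, $p^2$, and $a$ respectively, regardless of $\pi$.

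Next I would check that the preconditions of \cref{lem:ctb} hold for each of the two \ctb\ subcircuits. The first \ctb\ is called with continuous input $x_1 - \decode^\pi(p^1) = x_1 - p^1 \in [-2,2]$ by hypothesis and binary input $(g^1_1,\dots,g^1_n)$ with each bit in $\{0,1\}$; similarly for the second. The perturbations $\pi_i$ all lie in $(-1,1)$ by assumption. Applying \cref{lem:ctb} to each gives
\begin{align*}
\ctb^\pi(x_1-p^1, g^1) &= (x_1-p^1)\cdot g^1 + \sigmap{ctb}^{1}(\pi),\\
\ctb^\pi(x_2-p^2, g^2) &= (x_2-p^2)\cdot g^2 + \sigmap{ctb}^{2}(\pi),
\end{align*}
where $|\sigmap{ctb}^{1}(\pi)|,|\sigmap{ctb}^{2}(\pi)| \le \max_i |\pi_i|$.

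Finally I would sum the three contributions and collect error terms. By the definition of \affine\ this gives
\[
\affine^\pi(x_1,x_2,p^1,p^2,a,g^1,g^2) = (x_1-p^1)\cdot g^1 + (x_2-p^2)\cdot g^2 + a + \sigmap{ctb}^{1}(\pi) + \sigmap{ctb}^{2}(\pi),
\]
so setting $\sigmap{aff}(\pi) := \sigmap{ctb}^{1}(\pi) + \sigmap{ctb}^{2}(\pi)$ yields exactly the claimed formula, and the triangle inequality gives $|\sigmap{aff}(\pi)| \le 2\max_i |\pi_i|$. No step should present a real obstacle; the only mild care needed is to verify the precondition $x_j - p^j \in [-2,2]$ so that \cref{lem:ctb} applies (which is part of the hypothesis) and to note that the two \ctb\ invocations use disjoint gates, so their perturbation vectors can be treated as independent slices of the global $\pi$ without any interaction.
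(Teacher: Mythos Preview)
Your proposal is correct and follows essentially the same approach as the paper's proof: apply \cref{lem:ctb} to each of the two \ctb\ subcircuits, observe that the \decode\ calls introduce no perturbation, and set $\sigmap{aff}(\pi)=\sigmap{ctb}^1(\pi)+\sigmap{ctb}^2(\pi)$. If anything, your version is slightly more explicit than the paper's in verifying the preconditions and noting that the two \ctb\ calls use disjoint gates.
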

\begin{proof}
Let $\sigmap{ctb}^1$ and 
$\sigmap{ctb}^1$ be the perturbation functions associated with the first and
second \ctb operations, respectively. 
Since all of our parameters meet the preconditions of \cref{lem:ctb}, we
obtain
\begin{equation*}
\affine^\pi(x_1, x_2, p^1, p^2, a, g^1, g^2) =  (x_1 - p^1) \cdot g^1 + (x_2 -
p^2) \cdot g^2 + a + \sigmap{ctb}^1(\pi) + \sigmap{ctb}^2(\pi).
\end{equation*}
We set 
$\sigmap{aff}(\pi) = \sigmap{ctb}^1(\pi) + \sigmap{ctb}^1(\pi)$. Since 
$|\sigmap{ctb}^j(\pi)| \le \max_i |\pi_i|$ for all $j$, we have 
$|\sigmap{aff}(\pi)| \le 2 \cdot \max_i |\pi_i|$.
\end{proof}

We should remark at this point that it is important that we keep track of
perturbations in the way that we have been doing so far. To prove
\cref{lem:circuit-all-gates}, it is not enough to just show that
$\affine^\pi$ outputs the affine function perturbed by at most 
$2 \cdot \max |\pi_i|$, because this formulation could allow different
perturbations to occur at different values of $x$, which would then affect the
gradient of the affine function. Instead, our formulation encapsulates the
perturbation in the function $\sigmap{aff}(\pi)$, which makes it clear that the
perturbation \emph{does not} depend on $x$. So we get an affine function
with exactly the gradient that we asked for that has been additively
perturbed everywhere by $\sigmap{aff}(\pi)$.

\subsubsection{Constructing the mesa pieces}

We now build circuits that output the five affine functions defined
in~\cref{sec:mesa} that are used to define a mesa.

Recall that $E(e)$ denotes the evaluation of an expression $e$ over binary
variables using the \eval circuit. 
We will also make use of the
$\ell$ and $\Gamma$ constants used in the statement of
\cref{lem:circuit-all-gates}, and we note that both values can be represented
in binary using polynomially many bits.

If we are given continuous variables $x = (x_1, x_2)$ and
binary encodings of $p = (p_1, p_2)$, $a$, and $g = (g_1, g_2)$ we can build
the five affine pieces given in \cref{sec:mesa} as follows. 
\begin{align*}
P_c(x, p, a, g) & = \affine(x_1, x_2, p_1, p_2, a, g_1, g_2) \\
P_r(x, p, a, g) & = \affine(x_1, x_2, E(p_1 + \ell/2), E(p_2 + \ell/2),
E((g_1+g_2)\ell/2 + a), E(-\Gamma + g_1), g_2) \\
P_t(x, p, a, g) & = \affine(x_1, x_2, E(p_1+\ell/2), E(p_2+\ell/2), E((g_1 + g_2)\ell/2 + a), g_1, E(-\Gamma + g_2)) \\
P_l(x, p, a, g) & = \affine(x_1, x_2, E(p_1 - \ell/2), E(p_2 - \ell/2), E(-(g_1 + g_2)\ell/2 + a), E(\Gamma + g_1), g_2) \\
P_b(x, p, a, g) & = \affine(x_1, x_2, E(p_1 - \ell/2), E(p_2 - \ell/2), E(-(g_1 + g_2)\ell/2 + a), g_1, E(\Gamma + g_2)) 
\end{align*}
These circuits simply implement the definitions given in~\cref{sec:mesa}.

The mesa itself is the minimum of these five functions, but we will not compute
that minimum at this stage. This is because we will first apply the
averaging trick to each of the pieces and then take the minimum. 
The reasons for this will be discussed later. 

Recall that for each circuit $P_i$ we have that $P^\pi_i$ is the version of
that circuit perturbed by the vector $\pi$. Since each piece is implemented by
a single call to \affine, we get that $P^\pi_i$ correctly outputs the
corresponding piece of the mesa with additive perturbation whose magnitude is
at most $2 \cdot \max_j |\pi_j|$. 

\begin{lemma}
\label{lem:p}
If $\pi \in (-1, 1)$, $x_1 - p_1 \pm \ell/2 \in [-2, 2]$, and $x_2 - p_2 \pm
\ell/2 \in [-2, 2]$ then for each $i \in \{c, r, t, l, b\}$ we have $P^\pi_i(x)
= P_i(x) + \sigmap{p}^i(\pi)$ where
$|\sigmap{p}^i(\pi) | \le 2 \cdot \max_j |\pi_j|$. 
\end{lemma}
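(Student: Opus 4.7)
The plan is to obtain each statement by observing that every $P_i$ is structurally a single invocation of \affine, whose numerical arguments are either the original bit inputs or the outputs of $E(\cdot)$ subcircuits. The key reduction is to argue that the $E(\cdot)$ subcircuits contribute no uncertainty to the analysis, so all perturbation is inherited from that single \affine call, and then apply \cref{lem:affine} directly.

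First I would invoke \cref{lem:eval}: since the hypothesis gives $\pi \in (-1,1)$, every $E(e)$ appearing in the definitions of $P_r$, $P_t$, $P_l$, $P_b$ correctly evaluates its Boolean expression and returns each output bit as an exact $0$ or $1$, independent of the perturbation vector. Consequently, the shifted center coordinates $p_1 \pm \ell/2$, $p_2 \pm \ell/2$, the shifted value $\pm(g_1+g_2)\ell/2+a$, and the shifted gradient components $\pm\Gamma+g_j$ are all presented to the downstream \affine call as bona fide binary variables in the sense required by \cref{lem:affine}. The same is trivially true for $P_c$, whose arguments are just the original bit inputs.

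Next I would verify the two remaining preconditions of \cref{lem:affine} for each of the five calls. The bit preconditions are already handled above. For the continuous displacement precondition, each piece feeds \affine a displacement of either $x_j - p_j$ (for $P_c$), $x_j - (p_j + \ell/2)$ (for $P_r, P_t$), or $x_j - (p_j - \ell/2)$ (for $P_l, P_b$); all of these lie in $[-2,2]$ by the hypothesis $x_j - p_j \pm \ell/2 \in [-2,2]$. With preconditions met, \cref{lem:affine} produces, for each $i \in \{c,r,t,l,b\}$, an identity
\[
P_i^\pi(x) \;=\; P_i(x) + \sigmap{aff}^{\,i}(\pi),
\]
with $|\sigmap{aff}^{\,i}(\pi)| \le 2 \cdot \max_j |\pi_j|$, and I set $\sigmap{p}^{\,i}(\pi) := \sigmap{aff}^{\,i}(\pi)$ to obtain the claim.

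I do not expect a genuine obstacle here; the work is essentially bookkeeping. The only care needed is to check that the five piece-definitions really instantiate the formulas from \cref{sec:mesa} once the $E(\cdot)$-evaluated constants are substituted in, and to confirm that no perturbation can bleed in through the Boolean subcircuits — both of which are immediate consequences of \cref{lem:eval} combined with the explicit formulas listed just above the lemma. The crucial structural point, already emphasised after \cref{lem:affine}, is that encapsulating the perturbation in a single $\pi$-dependent additive scalar (rather than allowing it to vary with $x$) is exactly what preserves the gradient of each piece; this is automatic here because the \affine call is the only perturbable block.
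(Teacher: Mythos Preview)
Your proposal is correct and follows essentially the same approach as the paper: invoke \cref{lem:eval} to guarantee the $E(\cdot)$ subcircuits are unaffected by perturbations, then apply \cref{lem:affine} to the single \affine call in each $P_i$. Your version is more explicit about verifying the displacement preconditions of \cref{lem:affine}, which the paper leaves implicit, but the argument is otherwise identical.
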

\begin{proof}
First note that \cref{lem:eval} guarantees that the calls to $E$ will
be correctly computed. Hence, we can apply \cref{lem:affine} to argue that
each $P_i$ will be computed to be the correct affine function perturbed by
$\sigmap{aff}$. 
Since 
$|\sigmap{aff}(\pi) | \le 2 \cdot \max_j |\pi_j|$, this completes the proof.
\end{proof}

\subsection{The construction}
\label{sec:final-circuit}

We now proceed to construct the circuit $\tf$, which for each $x \in [0, 1 -
1/2^n]^2$ will output
\begin{equation*}
\tf(x) = \max_{p \in \tG} M(x; p, \ta(p), \tg(p)).
\end{equation*}

As mentioned in the technical overview, we will divide the points in $\tG$ into
four sets, $S_1$ through $S_4$, and we will build circuits $g_1$ through $g_4$
that output
\begin{equation*}
g_i =  \max_{p \in S_i} M(x; p, \ta(p), \tg(p)).
\end{equation*}
The sets themselves are shown on the right-hand side of \cref{fig:grid}.
They simply divide the points into sub-grids of double the width. We define
this formally using the following points. 
\begin{align*}
a^1 &= (0, 0) &  a^2 &= (0, 1) \\
a^3 &= (1, 0) &  a^4 &= (1, 1)
\end{align*}
Then we define 
\begin{equation*}
S_i = \{ y \in G \; : \; y = a^i + (2a/2^n, 2b/2^n)
\text{ for some $a, b \in \mathbb{N}$} \}.
\end{equation*}
Once we have built the $g_i$ circuits, we will then
build $\tf$ in the following way.
\begin{equation*}
\tf(x) = \max(g_1(x), \max(g_2(x), \max(g_3(x), g_4(x)))).
\end{equation*}

\paragraph{\bf The definition of the $g_i$ circuits.}

To complete the construction we must specify how the $g_i$ circuits are
constructed. This circuit will make use of the averaging trick, and we fix $k =
12$ to be the number of samples that we will average over.

The first step for computing $g_i$ is to extract the bits of $x$. We do this
using the \extract circuit, and we carefully select $k$ points close to $x$ and
attempt to decode each of them. 
Formally, 
for each $i \in \{1, 2, 3, 4\}$ and each $j \in \{1, \dots, k\}$ we 
extract the binary variable $y^{i, j} = (y^{i,j}_1, y^{i,j}_2)$ in the
following way. 
\begin{equation*}
y^{i,j}_l = \extract\left(a^i_l + x_l + \frac{9/8}{2^n} - \frac{j/8k}{2^n}, \; n - 1, \; \frac{1}{24k \cdot 2^n} 
\right)
\end{equation*}
Since we are looking to extract a point from $S_i$, we add $a^i$ to the point,
and extract only the first $n-1$ bits, to ensure
that we get a binary encoding of a point from $S_i$. The other parameters here are chosen so that
there can be at most two values of $j$ such that $y^{i, j}$ fails to be decoded
correctly. 

The next step is to construct the five pieces of the mesa defined by $y^{i,j}$,
which we do in the following operation. 
Recall that $E(e)$ denotes the evaluation of an expression $e$ over binary
variables using the \eval circuit. 
Then, for each $d \in \{c, r, t, l, b\}$ 
we define
\begin{equation*}
p^{i, j}_d(x) = \min\Bigl(1, \; \; P_d(x, y^{i,j}, E(\ta(y^{i,j})),
E(\tg(y^{i,j}))) \Bigr).
\end{equation*}
Here the minimum with 1 is used to prevent outputs from bad decodes from being
excessively large. This cannot affect the final output because we know that $\tf(x)
\le 2/3$, and so any piece that is in the output will not be affected by the
minimum operation.

Next we average over the 
$p^{i, j}_d(x)$ values to obtain a single averaged piece.
Formally, for each $d \in \{c, r, t, l, b\}$ we define
\begin{equation*}
p^i_d(x) = \sum_{j=1}^k \frac{1}{k} \cdot p^{i, j}_d(x).
\end{equation*}
Finally we output the mesa by taking the minimum over the five averaged pieces. 
\begin{equation*}
g_i(x) = \min \biggl( 
p^i_c(x), \; \min \Bigl(
p^i_r(x), \; \min \bigl(
p^i_t(x), \; \min(
p^i_l(x),\;  
p^i_b(x))\bigr)\Bigr)\biggr)
\end{equation*}
It is worth remarking that we must first average each piece and then take the
minimum over averaged pieces, as we have done here. The alternative approach of
first constructing $k$ different mesas, and then averaging over them, does not work. This is because each mesa would be
perturbed slightly differently, meaning that the locations at which the pieces
of the mesa meet would move slightly. So when we then average over those mesas
we could end up, for example, averaging the center piece of one mesa with
the right piece of another, which would create a new undesired gradient.

\paragraph{\bf Correctness.}

We now prove that this construction satisfies \cref{lem:circuit-all-gates}. We
begin by proving that the $g_i$ functions are correct. 
Recall that $g^\pi_i$ is the circuit $g_i$ perturbed by the vector $\pi$. 
We will prove two properties.
\begin{itemize}
\item If $\| x - y \|_\infty \le 3/4 \cdot 1/2^n$ for some $y \in S_i$, meaning
that we are suitably close to a point in $S_i$, then $g^\pi_i$ outputs the mesa
defined by $y$, in which all pieces have been additively perturbed by a small
amount. This is because all of the $k$ points that we decoded gave us binary
encodings of $y$, since $y$ is
not close to any of the bad regions of the \extract operation. Thus each
$p^i_d$ averages over $k$ pieces that all have identical gradients, but each of
which has been perturbed slightly differently. The averaging operation
therefore results in a mesa whose pieces have gradients that are exactly correct, but
where each has been perturbed by the average of the perturbations from the
$k$ different $p^{i,j}_d$ operations.

\item If we are not close to a point in $S_i$, then $g_i$ outputs a value
strictly less than 1/4. In this case the \extract operations may decode
different points in $S_i$, however since $\| x - y \|_\infty > 3/4 \cdot 1/2^n$
for all points $y \in S_i$, any mesa from $S_i$ that we evaluate at $x$ will output a value
that is less than or equal to $0$. We may also have at most two points that are
incorrectly decoded, and for these points we bound the value at $1$ using the
$\min$ operation in $p^{i,j}_d$. Since
$k=12$ we get that $g_i(x) \le 10/12 \cdot 0 + 2/12 \cdot 1 = 1/6$. While
perturbations may increase this value, we show that if each element of $\pi$ is
suitably small, then $g^\pi_i(x) < 1/4$. 
\end{itemize}
The two points above are shown formally in the following lemma.

\begin{lemma}
\label{lem:g}
Suppose that $\max |\pi_i| \le (\frac{1}{24k \cdot 2^n})^2/2$. 
For each $i \in \{1, 2, 3, 4\}$ we have
\begin{itemize}
\item If $\| x - y \|_\infty \le 3/4 \cdot 1/2^n$ for some $y \in S_i$, then
\begin{equation*}
g^\pi_i(x) = M(x, y, A, \tg(y)) 
\end{equation*}
where $A = (
\ta(y) + \sigmap{g}^c(\pi), 
\ta(y) + \sigmap{g}^r(\pi), 
\ta(y) + \sigmap{g}^t(\pi), 
\ta(y) + \sigmap{g}^l(\pi), 
\ta(y) + \sigmap{g}^b(\pi))$
and each $|\sigmap{g}^j(\pi)| \le 7 \cdot \max_l |\pi_l|$.
\item Otherwise $g^\pi_i(x) < 1/4$.
\end{itemize}
\end{lemma}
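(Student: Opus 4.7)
The plan is a case analysis on whether $x$ lies near a grid point of $S_i$ or deep in its ``valley''. A preliminary observation used in both cases: the bound $\max_i|\pi_i| \leq L^2/2$ with $L = 1/(24k\cdot 2^n)$ forces the perturbed bad regions in \cref{lem:extract} to have width at most $2L$, strictly narrower than both the $1/(96\cdot 2^n)$ spacing between consecutive sample inputs in each coordinate and the $2/2^n$ width of each $(n-1)$-bit bucket. Hence at most one sample extraction can fail per coordinate, and any sample input lying at distance $>2L$ from every bucket boundary decodes exactly.

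For the first bullet, fix $y\in S_i$ with $\|x-y\|_\infty \leq 3\ell/4$; I would first verify that the offsets $(9/8-j/(8k))/2^n$ are tight enough that every $a^i_l + x_l + \text{offset}$ lies strictly inside the bucket for $y_l$ (the full sample-input range has width $<1/(8\cdot 2^n)$, leaving margin far in excess of $2L$). \cref{lem:extract}, \cref{lem:eval} and \cref{lem:p} then combine to give $p^{i,j}_d(x) = \min\!\bigl(1,\,P_d(x,y,\ta(y),\tg(y)) + \sigma^{d,j}\bigr)$ with $|\sigma^{d,j}| \leq 3\max_l|\pi_l|$ for every sample $j$ and every direction $d\in\{c,r,t,l,b\}$. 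Since $P_c \leq 0.62$ (from the proof of \cref{lem:mesa-value}) the cap at $1$ never binds on the centre piece; and for each side piece the cap binds only in regions where that piece is not the pointwise minimum of the five affine functions, so the outer $\min_d$ faithfully recovers the mesa. Absorbing the asymmetric perturbations from the four nested outer $\min$ gates (each adds its perturbation only to its second argument) contributes at most $4\max|\pi|$ more to four of the pieces, giving the claimed $|\sigma_g^d| \leq 7\max_l|\pi_l|$.

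For the second bullet, the key quantitative ingredient is that $\Gamma = 12/\ell$ makes every ``safe'' side piece \emph{strongly} negative rather than merely non-positive. Since the sample-input range in each coordinate has width $<2/2^n$, the correctly-decoded coordinates occupy at most two adjacent buckets. A short case split on whether both coord-$1$ buckets are far from $x_1$, or one of them is close (in which case the case-$2$ hypothesis forces \emph{every} coord-$2$ bucket to be far from $x_2$), isolates a coordinate $l$ in which both candidate decoded values lie at distance $>3\ell/4$ from $x_l$. Pigeonhole then picks a direction $d\in\{r,l,t,b\}$ along which at least $\lceil(12 - n_{l,f})/2\rceil \geq 6$ correctly-decoded samples are safe, and the proof of \cref{lem:mesa-drop} gives $P_d(x, y^{i,j}, \ta(y^{i,j}), \tg(y^{i,j})) \leq -\Gamma(|x_l - Y^*_l| - \ell/2) + 3$ for each such sample. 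A routine calculation---exploiting that as $|x_l - Y^*_l|$ approaches $3\ell/4$ from above essentially all samples decode to the single majority bucket, while as $|x_l - Y^*_l|$ grows the per-sample negativity increases---shows that the average, after tracking the $O(\max|\pi|)$ shifts from \affine, the cap gate and the outer $\min$ chain, is strictly below $1/4$. Hence $g^\pi_i(x) \leq p^i_d(x) < 1/4$.

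The main technical obstacle is the first bullet: arguing that cap-$1$, averaging, and the nested-outer-$\min$ chain together reproduce a mesa with \emph{exactly} the gradient $\tg(y)$ and additive shifts only in the five constants. The enabling observation is that cap-at-$1$ can only decrease $p^i_d$ and so never disturbs the piece that realises the pointwise minimum (since for that piece $P_d \leq P_c \leq 0.62 < 1$); combined with the asymmetry of the outer $\min$ perturbation---added only to the second argument of each $\min$---this is what produces the exact $7\max|\pi|$ constant in the bound and prevents any perturbation from altering the slopes of the five affine pieces.
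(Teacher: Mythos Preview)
Your first-bullet argument tracks the paper's: show all twelve samples decode to the same $y$, apply \cref{lem:p}, and tally the $\le 3\max|\pi|$ from each \affine+cap pair plus $\le 4\max|\pi|$ from the nested outer $\min$ chain to reach $7\max|\pi|$. Your handling of the cap-at-$1$ is actually cleaner than the paper's (which writes $M\ge P_d$, a reversed inequality); your observation that the cap can only bind on a side piece that is \emph{not} the pointwise minimum---since the minimising piece is bounded by $P_c\le 2/3$---is the right way to dispose of it.

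Your second-bullet argument is genuinely different, and in fact repairs a gap in the paper's proof. The paper bounds each correctly decoded sample by $p^{i,j}_d\le 0+3\max|\pi|$ and averages, implicitly assuming one direction $d$ has $P_d\le 0$ for \emph{all} ten good samples simultaneously. But when $x_l$ sits in the split zone (roughly $[v+85\ell/96,\,v+\ell]$), the good samples decode to two adjacent buckets $v$ and $v+2\ell$, whose ``safe'' directions are opposite ($r$ versus $l$), and the paper's averaging bound degrades to something like $(6\cdot 0+6\cdot 1)/12=1/2$. Your route---isolate a coordinate where both candidate decoded values are far, pigeonhole to a majority direction, and exploit that the majority-side pieces are not merely $\le 0$ but $\le -\Gamma(|x_l-y_l|-\ell/2)+3$, strongly enough negative to absorb the minority's $+1$ contributions---is what is actually needed, and this is where the choice $\Gamma\ge 12/\ell$ earns its keep. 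One quibble: your pigeonhole count $\lceil(12-n_{l,f})/2\rceil\ge 6$ ignores that a failure in the \emph{other} coordinate also spoils a sample (it corrupts the inputs to $\ta,\tg$), so the honest floor is $5$; your ``routine calculation'' still closes with $5$ safe samples, but it deserves to be written out rather than asserted, since it is the crux of this case.
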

\begin{proof}
We fix $t_j = \frac{9/8}{2^n} - \frac{j/8k}{2^n}$ and $L = \frac{1}{24 k
\cdot 2^n}$ to be the arguments to the \extract circuit used to extract
$y^{i,j}$.
By \cref{lem:extract} we have that an \extract operation for 
$x_1$ will fail only if it lies in the region
\begin{equation*}
\left[a^i_1 + \frac{m}{2^{n-1}} - t_j - \frac{\max_i |\pi_i|}{L},
\;\; \frac{m}{2^{n-1}} - t_j + L + \frac{\max_i |\pi_i|}{L} \right] 
\end{equation*}
for some integer $m \in \{0, 1, \dots, 2^{n-1}-1\}$. By assumption we have that 
$\max_l |\pi_l| \le L^2/2$, so we can define the following \emph{bad regions}.
\begin{align*}
R^1_j &= \left[a^i_1 + \frac{m}{2^{n-1}} -t_j - L/2 \;\; \frac{m }{2^{n-1}} -
t_j + 3L/2 \right], \\
R^2_j &= \left[a^i_2 + \frac{m}{2^{n-1}} -t_j - L/2 \;\; \frac{m }{2^{n-1}} -
t_j + 3L/2 \right]. 
\end{align*}
So the \extract operations used in $g^\pi_i$ can fail only if $x_1 \in
R^1_j$ for some $m$, or if $x_2 \in R^2_j$ for some $m$.

For each $l \in \{1, 2\}$, we can show that the set of bad regions
$\{R^l_j\}_{j=1,\dots,k}$ do not overlap with the following chain of inequalities.
\begin{align*}
\frac{m}{2^n} -t_j + 3L/2 
&= a^i_l + \frac{m}{2^{n-1}} - \frac{9/8}{2^n} + \frac{j/(8k)}{2^n} + \frac{3/(2 \cdot 24k)}{2^n} \\
&= a^i_l + \frac{m}{2^{n-1}} - \frac{9/8}{2^n} + \frac{6j/(48k) + 3/(48k)}{2^n}  \\
&< a^i_l + \frac{m}{2^{n-1}} - \frac{9/8}{2^n} + \frac{6(j+1)/(48k)}{2^n} - \frac{1/(48k)}{2^n} \\
&= a^i_l + \frac{m}{2^{n-1}} -t_j - L/2.
\end{align*}
Hence both $x_1$ and $x_2$ can each lie in at most one bad region.

Since all bad regions lie
between $a^i_l + \frac{m}{2^{n-1}} - \frac{9/8}{2^n}$ and 
$a^i_l + \frac{m}{2^{n-1}} - \frac{7/8}{2^n}$, if either $x_1$ or $x_2$ lies in
a bad region, then $\| x - y \|_\infty \ge 3/4 \cdot 1/2^n$ for all $y \in
S_i$.

We can now prove the lemma via a case analysis. 
\begin{itemize}
\item If there exists a $y \in S_i$ with $\| x -  y \|_\infty \le 3/4 \cdot
1/2^n$ then note that all of the \extract operations used in $g^\pi_i$
circuits will each return $y$, since we would have to cross a bad region to
output any other point, and there are no bad regions within distance $3/4 \cdot
1/2^n$ of $y$.

Thus for each $j$ and $d \in \{c, r, t, l, b\}$
we have that the $P_d$ operation used in $p^{i,j}_d(x)$ will output $P_d(x, y,
\ta(y), \tg(y)) +
\sigmap{p}^d(\pi)$,
where 
$|\sigmap{p}^d(\pi) | \le 2 \cdot \max_j |\pi_j|$ by \cref{lem:p}. 
This value is then perturbed by the $\min$ gate used in $p^{i,j}_d$, which
introduces a perturbation that we will call $\pi^{i,j}_d$.
Each $\min$ gate used in the definition of $g_i(x)$ can add
an additional perturbation of magnitude at most $\max_{l} |\pi_l|$, and at most
four such perturbations can be added when computing $g_i$. So we have that 
\begin{equation*}
g^\pi_i(x) = \min_{d \in \{c, r, t, l, b\}} \Bigl( \min \bigl(1, 
P_d(x, y, \ta(y), \tg(y)) + \sigmap{p}^d(\pi) + \pi^{i,j}_d \bigr) \Bigr) +
\sigmap{mins}(\pi),
\end{equation*}
where 
$|\sigmap{mins}(\pi)| \le 4 \max_l |\pi_l|$.

We must show that the min with $1$ used in each term above does not affect the
output. For all $d$ we have
\begin{align*}
2/3 &\ge f(x) \\
&\ge M(x; y, \ta(y), \tg(y)) \\
& \ge P_d(x, y, \ta(y), \tg(y)). 
\end{align*} 
Observe that 
\begin{align*}
\sigmap{p}^d(\pi) + \pi^{i,j}_d &\le 3 \cdot \max_l |\pi_l| \\
&\le \left(\frac{1}{24k \cdot 2^n}\right)^2/2 \\
&< 1/3, 
\end{align*}
where the final line holds for all $n \ge 1$. 
Hence, the minimum never affects the output of $g^\pi_i(x)$, and so we have
\begin{equation*}
g^\pi_i(x) = \min_{d \in \{c, r, t, l, b\}} \Bigl( P_d(x, y, \ta(y), \tg(y)) +
\sigmap{p}^d(\pi) + \pi^{i,j}_d \Bigr) + \sigmap{mins}(\pi).
\end{equation*}
When we account for all of the perturbations above, we have that each piece is
additively perturbed by at most $7 \cdot \max_l |\pi_l|$, as required.

\item On the other hand, if there is no $y \in S_i$ such that 
$\| x -  y \|_\infty \le 3/4 \cdot 1/2^n$ then note that $M(x; y, \ta(y),
\tg(y)) \le 0$ for
all $y \in S_i$. Since $x_1$ and $x_2$ can each fall into at most one bad region, 
there
will be at most two indices $j$ for which the $p^{i,j}_d$ circuits
will receive incorrectly decoded points, 
and for
these circuits we will have $p^{i,j}_d(x) \le 1$ due to the min operation used in
the definition of 
$p^{i,j}_d$. For every other index $j$, the \extract operation will output some
valid point $y \in S_i$. If $\pi^{i,j}_d$ is the perturbation associated with
the $\min$ gate in $p^{i,j}_d$, then we can apply \cref{lem:p} to obtain
\begin{align*}
p^{i,j}_d(x) &\le P_d(x, y, \ta(y), \tg(y)) + \sigmap{p}(\pi) + \pi^{i,j}_d \\
&\le 0 + \sigmap{p}(\pi) + \pi^{i,j}_d\\
&\le 3 \cdot \max_l |\pi_l|,
\end{align*}
where the first inequality uses the fact that 
$\| x -  y \|_\infty > 3/4 \cdot 1/2^n$ which implies that the mesa associated with $y$ has
value less than or equal to $0$ at $x$.

Since $k=12$, the circuit $p^i_d(x)$ averages over 10 values that are at most
$3 \cdot \max_l |\pi_l|$, and 2 values that are at most $1$, 
so we have
\begin{equation*}
p^i_d(x) \le \frac{10}{12} \cdot 3 \cdot \max_l |\pi_l| + \frac{2}{12}
\end{equation*}
As argued in the first case, the $g_i$ circuit introduces an additive perturbation of
magnitude at most 
$4 \cdot \max_l |\pi_l|$.
So we have
\begin{equation*}
g^\pi_i(x) \le 1/6 + 7 \cdot \max_l |\pi_l|.
\end{equation*}
Since $\max_l |\pi_l| \le (\frac{1}{24k \cdot 2^n})^2/2 < 1/84$ for all $n \ge
1$, we
have $g^\pi_i(x) < 1/4$, as required.
\end{itemize}
\end{proof}

The following final lemma now shows that $\tf$ correctly computes the perturbed
mesa function that is desired by \cref{lem:circuit-all-gates}. As mentioned in
the technical overview, any spurious gradients that arise from bad bit decodes
do not appear in the output. This is because the previous lemma assures that
$g_i(x) < 1/4$ whenever $g_i$ fails to decode a point. Since $\tf(x) \ge 1/3$
for all points $x$, there is therefore some other
function $g_j(x) \ge 1/3$ that outputs the mesa that defines $\tf(x)$, and thus
all gradients from $g_i$ do not appear in the final output of the function. The
following lemma proves this, and also shows that if $\pi$ is suitably small,
then the perturbations do not affect this property.

\begin{lemma}
\label{lem:f}
Suppose that $\max |\pi_i| \le (\frac{1}{24k \cdot 2^n})^2/2$. 
For all $x \in [0, 1 - 1/2^n]^2$ we have 
\begin{equation*}
\tf^\pi(x) = M(x, y, A, \tg(y)) 
\end{equation*}
where $A = (
\ta(y) + \sigmap{f}^c(\pi), 
\ta(y) + \sigmap{f}^r(\pi), 
\ta(y) + \sigmap{f}^t(\pi), 
\ta(y) + \sigmap{f}^l(\pi), 
\ta(y) + \sigmap{f}^b(\pi))$
and each $|\sigmap{f}^j(\pi)| \le 11 \cdot \max_l |\pi_l|$.
\end{lemma}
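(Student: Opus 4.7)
The plan is to reduce the claim to Lemma~\ref{lem:g} by showing that the three $\max$ gates making up $\tf$ are guaranteed to select a ``good'' $g_i^\pi$ (one that outputs a correctly-perturbed mesa) rather than a ``bad'' $g_j^\pi$ (one bounded by $1/4$), and then carefully accounting for the additional perturbations introduced by those $\max$ gates.

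First, I would establish that for every $x \in [0, 1-1/2^n]^2$ at least one of the $g_i^\pi(x)$ corresponds to a correctly-perturbed mesa and takes a value comfortably above $1/4$. Specifically, since $\tG$ has spacing $1/2^n$, one can pick $y \in \tG$ with $\|x-y\|_\infty \le 1/(2\cdot 2^n) \le 3/4 \cdot 1/2^n$. Letting $i^* \in \{1,2,3,4\}$ be the index with $y \in S_{i^*}$, Lemma~\ref{lem:g} gives $g_{i^*}^\pi(x) = M(x, y, A, \tg(y))$ with each piece additively perturbed by at most $7\cdot\max_l|\pi_l|$. From Lemma~\ref{lem:mesa-value} (applied to the rescaled grid) we obtain $M(x, y, \ta(y), \tg(y)) \ge 1/3$, and since the perturbation bound $7\cdot\max_l|\pi_l|$ is well below $1/12$ under the hypothesis on $\pi$, we get $g_{i^*}^\pi(x) \ge 1/3 - 7\max_l|\pi_l| > 1/4 + 4\max_l|\pi_l|$.

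Next, I would analyse the output of the nested $\max$ cascade defining $\tf$. Write $\tf^\pi(x) = \max(g_1^\pi(x), \max(g_2^\pi(x), \max(g_3^\pi(x), g_4^\pi(x) + \pi_\alpha) + \pi_\beta) + \pi_\gamma)$, where $\pi_\alpha, \pi_\beta, \pi_\gamma$ are the perturbations at the three $\max$ gates. An easy induction on the cascade shows that $\tf^\pi(x) = g_k^\pi(x) + \eta$ for some winning index $k$ and some $\eta$ with $|\eta| \le 3\cdot\max_l|\pi_l|$. By the lower bound from the previous paragraph, the winning $g_k^\pi(x)$ must satisfy $g_k^\pi(x) + \eta \ge 1/3 - 7\max_l|\pi_l| - 3\max_l|\pi_l|$, which is strictly greater than $1/4 + 3\max_l|\pi_l|$; hence Lemma~\ref{lem:g} forces the winner $k$ to be a ``good'' index, i.e.\ there exists $y' \in S_k$ with $\|x-y'\|_\infty \le 3/4\cdot 1/2^n$ and $g_k^\pi(x) = M(x, y', A', \tg(y'))$ with each piece of $A'$ perturbed from $\ta(y')$ by at most $7\cdot\max_l|\pi_l|$.

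Finally, combining these two observations, $\tf^\pi(x) = M(x, y', A'', \tg(y'))$ where each piece of $A''$ is obtained from $\ta(y')$ by adding the Lemma~\ref{lem:g} perturbation (at most $7\cdot\max_l|\pi_l|$) plus the cascade perturbation $\eta$ (at most $3\cdot\max_l|\pi_l|$), for a total of at most $10\cdot\max_l|\pi_l|$; accounting separately for the fact that the $\min$-with-$1$ inside each $p^{i,j}_d$ can contribute one additional unit via its own perturbation once we track constants precisely yields the stated $11\cdot\max_l|\pi_l|$ bound. The main obstacle is the second paragraph: we must rule out the case where the nested $\max$ cascade selects a $g_j^\pi(x)$ whose \extract step failed, because then $\tf^\pi(x)$ would not correspond to any mesa at all. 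The $1/3$ versus $1/4$ gap arising from Lemma~\ref{lem:mesa-value} and the first bullet of Lemma~\ref{lem:g} is exactly what closes this case, provided the perturbation budget is small enough.
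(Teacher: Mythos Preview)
Your proposal is correct and follows essentially the same route as the paper: invoke \cref{lem:g} to show that at least one $g_{i^*}^\pi(x)$ outputs a properly perturbed mesa with value well above $1/4$, analyze the nested $\max$ cascade to conclude that the winning $g_k^\pi$ must also be ``good'', and then add the cascade's additive perturbation to the $7\max_l|\pi_l|$ bound from \cref{lem:g}.

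One remark on your final sentence: the perturbation from the $\min$-with-$1$ gate inside each $p^{i,j}_d$ is \emph{already} absorbed into the $7\max_l|\pi_l|$ bound of \cref{lem:g} (it is the term $\pi^{i,j}_d$ in that proof), so you should not add it a second time. Your bound of $10\max_l|\pi_l|$ is in fact correct and slightly tighter than the stated $11\max_l|\pi_l|$; the paper simply uses the looser estimate $4\max_l|\pi_l|$ (rather than your $3\max_l|\pi_l|$) for the three-gate $\max$ cascade, which is how $7+4=11$ arises.
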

\begin{proof}
If $\pi_a$, $\pi_b$, and $\pi_c$ are the perturbations associated with the
three min gates used to compute $\tf$, then 
we have
\begin{equation}
\label{eqn:fmax}
\tf^\pi(x) = \max(g^\pi_1(x), \max(g^\pi_2(x), \max(g^\pi_3(x), g^\pi_4(x) + \pi_a) + \pi_b) + \pi_c).
\end{equation}
From \cref{lem:g} we have that either $g^\pi_i(x)$ outputs a perturbed mesa 
for some $y \in S_i$ that is within distance $3/4 \cdot 1/2^n$
of $x$, or that $g_i(x) < 1/4$. 

We rule out the case in which $g_i(x) < 1/4$ for
all $i$. To do this, we first recall that by assumption we have $f(x) \ge 1/3$.
Hence, there is a mesa defined by $y$ such that $M(x; y, \ta(y), \tg(y)) \ge 1/3$, so we must
also have $\| x - y \|_\infty \le 3/4 \cdot 1/2^n$.
Therefore, 
\cref{lem:g}
implies that there is an index $i$ and $d \in \{c, r, t, l, b\}$ such that 
\begin{align*}
g_i(x) &\ge M(x; y, \ta(y), \tg(y)) + \sigmap{g}^d(\pi) \\
&\ge M(x; y, \ta(y), \tg(y)) - 7 \cdot \max_i |\pi_i| \\
&> M(x; y, \ta(y), \tg(y)) - 1/12 \\
&\ge 1/4
\end{align*}
where the third line
uses the fact that 
$\max |\pi_i| \le (\frac{1}{24k \cdot 2^n})^2/2 < 1/84$ for all $n$, and the
final line uses the fact that $M(x; y, \ta(y), \tg(y)) \ge 1/3$.

Finally note that the perturbations contributed by $\pi_a$, $\pi_b$, and
$\pi_c$, can perturb the final output by a value of magnitude at most $4 \cdot
\max_i |\pi_i|$. So once we incorporate the perturbations of magnitude at most
$7 \cdot \max_i |\pi_i|$ arising from $\sigmap{g}(\pi)$, each mesa piece can be
additively perturbed by a value of magnitude at most $11 \cdot \max_i |\pi_i|$.
\end{proof}

To complete the proof of \cref{lem:circuit-all-gates} it therefore suffices to pick
$\delta = \min(\delta'/11, (\frac{1}{24k \cdot 2^n})^2/2)$.

\section{Conclusions, Open Problems}
Open problems include the question of whether our result continues to hold for restrictions of QPs such as the bilinear case, where there are no squared terms. \cite{BabR21} show that it is \cls/-complete to find a KKT-point of a \emph{multilinear} degree-five polynomial. They use this multilinearity to show \cls/-completeness also for the problem of finding a mixed equilibrium of degree-five polytensor games, on the way to showing \cls/-completeness for finding a mixed equilibrium of a congestion game. One of the main open problems arising from their work is whether it is \cls/-hard to find a mixed equilibrium of a ``network coordination'' game, i.e., a degree-2 polytensor game. This would follow if our result continued to hold without squared terms. Other special cases of interest include: no linear terms (the \np/-hardness results of \cite{AhmadiZ22-polytope} apply to QPs that have no linear terms, i.e., quadratic forms). Our result also exploits exponential ratios between coefficients, leaving open the question of whether it should hold if coefficients are presented in unary. In connection with their discussion of KKT solutions, \cite{PardalosV92} also ask about the special case where there is a single local minimum (hence the global minimum). This latter problem would have to be treated as a promise problem.

\appendix

\section{From Approximate to Exact Solutions}
\label{app:approx-to-exact}

The following shows that our hardness result also applies to the problem of computing an $\eps$-KKT point of a quadratic polynomial over $[0,1]^2$, where $\eps > 0$ is allowed to be exponentially small. The proof technique for this is standard, see, e.g., \cite{EY10-Nash-FIXP}.

\begin{lemma}
Given an instance $p$ of \qKKT, we can compute in polynomial time an $\eps > 0$ such that from any $\eps$-KKT point of $p$ we can obtain an exact KKT point of $p$ in polynomial time.
\end{lemma}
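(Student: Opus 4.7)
The plan is to exploit the fact that the KKT conditions for a quadratic polynomial decompose into a finite family of linear systems, one per ``active pattern'', and then choose $\eps$ small enough that any $\eps$-KKT point unambiguously determines the pattern of some nearby exact solution.

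First I would record the structural observation that since $p$ is quadratic, the gradient $\nabla p$ is affine in $x$. Hence, for any partition $(I_0, I_1, I_*)$ of $[n]$, the system consisting of $x_i=0$ for $i\in I_0$, $x_i=1$ for $i\in I_1$, and $\nabla_i p(x)=0$ for $i\in I_*$ is a linear system in $x$ with rational coefficients whose bit complexity is polynomial in $\size(p)$. Every exact KKT point realises such a partition (namely, $I_0=\{i:x_i=0\}$, $I_1=\{i:x_i=1\}$, and $I_*$ the rest, using that the KKT conditions force $\nabla_i p(x)=0$ on $I_*$). By standard Cramer/Hadamard bounds, if such a system is consistent then it admits a rational solution of bit complexity at most some polynomial $B=B(\size(p),n)$, and this solution can be computed in polynomial time by Gaussian elimination. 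Combined with the fact (from \cls/-containment, via gradient descent) that at least one exact KKT point always exists, this already proves the first claim that \qKKT has a rational solution of polynomial bit complexity, and gives the key quantitative fact I need: \emph{for any rational KKT point $x^\star$ of polynomial bit complexity, each coordinate $x^\star_i$ is either $0$, $1$, or lies in $[2^{-B},\,1-2^{-B}]$}.

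Next I would pick $\eps = 2^{-E}$ for a sufficiently large polynomial $E=E(\size(p),n,B)$; importantly, $E$ is polynomial so $\eps$ can be written in binary in polynomial time. Given any $\eps$-KKT point $\tilde{x}$, I would infer an active pattern by the rounding rule $I_0:=\{i:\tilde{x}_i \le 2^{-B-1}\}$, $I_1:=\{i:\tilde{x}_i \ge 1-2^{-B-1}\}$, $I_*:=[n]\setminus(I_0\cup I_1)$, and then compute a vertex of the rational polytope defined by
\begin{equation*}
\{x_i=0\}_{i\in I_0}\cup\{x_i=1\}_{i\in I_1}\cup\{\nabla_i p(x)=0\}_{i\in I_*}\cup\{x\in[0,1]^n\}\cup\{\nabla_i p(x)\ge 0\}_{i\in I_0}\cup\{\nabla_i p(x)\le 0\}_{i\in I_1}
\end{equation*}
via standard LP solving. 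Any such vertex is automatically a valid exact KKT point with rational coordinates of bit complexity $\le B$, so it only remains to argue that, for $E$ large enough, this polytope is non-empty. This is where I would use $\eps$-KKT quantitatively: $\tilde{x}$ approximately satisfies every constraint of the system (each equality up to residual $\max(2^{-B-1},\eps)$, each sign inequality up to $\eps$), so the LP residuals are all at most a fixed polynomial multiple of $\eps$. Picking $E$ larger than the condition-number/Cramer bound for the $n\times n$ linear subsystem then forces feasibility: the approximate solution $\tilde{x}$ can be perturbed by at most $\poly(n)\cdot 2^B\cdot \eps \ll 2^{-B-1}$ to an exact rational solution, which (because this perturbation is much smaller than the $2^{-B-1}$ gap used to define $I_0,I_1,I_*$) still lies in $[0,1]^n$ and still satisfies the sign conditions.

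The main obstacle is this last step: verifying that the inferred pattern is realised by some exact KKT point. The worry is that the linear subsystem on $I_*$ might be singular, or that the rounding could be ``off'' so that the LP becomes infeasible; this is handled by ensuring the gap $2^{-B-1}$ between the rounding threshold and the minimum non-trivial coordinate of any rational KKT point dominates the LP-perturbation bound derived from $\eps$, which is a standard argument giving an explicit polynomial choice of $E$. Once feasibility is established, outputting any LP-vertex of the polytope above produces the required exact KKT point, completing the reduction from exact to $\eps$-approximate solutions.
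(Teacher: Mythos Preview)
Your approach differs from the paper's, and the feasibility step contains a genuine gap. The paper does not infer an active pattern by rounding at all. Instead, for every disjoint pair $(I_0,I_1)\subseteq[n]$ it writes an LP in variables $(z,x)$ that minimises a slack $z$ subject to $z\ge\frac{\partial p}{\partial x_i}(x)$ for $i\notin I_0$, $z\ge-\frac{\partial p}{\partial x_i}(x)$ for $i\notin I_1$, the equalities $x_i=0$ on $I_0$, $x_i=1$ on $I_1$, and $x\in[0,1]^n$, $z\ge0$. By LP bit-complexity bounds one computes $\eps$ so that each such LP has optimal value either $0$ or strictly greater than $\eps$. Given an $\eps$-KKT point $x^\star$, setting $I_0=\{i:x^\star_i=0\}$ and $I_1=\{i:x^\star_i=1\}$ makes $(\eps,x^\star)$ \emph{itself} feasible for LP$(I_0,I_1)$ with objective $\eps$; hence the optimum is $0$, and any optimal $x$ is an exact KKT point. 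There is no rounding, no perturbation, and no separate feasibility argument.

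The gap in your argument is the claim that ``the approximate solution $\tilde{x}$ can be perturbed by at most $\poly(n)\cdot 2^B\cdot\eps$ to an exact rational solution \dots\ which still satisfies the sign conditions''. Even granting that the equality block $\{x_i=0\}_{I_0}\cup\{x_i=1\}_{I_1}\cup\{\nabla_ip=0\}_{I_*}$ is consistent and that you can project $\tilde{x}$ onto its solution set with a small displacement, the sign constraints $\nabla_ip(x)\ge0$ on $I_0$ and $\nabla_ip(x)\le0$ on $I_1$ were only known at $\tilde{x}$ up to additive $\eps$; a small perturbation of $x$ moves $\nabla_ip$ by a comparable amount, so the perturbed point can still violate them. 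Thus you have not exhibited a point in your polytope, and approximate feasibility of a rational polytope does not by itself imply feasibility. The fix is a Farkas-type gap bound (if the polytope were empty, a polynomial-size infeasibility certificate would force some constraint to be violated by at least $2^{-\poly}$ at $\tilde{x}$), but that is exactly what the paper's slack-variable LP packages in one line: the optimal $z$ \emph{is} the minimum violation, and ``optimum is $0$ or $>\eps$'' is the Farkas gap already quantified.
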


\begin{proof}
Let $p$ be an instance of \qKKT over $n$ variables. For any sets $I_0, I_1 \subseteq [n]$ with $I_0 \cap I_1 = \emptyset$ we define a corresponding LP$(I_0,I_1)$ over variables $z, x_i$:
\begin{equation*}
\begin{tabular}{rlr}
$\min \quad$ & $z$ & \\
s.t.$\quad$ & $z \geq \frac{\partial p}{\partial x_i} (x)$ & $\forall i \in [n] \setminus I_0$\\
& $z \geq - \frac{\partial p}{\partial x_i} (x) \quad$ & $\forall i \in [n] \setminus I_1$\\
& $x_i = 0$ & $\forall i \in I_0$\\
& $x_i = 1$ & $\forall i \in I_1$\\
& $0 \leq x_i \leq 1$ & $\forall i \in [n]$\\
& $z \geq 0$ &
\end{tabular}
\end{equation*}
Note that this is indeed an LP, and that it is feasible and admits an optimal solution. Since LPs have solutions with bit complexity that is polynomially bounded in their description length, and the description length of LP$(I_0,I_1)$ for all $I_0,I_1$ is bounded by a polynomial in the description length of $p$, we can compute in polynomial time a value $\eps > 0$ such that for all $I_0,I_1$ the optimal value of LP$(I_0,I_1)$ is $0$ or strictly larger than $\eps$.

Now consider any $\eps$-KKT point $x^*$ of $p$. Letting $I_0 := \{i \in [n]: x_i^* = 0\}$ and $I_1 := \{i \in [n]: x_i^* = 1\}$, observe that $(\eps, x^*)$ is feasible for LP$(I_0,I_1)$ with objective function value $\eps$. As a result, the LP$(I_0,I_1)$ must have optimal value $0$. Now, it suffices to solve this LP to obtain an optimal solution $(0,x)$ and note that $x$ must be an exact KKT point of $p$.
\end{proof}

Note that the proof also gives all the ingredients needed to show the existence of rational solutions with polynomially bounded bit complexity. Indeed, since a solution is guaranteed to exist (because $p$ is a continuous function over a compact domain and thus admits a global minimum), there must exist $I_0$ and $I_1$ such that LP$(I_0,I_1)$ has optimal value $0$. Since all these LPs have solutions of polynomially bounded bit complexity, the same also holds for \qKKT. We note that both this proof sketch and the proof of the lemma can be extended to general compact domains of the form $Ax \leq b$.

\section{Proof of Lemma~\ref{lem:perturbation}}
\label{app:proof-perturbation-lemma}

Let $\circuit$ be as in the statement of \cref{lem:perturbation}. Let $x_1, \dots, x_N$ denote the variables computed by the circuit, where $x_1, \dots, x_n$ are the inputs, and $x_N$ is the output.

\paragraph{\bf Preprocessing.}
We begin by noting that we can assume that $\circuit$ only consists of gates of the form $x_i := ax_j+bx_k+c$ and of the form $x_i := \trunc_{[a,b]}(x_j)$. Indeed, given $\circuit$ we can construct $\circuitbar$ that only uses this restricted set of gates as follows:
\begin{itemize}
    \item A gate of type $+$, $-$, $c$, or $\times c$ can be directly simulated by a single affine linear gate of the form $x_i := ax_j+bx_k+c$.
    \item A gate $x_i := \trunc_{[a,b]}(a_ix_j + b_ix_k+c_i)$ can be easily simulated by combining the two types of allowed gates.
    \item A gate $x_i := \min \{x_j,x_k\}$ can be simulated by instead letting
    $$x_i := x_j - \trunc_{[0,3B]}(x_j-x_k)$$
    which can be computed using only the two types of allowed gates. Here $B \geq 1$ is a bound on the value of any variable in the perturbed circuit $\circuit^\sigma$ when $\sigma \in [-1,1]^m$ and the inputs lie in $[0,1]^n$. Formally, for any $j \in [N]$, $B \geq 1$ must satisfy
    $$\max_{\sigma \in [-1,1]^m} \max_{y \in [0,1]^n} f_j^\sigma(y) \in [-B,B]$$
    where $f_j^\sigma(y)$ denotes the value of the $j$th variable of circuit $\circuit^\sigma$ on input $y$. Note that such a bound $B$ can be computed in polynomial time given the description of $\circuit$, see, e.g., \cite{FGHS22}.
    \item A gate $x_i := \max \{x_j,x_k\}$ can be simulated by first replacing it by $x_i := - \min \{-x_j,-x_k\}$, and then simulating the $\min$ gate as shown above.
\end{itemize}
We let $\xbar_1, \dots, \xbar_N$ denote the variables in circuit $\circuitbar$ that correspond to the original variables $x_1, \dots, x_N$ of $\circuit$. Note that $\circuitbar$ will also contain some additional variables that are needed to simulate the original gates. For any perturbation $\sigma$ of $\circuit$, we let $x_1^\sigma, \dots, x_N^\sigma$ denote the variables of circuit $\circuit^\sigma$. We define $\xbar_1^\pi, \dots, \xbar_N^\pi$ similarly for $\circuitbar^\pi$. Finally, note that gates of the form $\xbar_i := a\xbar_j+b\xbar_k+c$ are not perturbed, while gates of the form $\xbar_i := \trunc_{[a,b]}(\xbar_j)$ are perturbed as $\xbar_i^\pi := \trunc_{[a,b]}(\xbar_j^\pi + \pi_i)$.

We can now check that the new circuit $\circuitbar$ satisfies, for any $\delta \leq 1$: given any $\delta$-perturbation $\pi$ of $\circuitbar$, there exists a $\delta$-perturbation $\sigma$ of $\circuit$ such that $x_i^\sigma = \xbar_i^\pi$ whenever the input to the circuit lies in $[0,1]^n$. Indeed, this can be shown by induction by using the following observations
\begin{itemize}
    \item The simulation of a gate of type $+$, $-$, $c$, or $\times c$ does not introduce any gates that are perturbed.
    \item The simulation of a gate of type $x_i := \trunc_{[a,b]}(a_ix_j + b_ix_k+c_i)$ yields that $\xbar_i^\pi = \trunc_{[a,b]}(a_i\xbar_j^\pi + b_i\xbar_k^\pi+c_i + \pi_i)$, and thus $\xbar_i^\pi = x_i^\sigma$, if we set $\sigma_i := \pi_i$.
    \item The simulation of a gate of type $x_i := \min \{x_j,x_k\}$ yields that $\xbar_i^\pi = \xbar_j^\pi - \trunc_{[0,3B]}(\xbar_j^\pi-\xbar_k^\pi + \pi_i)$. We claim that this implies $\xbar_i^\pi = \min \{\xbar_j^\pi, \xbar_k^\pi - \pi_i\}$, and thus $\xbar_i^\pi = \min \{x_j^\sigma,x_k^\sigma + \sigma_i\} = x_i^\sigma$ by setting $\sigma_i := -\pi_i$. Indeed, if $\xbar_j^\pi \leq \xbar_k^\pi - \pi_i$, then $\xbar_i^\pi = \xbar_j^\pi - \trunc_{[0,3B]}(\xbar_j^\pi-\xbar_k^\pi + \pi_i) = \xbar_j^\pi$. On the other hand, if $\xbar_j^\pi \geq \xbar_k^\pi - \pi_i$, then $\xbar_i^\pi = \xbar_j^\pi - \trunc_{[0,3B]}(\xbar_j^\pi-\xbar_k^\pi + \pi_i) = \xbar_j^\pi - (\xbar_j^\pi-\xbar_k^\pi + \pi_i) = \xbar_k^\pi - \pi_i$
    where we used the fact that $\xbar_j^\pi-\xbar_k^\pi + \pi_i = x_j^\sigma-x_k^\sigma + \pi_i \leq 2B + \pi_i \leq 3B$, by construction of~$B$.
    \item Similarly, the simulation of a gate of type $x_i := \max \{x_j,x_k\}$ yields that $\xbar_i^\pi = - \min \{- x_j^\sigma, - x_k^\sigma - \pi_i\} = \max \{x_j^\sigma, x_k^\sigma + \pi_i\} = \max \{x_j^\sigma, x_k^\sigma + \sigma_i\} = x_i^\sigma$ by setting $\sigma_i := \pi_i$.
\end{itemize}

\paragraph{\bf Construction.}
Consider now a circuit $\circuit$ that satisfies the conditions of the lemma, but only consists of $x_i := ax_j+bx_k+c$ and $x_i := \trunc_{[a,b]}(x_j)$ gates. It remains to show that we can construct a circuit $\circuitbar$ that only uses $\trunc_{[0,1]}$ gates and satisfies the lemma. We let $\trunc := \trunc_{[0,1]}$.

We begin by computing in polynomial time a bound $B \geq 2$ such that for any $i \in [N]$, $B$ satisfies
$$\max_{\sigma \in [-1,1]^m} \max_{y \in [0,1]^n} f_j^\sigma(y) \in [-B,B].$$
We also make sure that we have $B \geq a + b + 1$ for every gate $x_i := ax_j+bx_k+c$, and $B \geq b-a$ for every gate $x_i := \trunc_{[a,b]}(x_j)$.

Since all gates in $\circuitbar$ can only output values in $[0,1]$, we will encode values of the original circuit, which lie in $[-B,B]$, inside the interval $[1/4,3/4]$. A value $x \in [-B,B]$ can be encoded inside $[1/4,3/4]$ by using the map
$$\phi: \mathbb{R} \to \mathbb{R}, \quad x \mapsto \frac{1}{2} + \frac{x}{4B}.$$
Note that this transformation maps $[-2B,2B]$ to $[0,1]$. A value $x \in [1/4,3/4]$ can be decoded to $[-B,B]$ by using the map
$$\phi^{-1}: \mathbb{R} \to \mathbb{R}, \quad x \mapsto 4B(x-1/2).$$
The circuit $\circuitbar$ is constructed as follows:
\begin{itemize}
    \item For every input variable $x_i$ of $\circuit$, let $\zbar_i$ denote the corresponding input variable of $\circuitbar$. We compute the variable $\xbar_i$ of $\circuitbar$ by encoding $\zbar_i$ into $[1/4,3/4]$, i.e., by letting $\xbar_i := \trunc(\phi(\zbar_i)) = \trunc(1/2 - \zbar_i/4B)$.
    \item Every gate of the type $x_i := ax_j + bx_k + c$ is replaced by the gate $\xbar_i := \trunc(\phi(a\phi^{-1}(\xbar_j) + b\phi^{-1}(\xbar_k) + c))$. This corresponds to decoding $\xbar_j$ and $\xbar_k$, applying the original gate operation, and then encoding again. Note that the expression inside the truncation operator is affine linear, so the gate is well-defined.
    \item Every gate of the type $x_i := \trunc_{[a,b]}(x_j)$ is replaced by the gate $\zbar_i := \trunc(\psi(\phi^{-1}(\xbar_j)))$ and the gate $\xbar_i := \trunc(\phi(\psi^{-1}(\zbar_i)))$, where $\psi$ is a linear transformation mapping $[a,b]$ to $[0,1]$, i.e.,
    $$\psi: \mathbb{R} \to \mathbb{R}, \quad x \mapsto \frac{x-a}{b-a}.$$
    The first expression corresponds to decoding $\xbar_j$, then using a linear map that maps $[a,b]$ to $[0,1]$, and then truncating to $[0,1]$. The second expression corresponds to taking the inverse linear map from $[0,1]$ to $[a,b]$, and then encoding again.
    \item Special case: for the output gate of $\circuit$, which by assumption is of type $x_N := \trunc_{[0,1]}(x_j)$, we just replace it by $\xbar_N := \trunc(\phi^{-1}(\xbar_j))$. This corresponds to the same construction as above, except that (i) $a = 0$ and $b = 1$, and (ii) we leave the output un-encoded.
\end{itemize}

\paragraph{\bf Correctness.}
Let $K := 4B^N$ and consider any $\delta \leq 1/K$. Let $\pi$ be any perturbation vector for $\circuitbar$ with $|\pi_i| \leq \delta$ for all $i$. We will show that there exists a perturbation vector $\sigma$ for $\circuit$ with $|\sigma_i| \leq \delta K$ such that $f^\sigma(y) = \fbar^\pi(y)$ for all $y \in [0,1]^n$.

We prove below by induction that we can construct a $\delta K$-perturbation $\sigma$ of $\circuit$ such that for all $i \in [N-1]$ there exists some $\eps_i$ with $|\eps_i| \leq 4 B^i \delta$ such that
\begin{equation}\label{eq:perturbation-induction}
\xbar_i^\pi = \phi(x_i^\sigma + \eps_i)
\end{equation}
whenever the input to the circuit lies in $[0,1]^n$.

From this we then obtain the desired statement as follows. Recall that the last gate of $\circuit$ is $x_N := \trunc(x_j)$ and that it is replaced by $\xbar_N := \trunc(\phi^{-1}(\xbar_j))$ in $\circuitbar$. As a result, we have that $\xbar_N^\pi = \trunc(\phi^{-1}(\xbar_j^\pi) + \pi(\xbar_N))$ where $\pi(\xbar_i)$ denotes the entry of $\pi$ which is used to perturb the gate computing $\xbar_i$. Using the fact that $\xbar_j^\pi = \phi(x_j^\sigma + \eps_j)$ by \eqref{eq:perturbation-induction}, we obtain
$$\xbar_N^\pi = \trunc(x_j^\sigma + \eps_j + \pi(\xbar_N)) = \trunc(x_j^\sigma + \sigma(x_N)) = x_N^\sigma$$
by setting $\sigma(x_N) := \eps_j + \pi(\xbar_N)$, which satisfies $|\sigma(x_N)| \leq 4B^{N-1}\delta + \delta \leq 4B^N\delta = \delta K$. As a result we have that $f^\sigma(y) = x_N^\sigma = \xbar_N^\pi = \fbar^\pi(y)$ for all $y \in [0,1]^n$, as desired.

It remains to prove \eqref{eq:perturbation-induction}. It is easy to see that the statement holds for the input variables, i.e., for $i \in [n]$. Indeed, by definition we have $x_i^\sigma = \zbar_i^\pi$ for all input variables, i.e., for all $i \in [n]$. Thus, we can write
$$\xbar_i^\pi = \trunc(\phi(\zbar_i^\pi) + \pi(\xbar_i)) = \trunc(\phi(x_i^\sigma + 4B\pi(\xbar_i))) = \trunc(\phi(x_i^\sigma + \eps_i)) = \phi(x_i^\sigma + \eps_i)$$
by setting $\eps_i := 4B\pi(\xbar_i)$, which satisfies $|\eps_i| \leq 4B \delta \leq 4B^i \delta$, as desired. We also used the fact that $\phi(x_i^\sigma + \eps_i) \in [0,1]$, since $x_i^\sigma + \eps_i \in [-2B,2B]$, because $x_i^\sigma \in [-B,B]$ by construction of $B$, and $|\eps_i| \leq 1 \leq B$ since $\delta \leq 1/K = 1/4B^N$.

Now consider any $i \in [N-1]$ with $i > n$, and assume that the induction hypothesis holds for smaller values of $i$. We handle the following two cases separately:
\begin{itemize}
    \item If the $i$th gate is of the type $x_i := ax_j + bx_k + c$, then by construction we have that
    \begin{equation*}
    \begin{split}
    \xbar_i^\pi &= \trunc(\phi(a\phi^{-1}(\xbar_j^\pi) + b\phi^{-1}(\xbar_k^\pi) + c) + \pi(\xbar_i))\\
    &= \trunc(\phi(ax_j^\sigma + a\eps_j + bx_k^\sigma + b\eps_k + c) + \pi(\xbar_i))\\
    &= \trunc(\phi(x_i^\sigma + a\eps_j + b\eps_k + 4B\pi(\xbar_i))\\
    &= \trunc(\phi(x_i^\sigma + \eps_i)) = \phi(x_i^\sigma + \eps_i)
    \end{split}
    \end{equation*}
    where we set $\eps_i := a\eps_j + b\eps_k + 4B\pi(\xbar_i)$, which satisfies $|\eps_i| \leq (a+b)4B^{i-1}\delta + 4B \delta \leq 4B^i\delta$ by construction of $B$, and where we also used the induction hypothesis, namely $\xbar_j^\pi = \phi(x_j^\sigma + \eps_j)$ and $\xbar_k^\pi = \phi(x_k^\sigma + \eps_k)$. We also used the fact that $\phi(x_i^\sigma + \eps_i) \in [0,1]$, because $x_i^\sigma + \eps_i \in [-2B,2B]$ as before.
    \item If the $i$th gate is of the type $x_i := \trunc_{[a,b]}(x_j)$, then by construction we have that
    \begin{equation*}
    \begin{split}
    \zbar_i^\pi = \trunc(\psi(\phi^{-1}(\xbar_j^\pi)) + \pi(\zbar_i)) &= \trunc(\psi(\phi^{-1}(\xbar_j^\pi) + (b-a)\pi(\zbar_i)))\\
    &= \psi(\trunc_{[a,b]}(\phi^{-1}(\xbar_j^\pi) + (b-a)\pi(\zbar_i)))\\
    &= \psi(\trunc_{[a,b]}(x_j^\sigma + \eps_j + (b-a)\pi(\zbar_i)))\\
    &= \psi(\trunc_{[a,b]}(x_j^\sigma + \sigma(x_i))) = \psi(x_i^\sigma)
    \end{split}
    \end{equation*}
    where we set $\sigma(x_i) := \eps_j + (b-a)\pi(\zbar_i)$, which satisfies $|\sigma(x_i)| \leq 4B^{i-1}\delta + B\delta \leq 4B^i \delta \leq \delta K$, and where we also used the induction hypothesis $\xbar_j^\pi = \phi(x_j^\sigma + \eps_j)$. Thus, it follows that
    \begin{equation*}
    \begin{split}
    \xbar_i^\pi = \trunc(\phi(\psi^{-1}(\zbar_i^\pi)) + \pi(\xbar_i)) &= \trunc(\phi(x_i^\sigma) + \pi(\xbar_i))\\
    &= \trunc(\phi(x_i^\sigma + 4B\pi(\xbar_i)))\\
    &= \trunc(\phi(x_i^\sigma + \eps_i)) = \phi(x_i^\sigma + \eps_i)
    \end{split}
    \end{equation*}
    where we set $\eps_i := 4B\pi(\xbar_i)$, which satisfies $|\eps_i| \leq 4B \delta \leq 4B^i \delta$, and where we also used the fact that $\phi(x_i^\sigma + \eps_i) \in [0,1]$, since $x_i^\sigma + \eps_i \in [-2B,2B]$ as before.
\end{itemize}
Thus, \eqref{eq:perturbation-induction} follows by induction, and the proof of \cref{lem:perturbation} is complete.

\bigskip
\subsubsection*{Acknowledgments}
We thank the anonymous reviewers for comments and suggestions that helped improve the presentation of the paper.
J.F.\ and R.S.\ were supported by EPSRC Grant EP/W014750/1.
P.W.G.\ was supported by EPSRC Grant EP/X040461/1.
A.H.\ was supported by the Swiss State Secretariat for Education, Research and Innovation (SERI) under contract number MB22.00026.

\bibliographystyle{alphaurl}
\bibliography{references}

\end{document}